\newtheorem{theorem}{Theorem} 
\newtheorem{conjecture}[theorem]{Conjecture}
\newtheorem{lemma}{Lemma}
\newtheorem{definition}{Definition}
\newtheorem{observation}{Observation}
\newtheorem{corollary}{Corollary}
\begin{document}

\title[Pareto Sums of Pareto Sets]{Pareto Sums of Pareto Sets: Lower Bounds and Algorithms}

\author[1]{\fnm{Daniel} \sur{Funke}}\email{daniel.funke@kit.edu}

\author[1]{\fnm{Demian} \sur{Hespe}}\email{demian.hespe@outlook.com}

\author[1]{\fnm{Peter} \sur{Sanders}}\email{sanders@kit.edu}

\author[2]{\fnm{Sabine} \sur{Storandt}}\email{sabine.storandt@uni-konstanz.de}

\author*[2]{\fnm{Carina} \sur{Truschel}}\email{carina.truschel@uni-konstanz.de}

\affil[1]{\orgname{KIT}, \orgaddress{\street{Am Fasanengarten}, \city{Karlsruhe}, \postcode{76128}, \country{Germany}}}

\affil[2]{\orgname{University of Konstanz}, \orgaddress{\street{Universit\"atsstraße}, \city{Konstanz}, \postcode{78464}, \country{Germany}}}

\abstract{In  bi-criteria optimization problems, the goal is  typically to compute the set of Pareto-optimal solutions. Many algorithms for these types of problems rely on efficient merging or combining of partial solutions and filtering of dominated solutions in the resulting sets. In this article, we  consider the  task of computing the Pareto sum of two given Pareto sets $A, B$ of size $n$. The Pareto sum $C$   contains all non-dominated points of the Minkowski sum $M = \{a+b|a \in A, b\in B\}$. Since the Minkowski sum has a size of $n^2$, but the Pareto sum $C$ can be much smaller, the goal is to compute $C$ without having to compute and store all of $M$. We present several new algorithms for efficient Pareto sum computation, including an output-sensitive successive algorithm with a running time of $\mathcal{O}(n \log n + nk)$ and a space consumption of $\mathcal{O}(n+k)$ for $k=|C|$. If the elements of $C$ are streamed, the space consumption reduces to $\mathcal{O}(n)$.
For output sizes $k \geq 2n$, we prove a conditional lower bound for Pareto sum computation, which excludes running times in $\mathcal{O}(n^{2-\delta})$ for $\delta > 0$ unless the (min,+)-convolution hardness conjecture fails. The successive algorithm matches this lower bound for $k \in \Theta(n)$. However, for $k \in \Theta(n^2)$, the successive algorithm exhibits a cubic running time.
But we also present an algorithm with an output-sensitive space consumption and a running time of $\mathcal{O}(n^2 \log n)$, which matches the lower bound up to a logarithmic factor even for large $k$. Furthermore, we describe suitable engineering techniques to improve the practical running times of our algorithms. Finally, we provide an extensive comparative experimental study on generated and real-world data. As a showcase application, we consider preprocessing-based  bi-criteria route planning in road networks.  Pareto sum computation is the bottleneck task in the preprocessing phase and in the query phase. We show that  using  our algorithms with an output-sensitive space consumption allows to tackle larger instances and   reduces the preprocessing and query time   compared to algorithms that fully store $M$.}

\maketitle

\section{Introduction}\label{sec1}
Solving multi-objective combinatorial optimization problems demands to find the set of non-dominated solutions, also referred to as skyline, Pareto frontier or Pareto set. To solve problem instances of substantial size, solution approaches often rely on efficient  combination and filtering of partial solutions. 
In particular, non-dominance filtering of unions or Minkowski sums of 
intermediate Pareto sets occur as a frequent subtasks in optimization algorithms. Examples include decomposition approaches for multi-objective integer programming \cite{schulze2019multi}, dynamic programming methods for multi-objective knapsack \cite{ehrgott2000survey}, bi-directional search algorithms for multi-criteria shortest path problems \cite{artigues2013state}, or Pareto local search for multi-objective set cover \cite{lust2014variable}.

\begin{figure}
    \centering
    \includegraphics[height=6cm]{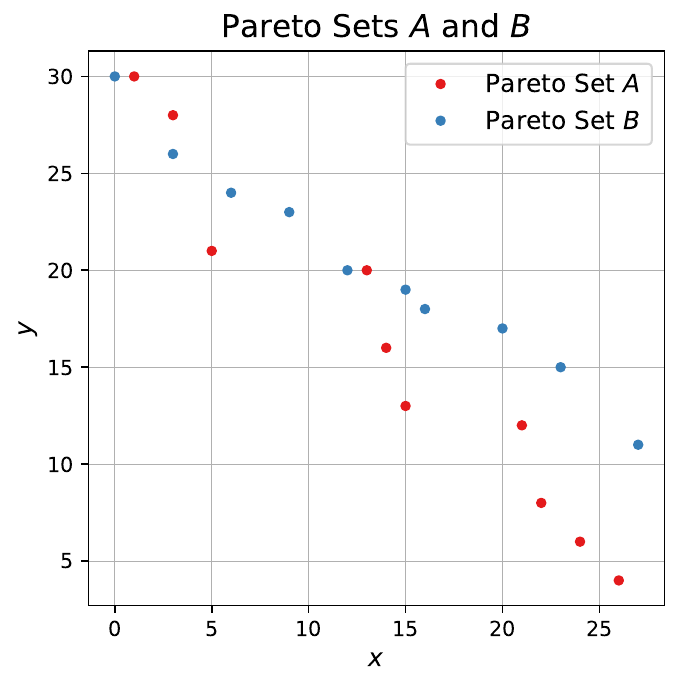}\hfill
     \includegraphics[height=6cm]{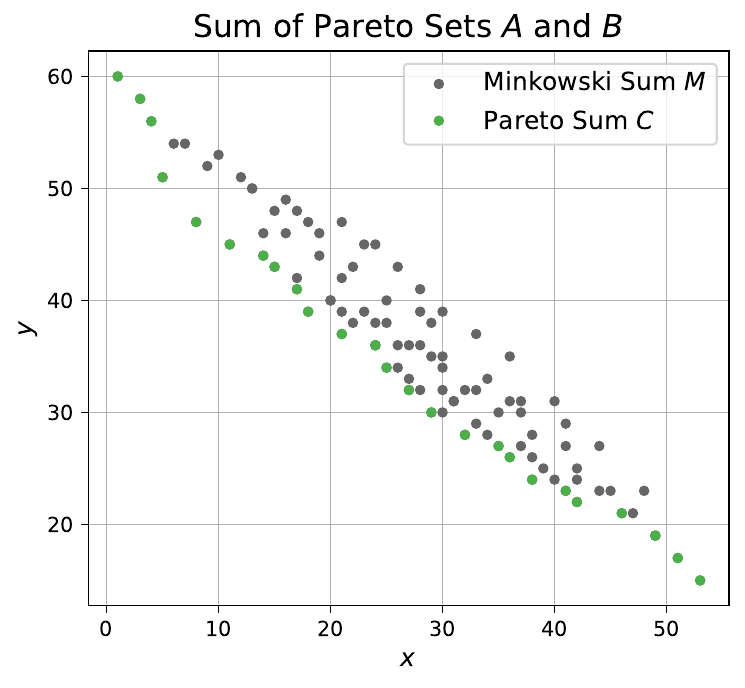}
     \footnotesize
    \begin{tabular}{lcccccccccc}
                         & $A_1$    & $A_2$                           & $A_3$    & $A_4$    & $A_5$                           & $A_6$    & $A_7$                           & $A_8$                           & $A_9$ & $A_{10}$ \\
$B_1$ & {\color[HTML]{049600}\ 1, 60} & {\color[HTML]{049600}\ 3, 58}  & {\color[HTML]{049600}\ 5, 51}  & 13, 50 & 14, 46 & {\color[HTML]{049600} 15, 43} & 21, 42 & 22, 38                        & {\color[HTML]{049600} 24, 36} & 26, 34                        \\
$B_2$ & {\color[HTML]{049600}\ 4, 56} &\ 6, 54  & {\color[HTML]{049600}\ 8, 47}  & 16, 46 & 17, 42 & {\color[HTML]{049600} 18, 39} & 24, 38 & {\color[HTML]{049600} 25, 34} & {\color[HTML]{049600} 27, 32} & {\color[HTML]{049600} 29, 30} \\
$B_3$ &\ 7, 54                        &\ 9, 52  & {\color[HTML]{049600} 11, 45} & 19, 44 & 20, 40 & {\color[HTML]{049600} 21, 37} & 27, 36 & 28, 32                        & 30, 30                        & {\color[HTML]{049600} 32, 28} \\
$B_4$ & 10, 53                        & 12, 51 & {\color[HTML]{049600} 14, 44} & 22, 43 & 23, 39 & {\color[HTML]{049600} 24, 36} & 30, 35 & 31, 31                        & 33, 29                        & {\color[HTML]{049600} 35, 27} \\
$B_5$ & 13, 50                       & 15, 48 & {\color[HTML]{049600} 17, 41} & 25, 40 & 26, 36 & 27, 33                        & 33, 32 & 34, 28                        & {\color[HTML]{049600} 36, 26} & {\color[HTML]{049600} 38, 24} \\
$B_6$ & 16, 49                       & 18, 47 & 20, 40                        & 28, 39 & 29, 35 & 30, 32                        & 36, 31 & 37, 27                        & 39, 25                        & {\color[HTML]{049600} 41, 23} \\
$B_7$ & 17, 48                       & 19, 46 & 21, 39                        & 29, 38 & 30, 34 & 31, 31                        & 37, 30 & 38, 26                        & 40, 24                        & {\color[HTML]{049600} 42, 22} \\
$B_8$ & 21, 47                       & 23, 45 & 25, 38                        & 33, 37 & 34, 33 & 35, 30                        & 41, 29 & 42, 35                        & 44, 23                        & {\color[HTML]{049600} 46, 21} \\
$B_9$ & 24, 45                       & 26, 43 & 28, 36                        & 36, 35 & 37, 31 & 38, 28                        & 44, 27 & 45, 23                        & 47, 21                        & {\color[HTML]{049600} 49, 19} \\
$B_{10}$ & 28, 41                       & 30, 39 & 32, 32                        & 40, 31 & 41, 27 & 42, 24                        & 48, 23 & {\color[HTML]{049600} 49, 19} & {\color[HTML]{049600} 51, 17} & {\color[HTML]{049600} 53, 15}
\end{tabular}

    \caption{Example instance with input Pareto sets $A, B$ of size 10. The Minkowski sum has 100 elements. The Pareto sum $C$ consists of 27 elements (marked green in the plot as well as in the matrix representation).}
    \label{fig:teaser}
\end{figure}
In this article, we focus on the efficient computation of the filtered Minkowski sum of   two-dimensional Pareto sets $A, B$. The Minkowski sum $M$ is defined as the set of elements derived from pairwise addition of elements in $A$ and $B$. However, the Minkowski sum often contains many dominated elements, see Figure \ref{fig:teaser} for an example. Indeed, it was proven in \cite{klamroth2022efficient} that for $A,B$ of size $n$, the set of non-dominated elements in $M$ -- which we refer to as Pareto sum of $A, B$ -- might have a size in $o(n)$. Thus, algorithms that first compute all elements of $M$ and subsequently apply non-dominance filtering might be unnecessarily wasteful as they come with a running time and space consumption in $\Omega(n^2)$. Our goal is to design practical algorithms for Pareto sum computation with output-sensitive space consumption, and to evaluate their performance on realistic inputs. 

\begin{figure}
    \centering
    \includegraphics[width=\textwidth]{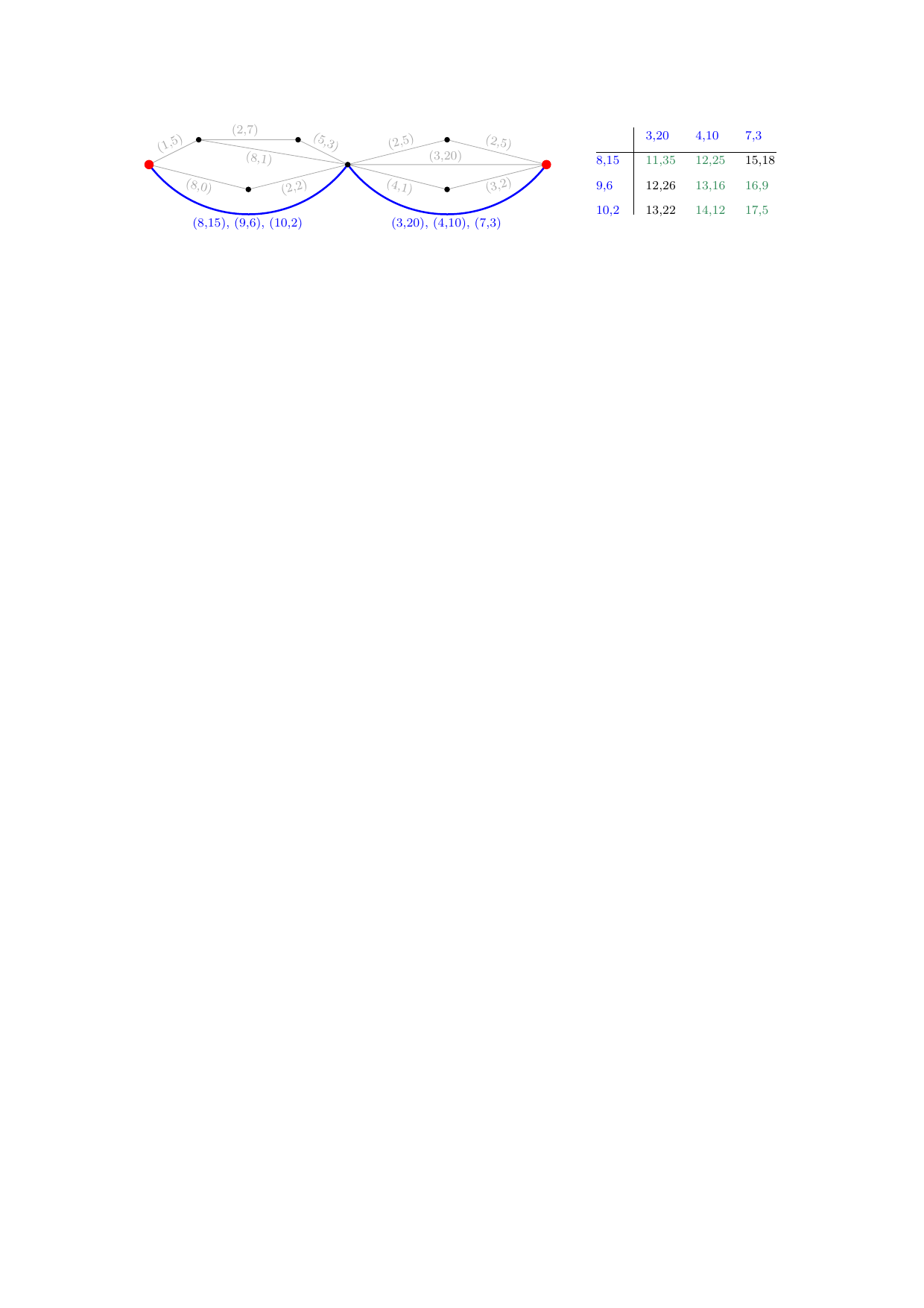}
    \caption{Left: Example road network (gray edges) with bi-criteria edge costs. The blue edges indicate shortcut edges. Each shortcut has a set of cost vectors that encodes the aggregated costs of all Pareto-optimal paths between its endpoints. Right: Inserting a new shortcut between the two red points requires to compute the Pareto sum of the two cost vector sets of the blue shortcuts. In this example, the Minkowski sum contains 9 elements out of which 6 form the Pareto sum (marked green in the table). Note that same computation is necessary in a bi-directional bi-criteria path search from the two red points at the moment at which the node in the middle is visited from both directions. }
    \label{fig:bicritroute}
\end{figure}
As one particular use case of our methods, we will consider the bi-criteria route planning problem in road networks. There exists a plethora of algorithms to compute the set of  Pareto-optimal paths between a given source and a target node in network where edges have two costs, see e.g. \cite{erb2014parallel,ahmadi2021bi}. The currently fastest methods rely on preprocessing.  In particular, variants of contraction hierarchies (CH) have been proven to be very useful in this context \cite{storandt2012route,zhang2023efficient}.  In a CH, the input graph is augmented with so called shortcut edges that represent sets of Pareto-optimal paths between their end points. The shortcuts store the costs of these paths in the form of  Pareto sets. On query time, shortcuts are used to decrease the search space size of a Pareto-Dijkstra run, resulting in significantly faster query times and reduced space consumption. In the preprocessing phase, the  shortcuts are inserted incrementally. The base operation is to concatenate two shortcut or original edges $e=\{u,v\}$ and $e'=\{v,w\}$  to form a new shortcut $\{u,w\}$. The Pareto set of the new shortcut is the set of non-dominated elements in the Minkowski sum of the Pareto sets corresponding to $e$ and $e'$, see Figure \ref{fig:bicritroute}. Thus, the preprocessing time crucially depends on an efficient Pareto sum computation. In \cite{funke2015personalized}, it was discussed that  computing the Minkowski sum and  filtering all dominated elements in a naive fashion is too time-consuming. Therefore, filtering strategies were proposed that prune dominated elements. However, these strategies are not guaranteed to retrieve the Pareto sum but usually produce a superset thereof. Keeping supersets slows down later stages of the preprocessing as well as query answering.

In this article, we propose novel algorithms that allow for fast and exact Pareto sum computation.

\section{Related Work}
Non-dominance filtering in point sets is a well-studied task in computational geometry  \cite{chen2012maxima}, also referred to as skyline or maxima computation. 
There  exist  output-sensitive algorithms for the two-dimensional case as e.g. the one proposed by Kirkpatrick and Seidel \cite{kirkpatrick1985output} with a running time of $\mathcal{O}(N \log k)$ where $N$ denotes the size of the point set and $k$ the size of the skyline.  The basic idea is to first partition the input points into $k$ sets of size $\approx N/k$ with non-overlapping ranges with respect to their $x$-coordinates. Then, the sets are processed individually in sorted order. As $k$ is typically not known beforehand, a more intricate version of the algorithm allows to achieve the same asymptotic running time by starting with a coarse partition  and  refining it on demand as soon as a certain number of non-dominated points are identified. With a worst-case running time of $\mathcal{O}(N \log N)$ and close-to-linear running time for small $k$, this algorithm seems to be well-suited for Pareto sum computation. However, in our application we have $N=|M|$ where $M$ is the Minkowski sum of the input sets $A,B$; and any approach that relies on materializing  $M$ is bound to a running time and space consumption in $\Omega(n^2)$. Using the interpretation of input elements as two-dimensional points, Pareto sum computation can also be reduced to computing the Minkowski sum of  the  orthogonal hulls of  $A$ and $B$ (where both sets are augmented with a dummy point based on the maximum coordinate values in the respective set). The Minkowski sum of two convex polygons can be computed in linear time \cite{mark2008computational}. For non-convex inputs, the polygons are first decomposed into convex subpolygons  and then  the linear time algorithm is applied to all pairs of subpolygons. Finally, the union of all partial results is computed. The running time  depends on the applied decomposition technique and the number and complexity of the resulting subpolygons \cite{agarwal2002polygon}. Considering orthogonal convex hulls of size $n$, their convex decomposition cannot contain fewer than $n$ subpolygons, and thus the approach needs to compute the  union of   $\Theta(n^2)$   partial solutions.

Recently, new algorithms for Pareto sum computation with the potential to achieve  subquadratic running time and space consumption were proposed in \cite{klamroth2022efficient}. The so called NonDomDC algorithm exploits the structure of the matrix that represents the Minkowski sum (see Figure \ref{fig:teaser}). In particular, it makes use of the fact that columns in the matrix are Pareto sets themselves. Assuming the Pareto sum $P_i$ of elements occurring in the first $i$ columns is known, $P_{i+1}$ can be computed by merging $P_i$ and column $i+1$ and pruning dominated elements in $\mathcal{O}(|P_i| + n)$ time. Thus, with $P := \max_{i=1}^n |P_i|$ denoting the maximum size of an intermediate solution, the total running time  is in $\mathcal{O}(Pn)$ and the space consumption is in $\mathcal{O}(n+P)$. However, this does not constitute an output-sensitive algorithm as the size of the intermediate Pareto sum can be significantly larger than the final result size. So even for small $k$, the algorithm might have  cubic running time and  quadratic space consumption.  However, their experimental study demonstrates good performance in practice. Similar methods, as the box-based method proposed in  \cite{kerberenes2022computing}, were shown to be outperformed.

There is also work on the heuristic dominance pruning of the Minkowski sum or general point sets, where the goal is to reduce the set size quickly but the resulting set might still be a superset of the underlying Pareto set.
Shu et al. \cite{shu2023two} presented a two-phase algorithm for efficient dominance pruning in large sets, with the goal to make evolutionary multi-objective optimization more scalable. Funke et al. \cite{funke2015personalized} presented several Minkowski sum pruning methods to make personalized route planning with two- to ten-dimensional cost vectors viable.
The higher-dimensional exact version of the problem was studied in \cite{preparata2012computational,gomes2018boosting, klamroth2022efficient}. The respective methods apply the divide \& conquer paradigm to the elements in the Minkowski sum as well as to the dimension. Thus, they also rely on fast methods for the two-dimensional case.

\section{Contribution}
\begin{table}[b]
    \centering
        \caption{Running time and space consumption of different algorithms for Pareto sum computation. The input size is denoted by $n$ and the output size by $k$.}
    \label{tab:overview}
    \begin{tabular}{|ll|ll|l|}
    \hline
     algorithm  &  & running  time & space consumption &\\
      \hline    
      NonDomDC                 & (ND)  & $\mathcal{O}(n^3)$      & $\mathcal{O}(n^2)$ & \cite{klamroth2022efficient}\\
    Kirkpatrick-Seidel & (KS)    & $\mathcal{O}(n^2 \log k)$ & $\Theta(n^2)$ & \cite{kirkpatrick1985output}\\
        \hline 
    Brute Force &(BF)      & $\mathcal{O}(n^4)$  & $\mathcal{O}(n+k)$  & \ref{sec:BF}\\
    Binary Search &(BS) & $\mathcal{O}(n^3 \log n)$  & $\mathcal{O}(n+k)$ & \ref{sec:BS}\\
    Sort \& Compare &(SC) & $\mathcal{O}(n^2 \log n)$  & $\mathcal{O}(n+k)$ & \ref{sec:SC}\\ 
    Successive Binary Search &(SBS) & $\mathcal{O}(nk \log n)$  & $\mathcal{O}(n+k)$ & \ref{sec:CBS}\\ 
    Successive Sweep  Search &(SSS) & $\mathcal{O}(n \log n + nk)$  & $\mathcal{O}(n+k)$ & \ref{sec:CSS}\\ 
    \hline
    \end{tabular}
\end{table}
In this article, we consider the problem of  efficient Pareto sum computation from a theoretical and practical perspective.
Our main results are as follows:
\begin{itemize}
    \item We first investigate the fine-grained complexity of Pareto sum computation. We show that for $A,B$ of size $n$, the existence of  an algorithm that computes their Pareto sum in $\mathcal{O}(n^{2-\delta})$ for $\delta > 0$ contradicts the (min,+)-convolution hardness conjecture. Notably, this result applies to instances where the output size $k$ is linear in $n$.
    \item For the special case where $A$ and $B$ are sorted convex point sets, we devise an algorithm that computes their Pareto sum in linear time. For general inputs, the algorithm identifies a part of the Pareto sum and thus can be used as an efficient  preprocessing step for other Pareto sum computation algorithms.
    \item We present and thoroughly analyze several algorithms for exact Pareto sum  computation with a  focus on achieving an output-sensitive space consumption.  Table \ref{tab:overview} provides an overview of the characteristics of our algorithms as well as existing baseline approaches. For output sizes $k \in o(n)$, our newly proposed Successive Sweep Search (SSS) algorithm achieves a subquadratic running time. This shows that the proven fine-grained complexity result does not extend to the case where the output size is sublinear.  However, for $k \in \Theta(n^2)$, SSS has a  cubic running time. Our Sort \& Compare algorithm, on the other hand,  achieves a close to quadratic running time on all instances,  independent of $k$.
    \item To accelerate the proposed algorithms in practice, we devise suitable data structures for their implementation and present multiple engineering techniques.
    \item In an extensive experimental study, we compare the scalability of all presented algorithms. We consider  generated data  sets as well as  real inputs that stem from bi-criteria route planning instances. We observe that our novel algorithms, and in particular the engineered variants, are faster than existing approaches by up to two orders of magnitude. Moreover, in line with our theoretical analysis, we show that the output size  impacts  which algorithm works best.
\end{itemize}

\section{Formal Definitions}
In this section, we formally define the notion of a Pareto sum and provide notation used throughout the article.

\begin{definition}[Domination]
Given two points $p,p' \in \mathbb{R}^2$, we say that $p$ dominates $p' $, or $p \prec p'$, if $p \neq p'$ and $p.x \leq p'.x$ as well as $p.y \leq p'.y$.
\end{definition}
As input for Pareto sum computation, we always expect two Pareto sets.

\begin{definition}[Pareto set]
A set  $S \subset \mathbb{R}^2$ is a Pareto set if no point in $S$ dominates another point in $S$, that is $\nexists s,s' \in S$ with $s \prec s'$.
\end{definition}
We   assume that Pareto sets are sorted in lexicographic order. We  use $S_i$ to refer to the element with rank $i$ in set $S$. 

\begin{definition}[Minkowski sum]
Given two Pareto sets $A, B \subset \mathbb{R}^2$, their Minkowski sum   $M = A \oplus B$ is defined as $M:= \{a+b|~a \in A, b \in B\}$.
\end{definition}
In a slight abuse of notation, we will use $M$ to refer to the set of elements in the Minkowski sum as well as the  matrix where $M_{ij}=A_i+B_j$.

\begin{definition}[Pareto sum]
Let $A, B \subset \mathbb{R}^2$ be two Pareto sets of size $n$ and let  $M = A \oplus B$ denote their Minkowski sum. Then the Pareto sum $C$ of  $A,B$ is defined as the set of all non-dominated points in $M$.
\end{definition}
Throughout the paper, we will use $k$ to denote the size of $C$. 
 
\section{Conditional Lower Bound}
In this section, we investigate the fine-grained complexity of Pareto sum computation. We prove that  an algorithm for Pareto sum computation with a running time of $\mathcal{O}(n^{2-\delta})$ for $\delta > 0$ would disprove the prevailing hardness conjecture that the (min,+)-convolution problem admits no truly subquadratic algorithm \cite{cygan2019problems}. 

In the (min,+)-convolution problem, the input are two arrays  $a[0..n-1], b[0..n-1]$  and the goal is to compute the output array $c[0..n-1]$ with $c[x] := \min_{i+j = x} a[i]+b[j]$. In the following theorem, we  show that (min,+)-convolution can be reduced to Pareto sum in linear time. Thus, any algorithm with a subquadratic running time for Pareto sum can be used to also solve (min,+)-convolution in subquadratic time.

\begin{theorem}\label{thm:reduce}
If Pareto sum can be solved in time $T(n)$ then (min,+)-convolution can be solved in time $\mathcal{O}(T(n)).$
\end{theorem}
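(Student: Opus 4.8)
The plan is to give a linear-time reduction that turns a $(\min,+)$-convolution instance $a[0..n-1], b[0..n-1]$ into two Pareto sets $A,B$ of size $n$ whose Pareto sum already spells out the whole output array $c$. After a harmless normalization (shift both arrays so all entries are non-negative integers, and let $U$ bound all entries, so that $c[x]=\min_{i+j=x}a[i]+b[j]\in[0,2U]$), I would embed index $i$ as the point $A_i := (a[i]+i(2U+1),\,-i)$ and index $j$ as $B_j := (b[j]+j(2U+1),\,-j)$. The multiplier $2U+1$ is large enough to outweigh any value the entries can contribute, and the sign flip in the second coordinate is what stops points belonging to different sum-indices $i+j$ from shadowing one another.

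First I would verify that $A$ and $B$ are Pareto sets, and in fact already appear in lexicographic order: for $i<i'$ one has $A_{i'}.x-A_i.x=(a[i']-a[i])+(i'-i)(2U+1)\ge -U+(2U+1)>0$, so the first coordinates strictly increase while the second coordinates $-i$ strictly decrease; hence no point dominates another and no sorting is needed, so the construction runs in $\mathcal{O}(n)$ time.

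Next I would analyze $M=A\oplus B$. Since $A_i+B_j=(a[i]+b[j]+(i+j)(2U+1),\,-(i+j))$, the second coordinate of a Minkowski-sum point depends only on $x:=i+j$, and among all pairs with $i+j=x$ its first coordinate is smallest precisely when $a[i]+b[j]=c[x]$. Thus for each $x\in\{0,\dots,2n-2\}$ the only candidate for non-domination at ``level'' $-x$ is $q_x:=(c[x]+x(2U+1),\,-x)$, every other point of $M$ being dominated by the appropriate $q_x$. Finally the family $q_0,\dots,q_{2n-2}$ is a strictly decreasing staircase — increasing $x$ by one raises the first coordinate by at least $2U+1>c[x]$ and lowers the second — so no $q_x$ dominates another. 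Hence the Pareto sum is exactly $C=\{q_0,\dots,q_{2n-2}\}$ (of size $k=2n-1\in\Theta(n)$), and from each $(p,-x)\in C$ we recover $c[x]=p-x(2U+1)$ in constant time, so the full array $c[0..n-1]$ is read off in $\mathcal{O}(n)$. Altogether the reduction costs $\mathcal{O}(n)+T(n)+\mathcal{O}(n)=\mathcal{O}(T(n))$.

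The one genuinely delicate point — and the step I expect to demand the most care — is getting the embedding right: (i) the second coordinate must be reflected, since otherwise the Pareto sum collapses to a single point instead of keeping one representative per value of $i+j$; and (ii) the multiplier must be calibrated against the entry bound $U$ so that the ``value'' term never beats the ``index'' term, neither within $A$ (needed to keep $A$ a Pareto set) nor across distinct sum-indices (needed to keep all $q_x$ non-dominated). Once these inequalities are arranged, everything else is a routine check; it is also worth noting that, since $k=\Theta(n)$ in this construction, the resulting conditional lower bound already applies in the regime of linear output size announced in the introduction.
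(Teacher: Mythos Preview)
Your proof is correct and follows essentially the same strategy as the paper: encode the index with a large multiplier so that Minkowski points with different values of $i+j$ land on separate, non-interfering ``levels,'' and within each level the non-dominated point reveals $c[x]$. The only cosmetic difference is that the paper places the bare index in the $x$-coordinate and the value-plus-offset in the $y$-coordinate (taking $A_i=(i,\,a[i]+(n-i)U)$ with $U=\max a+\max b+1$), whereas you swap the roles of the two coordinates and negate the index; both encodings produce a Pareto sum of size $\Theta(n)$ and hence the same conditional lower bound.
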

\begin{proof}
Let $a,b$ be the input of a (min,+)-convolution instance. We construct Pareto sets $A, B$ from $a,b$ in linear time using the following transformation
\begin{align*}
    A &:= \{(i, a[i] + (n-i)U) : \forall i = 0,\dots, n-1\}\\
    B &:= \{(j, b[j] + (n-j)U) : \forall j = 0, \dots, n-1\}
\end{align*}
for $U := \max a + \max b + 1$. Let now $C$ denote the Pareto sum of $A, B$.  We infer the output of (min,+)-convolution by setting $c[x] := y-(2n-x)U$ for $(x, y)\in C$. Note that $|C| < 2n$ as the $x$-coordinates of the points in the Minkowski sum are integers between $0$ and $2n-2$.

To prove the correctness of this reduction, we need to show that for each $x = 0,\dots,n-1$, the Pareto sum $C$ contains the element $(x, y_x)$ where $y_x-(2n-x) = c[x]$ and where $(x, y_x)$ is not dominated by any other element in $C$.

Let $y_x^*:=\min\{a[i]+b[j] : i+j=x\}+(2n-x)U$. As $c[x] = \min_{i+j = x} a[i]+b[j]$ and for all $i, j$ with $i+j = x$ the additive term is the same, namely $(n-i)U+(n-j)U = (2n-x)U$, it follows that $y_x^* = c[x] + (2n-x)U$ is the smallest possible $y$-coordinate for any element of the form $(x, y_x)$ in the Minkowski sum of $A, B$.

It remains to prove that $(x,y_x^*)$ cannot be dominated by another element $(x', y')$ with $x' < x$. This element would have the form
$(x', a[i']+(n-i')U + b[j]+(n-j')U)$ with $i'+j'=x'$. We can then lower bound  $y'$ as follows:
\begin{align*}
y'&= a[i']+b[j'] + (2n-(i'+ j'))U  = a[i']+b[j'] + (2n-x')U\\
   &= a[i']+b[j'] + U + (2n-(x'+1))U  \geq	a[i']+b[j'] + U + (2n-x)U\\
   &\geq	U + (2n-x)U >	a[i]+b[j] + (2n-x)U\text{ for any }i+j=x
\end{align*}
The last inequality follows from $U$ being larger than the sum of any element pair from $a$ and $b$. Thus, $(x', y')$ cannot dominate $(x,y_x^*)$. We conclude that for each  $x$-coordinate $0,1,\dots,n-1$ the element $(x, c[x]+(2n-x)U)$ is contained in the Pareto sum $C$ of $A,B$.
\end{proof}
Based on the conjecture that there is no algorithm that solves (min,+)-convolution in time $\mathcal{O}(n^{2-\delta})$ for $\delta >0$ \cite{cygan2019problems} and the reduction described in Theorem \ref{thm:reduce}, we derive the following novel hardness conjecture.

\begin{conjecture}
There is no algorithm for Pareto sum for input Pareto sets $A, B$ of size $n$ each with a running time of $\mathcal{O}(n^{2-\delta})$ for $\delta >0$ for $k \geq 2n$.
\end{conjecture}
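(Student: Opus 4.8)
Since this statement is a conjecture, not a theorem, what there is to establish is the conditional implication it records: \emph{assuming} the (min,+)-convolution hardness conjecture of \cite{cygan2019problems}, no algorithm solves Pareto sum in time $\mathcal{O}(n^{2-\delta})$ for any fixed $\delta>0$. The plan is to deduce this from Theorem~\ref{thm:reduce} by contraposition. Suppose an algorithm $\mathcal{A}$ computes the Pareto sum of two size-$n$ Pareto sets in time $\mathcal{O}(n^{2-\delta})$. Given a (min,+)-convolution instance $a,b$ of length $n$, I would build $A,B$ by the linear-time transformation of Theorem~\ref{thm:reduce}, invoke $\mathcal{A}$ on $(A,B)$, and read off $c[x] = y-(2n-x)U$ for every $(x,y)\in C$ with $x \le n-1$; since $|C| < 2n$, this extraction costs $\mathcal{O}(n)$. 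The total time is $\mathcal{O}(n^{2-\delta})$, contradicting the assumed hardness of (min,+)-convolution; hence no such $\mathcal{A}$ exists and the conjecture follows.

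The only point requiring care is the output-size clause $k \ge 2n$. The instances produced directly by Theorem~\ref{thm:reduce} have $n \le k \le 2n-1$, so this chain of reasoning already excludes $\mathcal{O}(n^{2-\delta})$ time for \emph{linear-size} output. To reach $k \ge 2n$ literally, I would pad: adjoin to $A$ and to $B$ a block of $\Theta(n)$ extra points whose pairwise sums form a long staircase of fresh non-dominated points, placed in a coordinate region disjoint (in both $x$- and $y$-range) from the convolution-encoding points $(x, c[x]+(2n-x)U)$. Choosing the shift that positions this block large enough guarantees that (i) the augmented $A,B$ remain Pareto sets, (ii) the block contributes $\Theta(n)$ new points to $C$, and (iii) block sums neither dominate nor are dominated by the convolution-encoding points. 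Then the padded instance still has input size $\Theta(n)$, admits the same linear-time extraction, and has $k \ge 2n$, so an algorithm that is fast only in the regime $k \ge 2n$ would still solve the padded instance and hence (min,+)-convolution in subquadratic time.

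I expect the main (and still modest) obstacle to be verifying the three separation properties of the padding block --- that it is internally a Pareto set, that every block--block sum survives domination, and that it does not interfere with the convolution-encoding points --- all of which come down to choosing the coordinate offsets generously and involve no new ideas. If one is willing to state the lower bound for $k \in \Theta(n)$ rather than $k \ge 2n$, even the padding is unnecessary and the conjecture is an immediate corollary of Theorem~\ref{thm:reduce} composed with the hypothetical subquadratic solver.
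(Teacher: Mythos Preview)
Your proposal is correct and follows the paper's own route: the conjecture is not proved in the paper but is simply \emph{stated} as the conditional consequence of Theorem~\ref{thm:reduce} together with the assumed hardness of (min,+)-convolution, exactly the contraposition you spell out.

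You are in fact more careful than the paper on one point. The reduction of Theorem~\ref{thm:reduce} produces instances with $k\le 2n-1$, so the clause ``for $k\ge 2n$'' in the conjecture is not literally covered by the reduction as written; the paper does not comment on this. Your padding argument (adjoin a $\Theta(n)$-size staircase in a disjoint coordinate region) is the natural fix and works for the reasons you give: the offset can be chosen large enough that the block is internally Pareto, contributes $\Theta(n)$ fresh non-dominated sums, and neither dominates nor is dominated by the convolution-encoding points. An even simpler alternative is to observe that the reduction already yields $k=2n-1$ (every $x\in\{0,\dots,2n-2\}$ contributes exactly one Pareto point by the same $U$-gap argument used for $x\le n-1$), so a single extra padding point suffices to reach $k\ge 2n$; either way your analysis is sound and slightly sharper than the paper's.
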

While for Pareto sums of size $k \in \Theta(n^2)$, a quadratic running time for their computation is clearly necessary, our reduction shows that the complexity of the problem does not hinge on a large output size. Instead, a truly subquadratic algorithm is unlikely to exist even if the output size in linear in the input size. Based on this hardness result, we next focus on designing algorithms  for Pareto sum computation with close-to-quadratic running time. Furthermore, we note that the hardness result does not apply to instances with $k \in o(n)$. Instances with sublinear output size do exist as proven in \cite{klamroth2022efficient}. We will show below that in this case truly subquadratic running times  are achievable.

\section{Algorithms for Pareto Sum Computation}
In this section, we propose a variety of algorithms to tackle the problem of efficient Pareto sum computation for given Pareto sets $A, B$ of size $n$. We first discuss a linear time algorithm for the special case where the points in $A$ and $B$ are in convex position. Afterwards, we discuss base algorithms and successive algorithms for the general case and analyze their running times  and space consumption. Furthermore, we discuss engineering methods to foster their efficiency in practice.
\subsection{Pareto Sums of Convex Point Sets}\label{sec:convex}
As a warm-up, we present an algorithm that for given Pareto sets $A, B$ computes part of their Pareto sum in linear time. Thus, this algorithm can be used as an efficient preprocessing method before applying other techniques.  
Furthermore, we show that for convex input sets, the algorithm already returns the whole Pareto sum.

Our algorithm relies on interpreting the given point sets as polygons.
For two  convex polygons $P, Q \in \mathbb{R}^2$, their Minkowski sum $P\oplus Q$ is a convex polygon with at most $|P|+|Q|$ vertices and these vertices can be computed in linear time \cite{mark2008computational}. 
Let now $A, B$ be sorted Pareto sets  augmented   with  dummy points  $(x,y)$ where $x:= \max_{s \in S}s.x$ and $y:=\max_{s \in S}s.y$ for $S=A$ and $S=B$, respectively.  We use $CH(A)$ and $CH(B)$ to refer to the convex hulls of these two sets. The following observation captures the connection between these convex hulls and the Pareto sum.
\begin{observation}
The vertices of the Minkowski sum $CH(A)\oplus CH(B)$ are a subset of the Pareto sum of $A$ and $B$ (excluding the dummy point sum).
\end{observation}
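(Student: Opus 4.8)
My plan is to work with the convex polygon $\Pi := CH(A)\oplus CH(B)$ and the standard structure theory of Minkowski sums. Let $d_A,d_B$ be the dummy points, so $d_A=(\max_{a\in A}a.x,\ \max_{a\in A}a.y)$ and likewise for $d_B$; their sum $d_A+d_B$ is dominated (e.g.\ by $A_1+B_1$, since $n\ge 2$), which is why it must be excluded. I will prove that every vertex $v\neq d_A+d_B$ of $\Pi$ is a non-dominated point of $M=A\oplus B$, hence lies in $C$. Two standard facts will be used: $\Pi$ is a convex polygon, and whenever $v$ is a vertex of $\Pi$ minimising a linear functional $\langle w,\cdot\rangle$ over $\Pi$, it decomposes as $v=a+b$ where $a$ is a vertex of $CH(A)$ minimising $\langle w,\cdot\rangle$ over $CH(A)$ and $b$ the analogous vertex of $CH(B)$.

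First I would pin down the Pareto frontier of $\Pi$. Since $A$ is a Pareto set, no two of its points share an $x$- or a $y$-coordinate; hence $A_1$ is the unique vertex of $CH(A)$ of maximal $y$ and $A_n$ the unique one of maximal $x$, so $CH(A)$ has exactly one horizontal edge ($A_1d_A$), exactly one vertical edge ($d_AA_n$), and all its remaining edges descend to the right, and likewise for $CH(B)$. Merging edges by direction, $\Pi$ therefore has one horizontal edge from $A_1+B_1$ to $d_A+d_B$, one vertical edge from $d_A+d_B$ to $A_n+B_n$, and a convex chain of right-descending edges joining $A_1+B_1$ to $A_n+B_n$. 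A direct check shows this chain is exactly the set of non-dominated points of $\Pi$: each of its points has the whole polygon on the upper-right side of its supporting line, while every point of the top or right edge other than $A_1+B_1$, resp.\ $A_n+B_n$, is dominated. In particular $d_A+d_B$ is the only vertex of $\Pi$ not on the chain, so every vertex $v\neq d_A+d_B$ is non-dominated in $\Pi$; by convexity of $\Pi$ this is equivalent to saying that $v$ minimises $\langle w,\cdot\rangle$ over $\Pi$ for some $w\ne 0$ with $w.x,w.y\ge 0$ (separate $\Pi$ from the open orthant $\{z:z.x<v.x,\ z.y<v.y\}$, which it misses because $v$ is non-dominated). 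Fix such a $w$.

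Decompose $v=a+b$ with $a$ a vertex of $CH(A)$ minimising $\langle w,\cdot\rangle$ over $CH(A)$, and $b$ the analogue for $CH(B)$. Then $a\neq d_A$: otherwise $\langle w,d_A\rangle\le\langle w,A_1\rangle$ would force $w.x=0$ (as $d_A$ and $A_1$ share their $y$-coordinate but $d_A.x>A_1.x$), and $\langle w,d_A\rangle\le\langle w,A_n\rangle$ would force $w.y=0$, contradicting $w\neq 0$. So $a$ is a vertex of $CH(A)$ other than the dummy, i.e.\ $a\in A$; symmetrically $b\in B$; hence $v=a+b\in M$. Finally $v$ is non-dominated in $M$: if $u\in M\subseteq\Pi$ had $u.x\le v.x$ and $u.y\le v.y$, then $\langle w,u\rangle\le\langle w,v\rangle$, so $\langle w,u\rangle=\langle w,v\rangle$ (as $v$ minimises over $\Pi$), so $u$ lies in the minimising face $F=\arg\min_\Pi\langle w,\cdot\rangle$; but $F$ cannot be the horizontal or the vertical edge of $\Pi$ (these are minimising faces only for $w$ proportional to $(0,-1)$ or $(-1,0)$, not for a nonzero non-negative $w$), so $F$ is a single vertex or a right-descending edge, and on a right-descending edge no point weakly dominates another — whence $u=v$. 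This gives $v\in C$.

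I expect the middle step to be the main obstacle: correctly reading off the boundary structure of $\Pi$ from those of $CH(A)$ and $CH(B)$, and thereby identifying $d_A+d_B$ as the unique vertex that fails to be Pareto-optimal in $\Pi$. The remaining steps are routine bookkeeping with the Minkowski-sum face decomposition; the only extra care is for minimising faces $F$ that happen to be edges rather than single vertices (handled above), and for the degenerate case $n=1$, where $CH(A)$ is the single point $d_A=A_1$ and the claim is immediate.
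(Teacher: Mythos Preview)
The paper presents this as an Observation and gives no proof; there is nothing to compare your argument against. Your proof is correct and supplies the details the paper leaves implicit: you read off the boundary structure of $\Pi=CH(A)\oplus CH(B)$ (one horizontal edge, one vertical edge, and a right-descending chain), identify $d_A+d_B$ as the lone non-Pareto vertex, produce a nonnegative supporting direction $w$ at any other vertex $v$, use the face decomposition of Minkowski sums to write $v=a+b$ with $a,b$ vertices minimising $\langle w,\cdot\rangle$ over $CH(A)$ and $CH(B)$, rule out $a=d_A$ (and $b=d_B$) from $w\ge 0$, and finally show non-dominance of $v$ in $M$ by analysing the minimising face $F$. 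The handling of the case where $F$ is an edge rather than a singleton is clean.

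One small slip in the write-up: you assert that $A_1$ is the \emph{unique} vertex of $CH(A)$ of maximal $y$, but $d_A$ shares that $y$-coordinate. The intended (and correct) statement is that $A_1$ and $d_A$ are the only points of $A\cup\{d_A\}$ with maximal $y$, which is precisely what yields the single horizontal edge $A_1d_A$ (and symmetrically for $A_n,d_A$ and the vertical edge). This does not affect any subsequent step.
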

\begin{figure}
    \centering
    \includegraphics[width=0.495\textwidth]{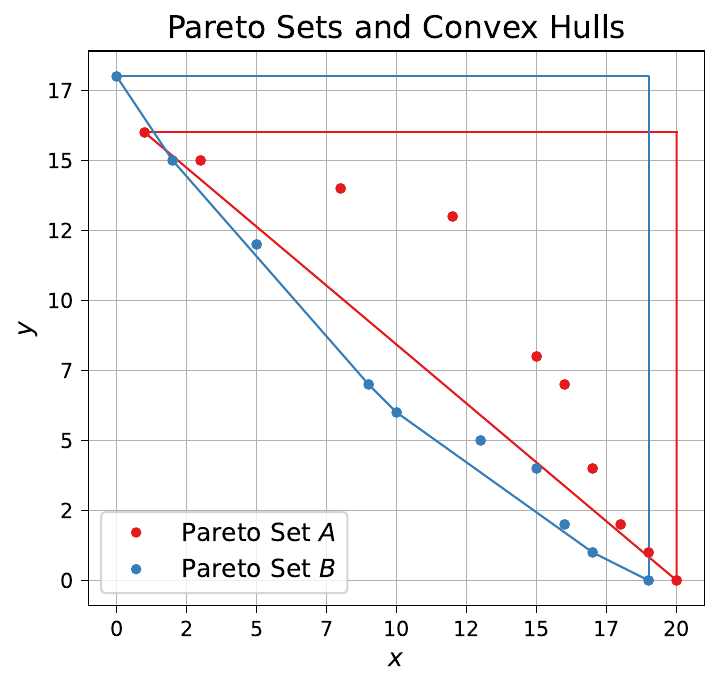}\hfill
    \includegraphics[width=0.495\textwidth]{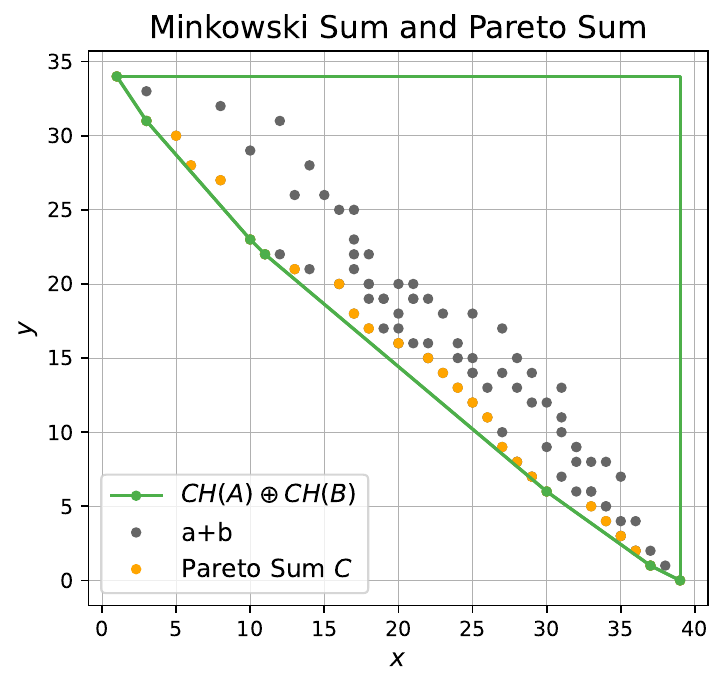}
    \caption{Left: Pareto sets $A, B$ together with their convex hulls.  Right:  The Minkowski sum $CH(A) \oplus CH(B)$ of the convex hulls encloses all pairwise vector additions $a+b$ with $a \in CH(A)$ and $b \in CH(B)$. The respective vertices (green points) are a subset of the Pareto Sum $C$.}
    \label{fig:Minkowski}
\end{figure}

For a sorted Pareto set, its convex hull can be computed in linear time using Andrew's algorithm \cite{andrew1979another}. Then, using the linear time Minkowski sum algorithm on the two convex hulls and extracting the respective polygon vertices, we obtain a subset of the Pareto sum $C$, see Figure \ref{fig:Minkowski}. If both $A,B$ are convex, then this procedure already returns all of $C$. We thus get the following corollary.
\begin{corollary}
The Pareto sum of two sorted convex Pareto sets of size $n$ each can be computed in $\mathcal{O}(n)$.
\end{corollary}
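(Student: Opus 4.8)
The plan is to verify that the procedure sketched above is correct and runs in $\mathcal{O}(n)$. It has three phases: (i) augment $A$ and $B$ with their dummy points $d_A,d_B$ and compute $CH(A)$ and $CH(B)$; (ii) compute the Minkowski sum $CH(A)\oplus CH(B)$; (iii) traverse the boundary of the resulting convex polygon and output the vertices of its lower-left chain, discarding the sum $d_A+d_B$. The running time is immediate: a lexicographically sorted point set is already sorted by $x$-coordinate, so Andrew's monotone-chain algorithm \cite{andrew1979another} builds each convex hull in $\mathcal{O}(n)$; the Minkowski sum of two convex polygons with at most $n+1$ vertices each is computed in $\mathcal{O}(n)$ by merging their edge sequences in slope order \cite{mark2008computational}; and the resulting polygon has at most $2n+2$ vertices, so phase (iii) is also $\mathcal{O}(n)$. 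Summing the three phases gives $\mathcal{O}(n)$.

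It then remains to argue that, for convex inputs, the output equals $C$. The inclusion ``output $\subseteq C$'' is exactly the Observation. For ``$C\subseteq$ output'' I would use convexity as follows. Since $A$ is a convex Pareto set, the sequence $A_1,\dots,A_n$ is a convex chain (its consecutive edge slopes are negative and strictly increasing), hence precisely the lower-left chain of $CH(A)$; likewise for $B$. Consequently the lower-left chain of $CH(A)\oplus CH(B)$ is obtained by merging the edge vectors of these two chains in slope order, and the sums $A_i+B_j$ it passes through are exactly those for which the outward-normal cones of $A_i$ in $CH(A)$ and of $B_j$ in $CH(B)$ overlap in some ``down-left'' direction. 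So the correctness claim reduces to: every non-dominated point of $M$ is one of these sums.

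I expect this remaining step to be the main obstacle. The natural approach is a supporting-direction / exchange argument: given a non-dominated $c=A_i+B_j$ whose two normal cones are disjoint, pick a separating down-left direction $d$, consider the $M$-minimizer $A_{i'}+B_{j'}$ of $\langle d,\cdot\rangle$, and try to deduce that $c$ is dominated --- a contradiction. The delicate part is that this deduction must genuinely use convexity of \emph{both} inputs: one has to rule out a Pareto-optimal element of $M$ that is not exposed by any weighted-sum scalarization (being a hull vertex is exactly such exposure), and one must also treat the chain endpoints, where the relevant cones are bounded by the edges incident to the dummy points, together with any degenerate parallel edges of $A$ and $B$. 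Alternatively, because the surrounding discussion already asserts that for convex $A,B$ the procedure returns all of $C$, one may take that as given and the corollary follows immediately from it together with the $\mathcal{O}(n)$ accounting of phases (i)--(iii).
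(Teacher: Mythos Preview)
Your proposal is correct and matches the paper's approach: the paper derives the corollary directly from the Observation together with the linear-time convex-hull computation via Andrew's algorithm and the linear-time Minkowski sum of convex polygons, simply asserting (without further argument) that for convex $A,B$ the procedure returns all of $C$. Your three-phase accounting is exactly this, and your additional sketch of the ``$C\subseteq$ output'' direction via normal cones and supporting directions in fact goes beyond what the paper supplies.
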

For non-convex $A, B$, we might only get part of the Pareto sum.  However, as   this step only takes linear time (assuming the Pareto sets are presorted), it can always be used as an initial step before applying other algorithms. We will discuss below in more detail how the knowledge of $C' \subset C$ can be exploited to decrease the practical running time of several of the algorithms we propose.

\subsection{Base Algorithms for Pareto Sum Computation}
Next, we discuss  three simple base algorithms for Pareto sum computation along with  engineering concepts for their acceleration and space consumption reduction. The algorithms all proceed by checking for each element $p \in M$ whether there exists $p' \in M$ that dominates $p$. If there is no such $p'$, the point $p$ is added to the Pareto sum $C$. The only difference between the algorithms is the implementation of the dominance check.
\subsubsection{Brute Force (BF)}\label{sec:BF}
The easiest way to check for a point $p \in M$ whether it is non-dominated is by pairwise comparison to all other elements in $M$.   This dominance check takes  $\mathcal{O}(|M|)$ time per point, accumulating to a total  time of $\mathcal{O}(|M|^2) = \mathcal{O}(n^4)$. As the elements $M_{ij}$ can be computed on demand, the space consumption is linear in the input size $n$ and the output size $k$.
\begin{corollary}
 The BF algorithm runs in $\mathcal{O}(n^4)$ time using $\mathcal{O}(n+k)$ space.
\end{corollary}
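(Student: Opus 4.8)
The claim is that the Brute Force algorithm runs in $\mathcal{O}(n^4)$ time using $\mathcal{O}(n+k)$ space, so the argument is a straightforward cost accounting over the algorithm already described in the surrounding text. The plan is to bound separately the number of points examined, the cost of one dominance check, and the working memory held at any instant.

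First I would fix the iteration order: the algorithm ranges over all index pairs $(i,j)$ with $1 \le i,j \le n$, computing the point $M_{ij} = A_i + B_j$ on demand, so there are exactly $|M| = n^2$ candidate points generated, each in $\mathcal{O}(1)$ time from the stored arrays $A$ and $B$. For each such candidate $p = M_{ij}$, the dominance check compares $p$ against every other element $M_{i'j'}$ of $M$; since $M$ is not materialised, each comparand is likewise recomputed in constant time as $A_{i'} + B_{j'}$, and the check $M_{i'j'} \prec p$ takes constant time by the definition of domination. Hence one dominance check costs $\mathcal{O}(|M|) = \mathcal{O}(n^2)$, and summing over all $n^2$ candidates gives a total running time of $\mathcal{O}(n^2 \cdot n^2) = \mathcal{O}(n^4)$.

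For the space bound I would argue as follows: the algorithm stores the two input arrays $A$ and $B$, which occupy $\mathcal{O}(n)$ space, and it accumulates the output set $C$, which by definition has size $k$, giving $\mathcal{O}(k)$. Beyond that, at any moment it holds only the current candidate point $p$, a single comparand point, the loop indices, and a boolean flag recording whether a dominator has been found — all $\mathcal{O}(1)$. Crucially, because every element of $M$ is produced on demand and discarded immediately after the comparison, $M$ is never stored in full, so the Minkowski sum contributes nothing to the space bound. Adding the pieces yields $\mathcal{O}(n + k)$ total space.

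The only point that deserves care — and the closest thing to an obstacle — is confirming that the on-demand recomputation does not hide a nonconstant cost: one must note that accessing $A_i$ and $B_j$ is $\mathcal{O}(1)$ because the inputs are given as sorted arrays (as assumed in the Formal Definitions section), and that forming $A_i + B_j$ and testing the two coordinate inequalities of $\prec$ is $\mathcal{O}(1)$ in the standard arithmetic model. Once that is observed, both the time and the space bounds follow by the elementary counting above, completing the proof.
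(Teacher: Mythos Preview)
Your proposal is correct and follows essentially the same approach as the paper: the paper's justification is simply that each of the $|M|=n^2$ points is checked against all $|M|$ others for a total of $\mathcal{O}(|M|^2)=\mathcal{O}(n^4)$ time, and that elements $M_{ij}$ are computed on demand so only $\mathcal{O}(n+k)$ space is used. Your write-up is a more detailed unpacking of exactly this counting argument.
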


\subsubsection{Binary Search (BS)}\label{sec:BS}
To decrease the time needed for the dominance check, we take the structure of $M$ into account. Based on the assumption that $A$ and $B$ are sorted and that $M_{ij}$ is defined as $A_i + B_j$, we have the property that each column (and each row) of the matrix $M$ forms a sorted Pareto set on its own. Thus, if we want to check whether column $M_j$ contains an element dominating $p$, we simply have to find the entry $M_{ij}$ with the largest index $i$ such that $M_{ij}.x \leq p.x$ as well as the entry $M_{i'j}$ with the smallest index $i'$ such that $M_{i'j}.y \leq p.y$. Then all elements in $M_j$ with a row index in $[i',i]$ dominate  $p$ (or are equal to $p$). Accordingly, the dominance check for $M_j$ boils down to evaluating whether $i' \leq i$ holds. These two indices  can each be identified via a binary search over the respective coordinates in the column. Hence the dominance check time per column is in $\mathcal{O}(\log n)$, resulting in a total time of $\mathcal{O}(n \log n)$ per element in $M$. Entries of $M$ that need to be accessed can be computed on demand.
\begin{corollary}
 BS runs in $\mathcal{O}(n^3\log n)$ time using $\mathcal{O}(n+k)$ space.
\end{corollary}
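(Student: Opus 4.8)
The plan is to establish the two resource bounds separately, leaning on the matrix structure of $M$ that has already been set up for the algorithm description.

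For the running time, I would first recall that BS iterates over all $n^2$ entries $M_{ij}=A_i+B_j$ and, for each candidate point $p=M_{ij}$, performs a dominance test against all $n$ columns of $M$, adding $p$ to $C$ exactly when no column yields a dominator. The crux is the per-column test. Since $A$ and $B$ are sorted lexicographically and, being Pareto sets, contain no two points sharing an $x$- or a $y$-coordinate, each column $M_{\cdot j}$ has strictly increasing $x$-coordinates and strictly decreasing $y$-coordinates. Hence $\{i : M_{ij}.x \le p.x\}$ is a prefix $[1,i_{\max}]$ and $\{i : M_{ij}.y \le p.y\}$ is a suffix $[i_{\min},n]$, so the entries of column $j$ that dominate or equal $p$ are precisely those with index in $[i_{\min},i_{\max}]$. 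Column $j$ therefore contains a point $p'\prec p$ iff $i_{\min}\le i_{\max}$ and this index interval does not consist solely of the single index realizing $p$ itself. Both $i_{\max}$ and $i_{\min}$ are obtained by binary search on the monotone coordinate sequences of the column in $\mathcal{O}(\log n)$ time; summing over the $n$ columns gives $\mathcal{O}(n\log n)$ per candidate point, and over all $n^2$ candidates $\mathcal{O}(n^3\log n)$ in total.

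For the space bound, I would observe that no part of $M$ is precomputed: any $M_{ij}$ is generated in $\mathcal{O}(1)$ time from $A_i$ and $B_j$ when it is needed, and each binary search accesses only $\mathcal{O}(\log n)$ entries of a single column at a time. Thus the only persistent memory is the two input arrays ($\Theta(n)$) together with the accumulating output set $C$ ($\Theta(k)$), i.e. $\mathcal{O}(n+k)$.

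The one point that needs care — and really the only obstacle — is arguing that the per-column test is exact: that the prefix/suffix characterization of $\{i : M_{ij}\preceq p\}$ holds even with coordinate ties, and that emptiness of $[i_{\min},i_{\max}]$, together with the boundary case $i_{\min}=i_{\max}$ with $M_{i_{\min}j}=p$, correctly certifies that column $j$ contains no $p'\prec p$. Once that is pinned down, the time and space accounting above is immediate.
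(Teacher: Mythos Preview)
Your proposal is correct and follows essentially the same approach as the paper: the paper's description immediately preceding the corollary already argues that two binary searches per column give an $\mathcal{O}(n\log n)$ dominance test for each of the $n^2$ candidates, with entries of $M$ computed on demand. Your write-up is in fact more careful than the paper's, which simply checks $i'\le i$ without explicitly handling the boundary case where the interval consists solely of a point equal to $p$; your flagging of this edge case is appropriate and does not change the asymptotics.
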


To reduce the practical running time of the BS algorithm, we propose the following engineering  techniques.
\smallskip\\
\textit{Pruning.} Whenever we identify a non-dominated point $p$ and add it to $C$, we can also compute all entries in $M$ dominated by $p$ in time $\mathcal{O}(n \log n)$, again with the help of two binary searches per column. For those points, dominance does not need to be checked again. However, if we simply store a flag for each entry in $M$ whether it needs to be further considered or not, the space consumption increases to $n^2$. Instead we can store for each column the set of intervals of dominated points in an interval tree. The number of intervals per column is upper bounded by $k$. Then, for a point $p=M_{ij}$ we can query the interval tree in time $\mathcal{O}(\log k)$ to see whether the point lies in a dominated region. Intervals can also be added or merged within the same time. However, the space consumption would still increase to $\mathcal{O}(nk)$. To keep the space consumption linear, one might only want to store a constant number of intervals per column (e.g. only the largest one) and then merge or replace intervals if possible or needed. If pruning is applied, we can also disregard fully dominated columns in the binary searches of the remaining elements. 
\smallskip\\
\textit{Priority Binary Search (PBS).} As soon as dominance checks might be avoided for some of the elements based on the pruning techniques described above, the order in which the points are considered impacts the running time. Identifying points that dominate many other points early on can significantly reduce  the total number of checks.  For that purpose, we will use the preprocessing step described in Section \ref{sec:convex} to get an initial set of points $C' \subset C$. We can directly exclude any points dominated by the points in $C'$. Furthermore, we conjecture that  points in the same rows or columns as the points of $C'$ occupy in $M$ are likely to be also part of $C$.  Thus, we give priority to these points in our search. Figure \ref{fig:priority} shows some visual support for this hypothesis.
\begin{figure}
\centering
\begin{center}
   \begin{subfigure}{0.43\textwidth}
        \includegraphics[width=\textwidth]{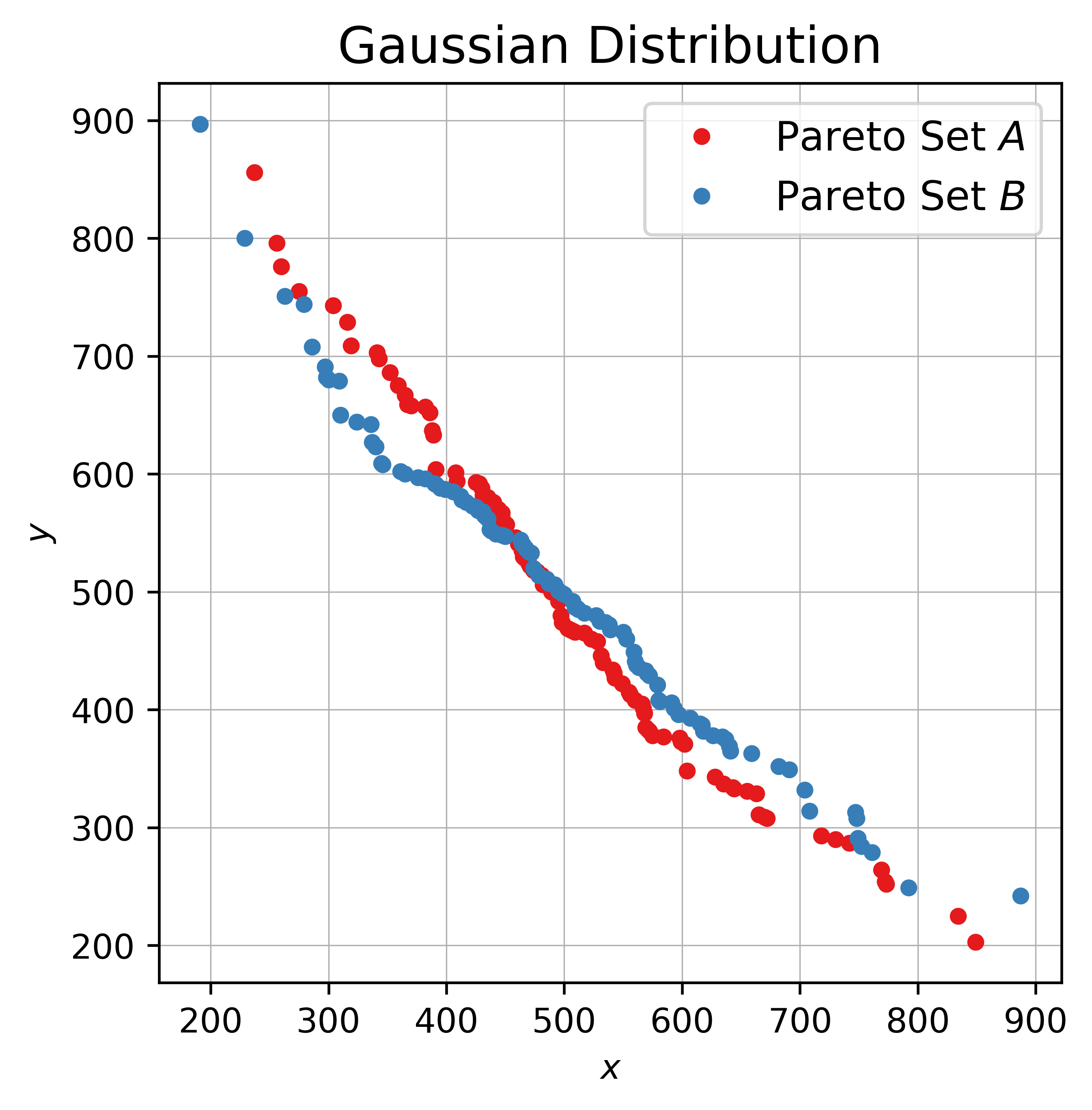}
    \end{subfigure} \hfill
    \begin{subfigure}{0.43\textwidth}
        \includegraphics[width=\textwidth]{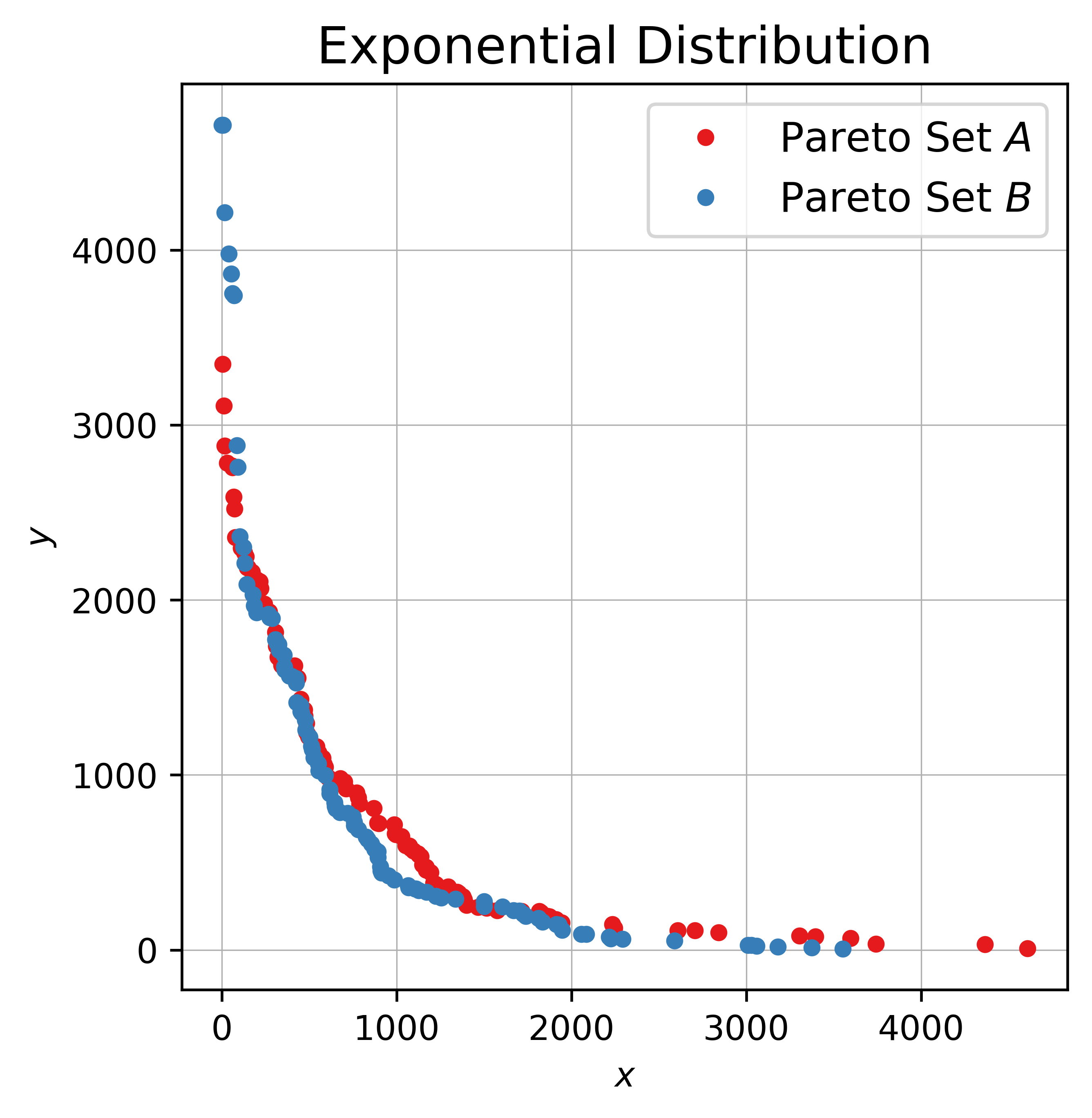}
    \end{subfigure} 
\end{center}

\begin{center}
    \begin{subfigure}{0.45\textwidth}
        \includegraphics[width=\textwidth]{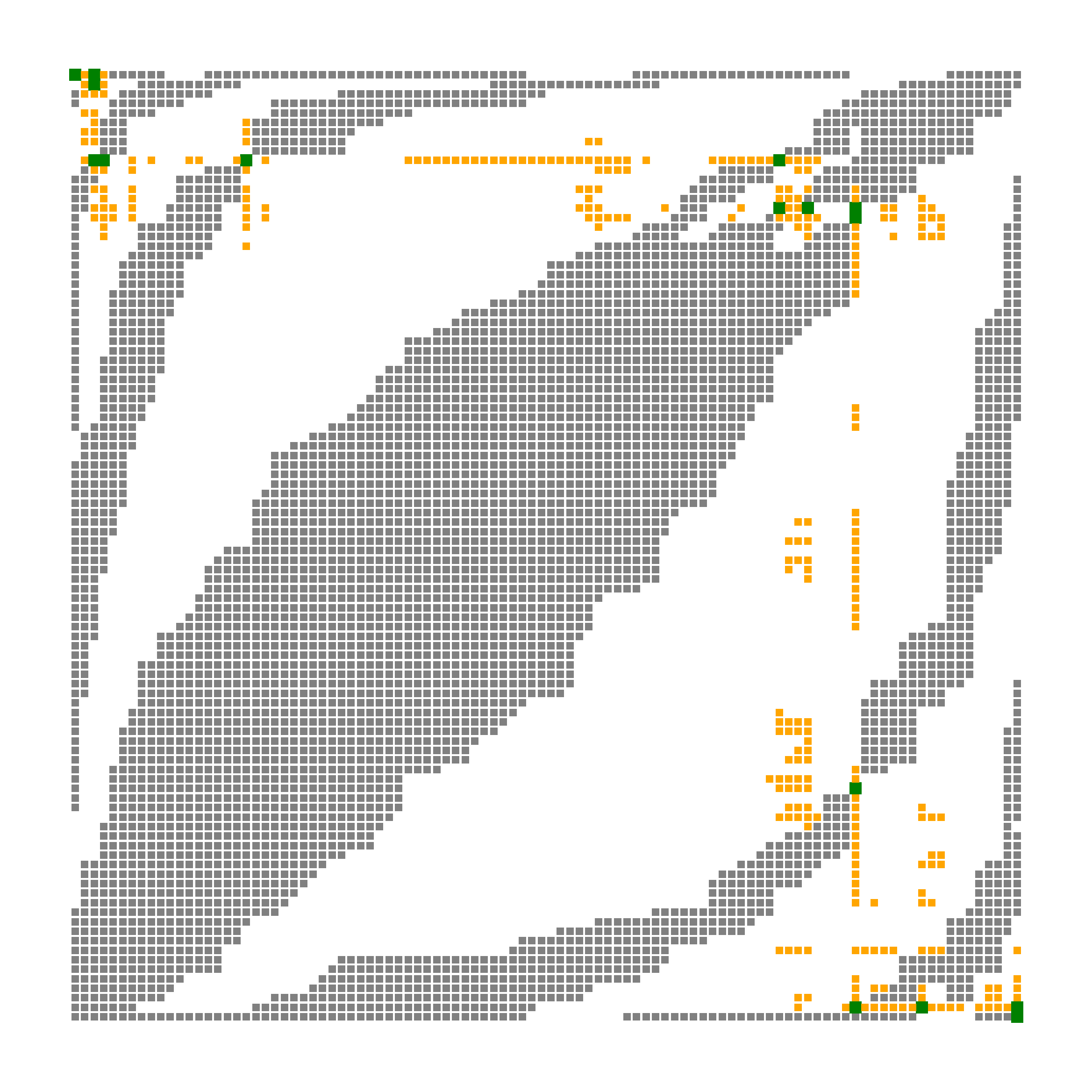}
    \end{subfigure}\hfill
    \begin{subfigure}{0.45\textwidth}
        \includegraphics[width=\textwidth]{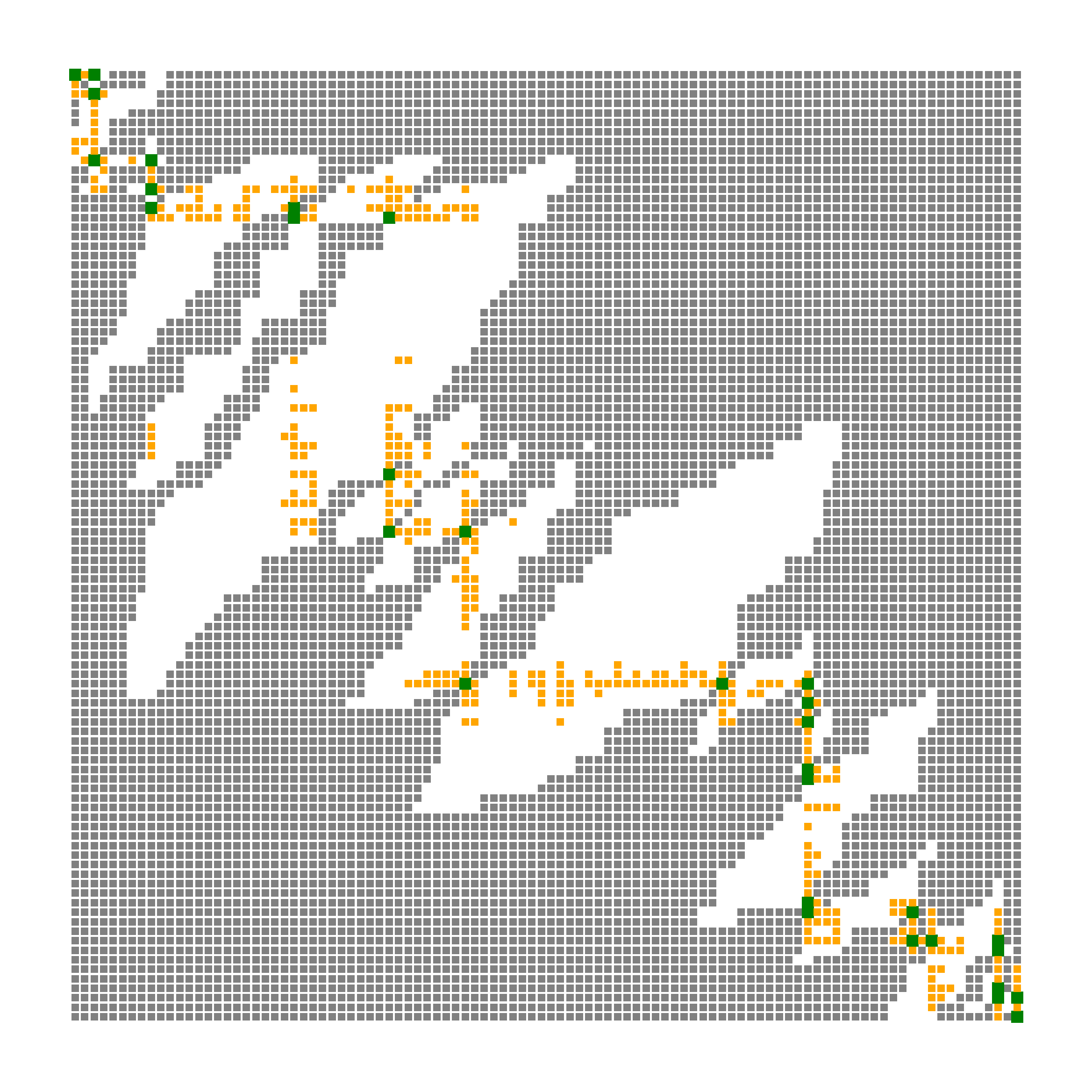}
    \end{subfigure}
\end{center}  

\begin{center}
    \begin{subfigure}{0.495\textwidth}
        \includegraphics[width=\textwidth]{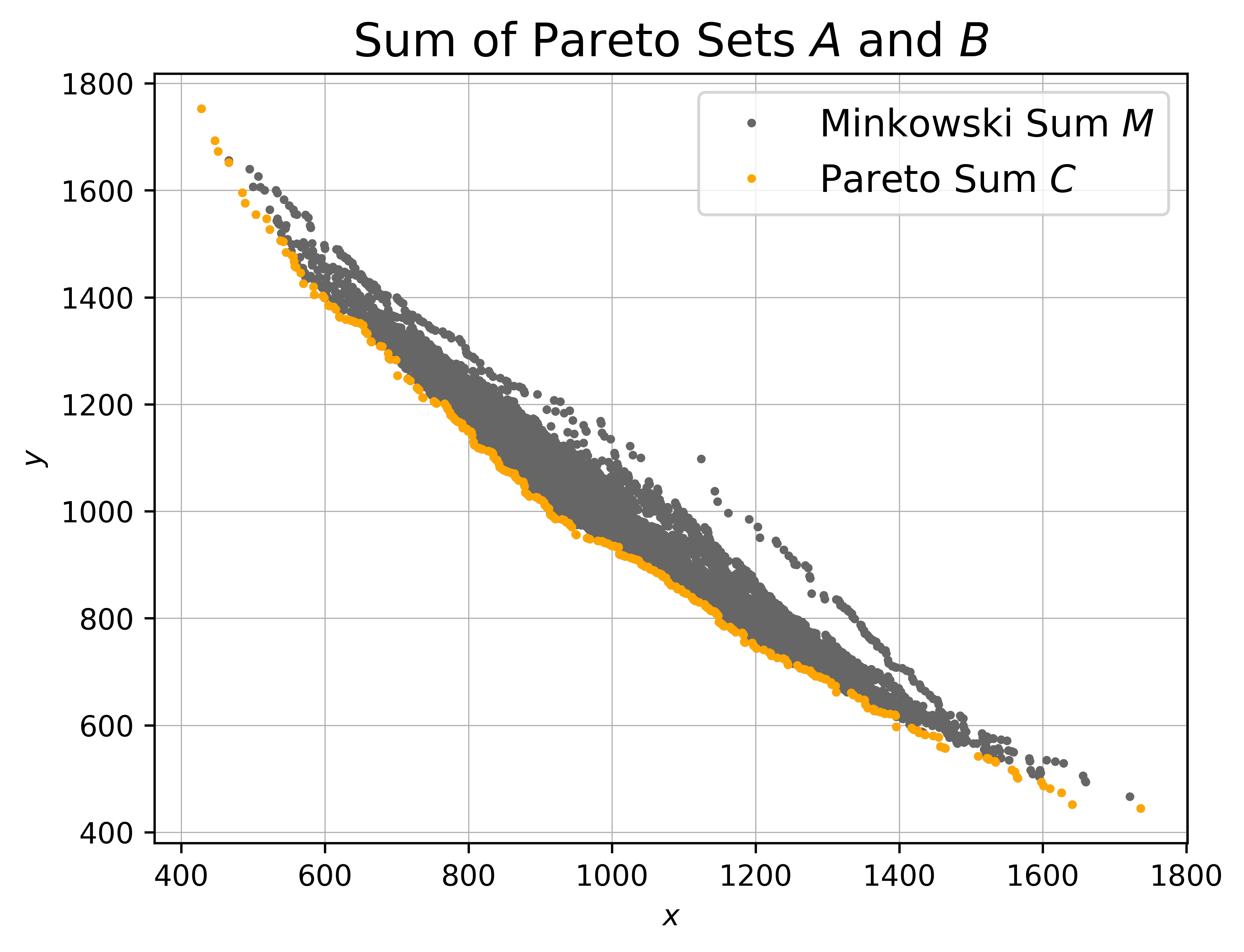}
    \end{subfigure}\hfill
    \begin{subfigure}{0.495\textwidth}
        \includegraphics[width=\textwidth]{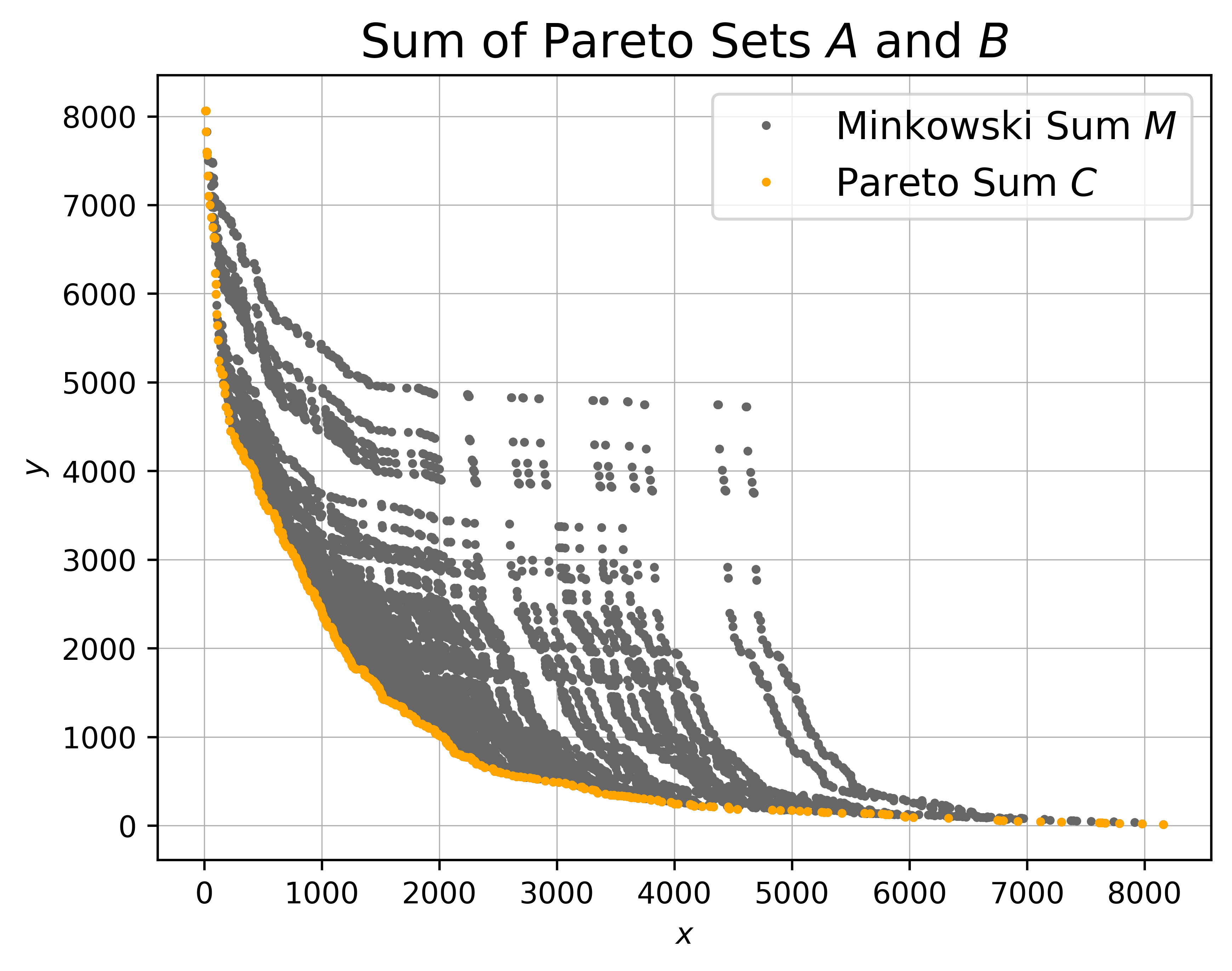}
    \end{subfigure}
\end{center}
    \caption{Input Pareto sets following different kinds of distributions (top) and schematic depiction of the corresponding Minkowski matrix $M$ (middle). Green dots indicate entries in $M$ that are points on the Minkowski sum of the convex hulls of $A$ and $B$, black dots indicate entries that are dominated by the green ones, and orange dots encode the remaining elements of the Pareto sum which then together dominate the white dots. In the bottom row, the Minkowski sum and the Pareto sum are illustrated based on point coordinates.}
    \label{fig:priority}
\end{figure}
\subsubsection{Sort \& Compare (SC)} \label{sec:SC}
Given a sorted set of points, extracting the set of non-dominated points can be accomplished in constant time per point. The smallest element is always added to $C$. For each other element in sorted order, we check whether it is dominated by (or equal to) the currently last element in $C$. If that is not the case, the element is added to $C$.

Computing and sorting $M$ as a whole takes time $\mathcal{O}(n^2 \log n)$ and requires quadratic space. But we can  exploit the structure of $M$ to improve the space consumption to linear as follows: We use a  min-heap data structure and initialize it with the first row of $M$. Each element in the heap remembers its position in $M$. When  we extract the min element $M_{ij}$ from the heap and $C$ is empty so far, we add the element to $C$. Otherwise, we compare $M_{ij}$ to the element added to $C$ last. If $M_{ij}$ is not dominated, we also add it  to $C$. In any case, we add its column successor $M_{i+1j}$ to the heap (as long as $i < n$). As each column is a sorted Pareto set in itself, we know that $M_{i+1j}$ has to have  larger $x$-value than $M_{ij}$. Thus, we extract the elements from the heap exactly according to their global lexicographic order. As the heap never contains more than $n$ elements, its space consumption is in $\mathcal{O}(n)$ and the heap operations take $\mathcal{O}(\log n)$ per round. In conclusion, the heap-based variant has the same asymptotic running time as the one where we  fully compute and sort $M$, but a significantly reduced space consumption.
\begin{corollary}
SC runs in $\mathcal{O}(n^2\log n)$ time using $\mathcal{O}(n+k)$ space.
\end{corollary}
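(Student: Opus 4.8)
The plan is to formalize the heap-based variant sketched in the text and verify its three claimed properties: that it outputs exactly the Pareto sum, that it runs in $\mathcal{O}(n^2\log n)$ time, and that it uses $\mathcal{O}(n+k)$ space. Throughout I fix the convention $M_{ij}=A_i+B_j$, with $A,B$ given sorted lexicographically (otherwise sort them first in $\mathcal{O}(n\log n)$, which is absorbed). The starting observation is that each column $M_{\cdot j}=\{A_i+B_j : i\}$ is a translate of $A$, hence itself a sorted Pareto set; in particular its $x$-coordinates are strictly increasing down the column, since in any two-dimensional Pareto set all $x$-coordinates are distinct (two distinct points with equal $x$ would have one dominating the other). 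Thus $M_{i+1,j}.x>M_{ij}.x$, which is what makes the column-successor step sound.

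The core of the argument is the following invariant, maintained over the extractions. After any number of extractions there is, for each column $j$, a pointer $r_j$ such that exactly $M_{1j},\dots,M_{r_j-1,j}$ have been extracted, the heap contains exactly the frontier $\{M_{r_j,j} : r_j\le n\}$, and the heap minimum is a lexicographically smallest element among all not-yet-extracted entries of $M$. The last part is the crux: any not-yet-extracted entry $M_{ij}$ satisfies $i\ge r_j$, so by sortedness of column $j$ it is lexicographically at least $M_{r_j,j}$; hence the smallest of the per-column frontier elements is lexicographically $\le$ every remaining entry, and that smallest element is precisely the heap minimum. Each extraction removes one element and inserts at most one (its column successor), so the invariant is preserved, and it follows that the algorithm extracts the $n^2$ entries of $M$ in nondecreasing lexicographic order, with ties among equal points broken arbitrarily.

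Given that the entries arrive in lexicographic order, correctness of $C$ reduces to the standard one-pass skyline extraction from a sorted stream. One checks that the retained set $C$ always has strictly decreasing $y$-coordinates — if $c$ is retained before $c'$ then $c.x\le c'.x$ and $c\not\prec c'$ forces $c.y>c'.y$ — so the last element of $C$ carries the minimum $y$-coordinate seen so far. A newly extracted point $p$ is dominated by (or equal to) some entry of $M$ exactly when it is dominated by (or equal to) that last element of $C$: such a dominator necessarily precedes $p$ in lexicographic order, and the chain of dominators terminates at a retained, non-dominated point, which is then the current last element of $C$. Hence $p$ is appended to $C$ iff it belongs to the Pareto sum, and the produced $C$ equals the Pareto sum of $A,B$.

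For the bounds, the heap is initialized with the $n$ entries of the first row and, by the invariant, never holds more than $n$ elements; there are exactly $n^2$ extractions and at most $n^2$ insertions, each costing $\mathcal{O}(\log n)$, giving $\mathcal{O}(n^2\log n)$ time. Since each $M_{ij}$ is recomputed on demand from $A$ and $B$, the only storage beyond $A,B$ (size $\mathcal{O}(n)$) and the heap (size $\mathcal{O}(n)$) is $C$ itself (size $\mathcal{O}(k)$), for $\mathcal{O}(n+k)$ total, dropping to $\mathcal{O}(n)$ if the elements of $C$ are streamed. The only part requiring genuine care — the \emph{main obstacle} — is the crux of the invariant, namely arguing that the heap minimum is the global lexicographic minimum among remaining entries, together with the mild bookkeeping around ties (several pairs $(i,j)$ mapping to the same point) and the verification that comparing each extracted point only against the \emph{last} element of $C$, rather than against all of $C$, already filters correctly; everything else is routine.
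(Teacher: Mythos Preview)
Your proposal is correct and follows the same approach as the paper: the corollary in the paper is stated without a separate proof, since the argument is given inline in the preceding algorithm description (columns are sorted Pareto sets, the heap of size at most $n$ yields the entries of $M$ in global lexicographic order, and a single comparison against the last element of $C$ suffices). Your write-up is simply a more careful formalization of that same reasoning, including the heap-frontier invariant and the one-pass skyline correctness, and the time and space bounds you derive match the paper's.
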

To reduce the running time in practice, we check for each element before its insertion into the heap whether it is dominated by the last element added to $C$. If that is the case, we simply omit the insertion and proceed with the next element in the column.

\subsubsection{NonDomDC and Pareto Trees}\label{sec:ND_ParetoTrees}
Due to the structure of the Minkowski sum matrix $M$ each column forms a Pareto set itself. In general, the NonDomDC \cite{klamroth2022efficient} algorithm computes the Pareto sum $C$ of Pareto sets $A$ and $B$ by merging columns of the matrix and pruning dominated points. Let $P_i$ be the Pareto sum of points in the first $i$ columns of $M$. The column $i+1$ is merged with $P_i$ by pruning dominated elements to obtain the Pareto sum $P_{i+1}$. Since $P_i$ and the column $i+1$ are sorted according to the $x$-coordinate, we use two pointers to keep track of the smallest point not yet merged. Similarly to the Sort \& Compare algorithm, we process points in $P_i$ and the column $i+1$ in the order of their $x$-coordinates and add the current point to $P_{i+1}$ if it is not dominated by the last point added to $P_{i+1}$. Computing the intermediate Pareto sum $P_{i+1}$ by merging and pruning takes $\mathcal{O}(|P_i|+n)$ time. This merging of columns and intermediate Pareto sets is performed for all $n$ columns in $M$ and is referred to as sequential NonDomDC (ND). With the maximum size of an intermediate Pareto set denoted as $P := \max_{i=1}^n |P_i|$, the total running time of the NonDomDC approach is in $\mathcal{O}(Pn)$ and the space consumption is in $\mathcal{O}(n+P)$. However, the NonDomDC algorithm is not an output-sensitive algorithm since $P$ might be significantly larger than the actual output size $k=|C|$. Another variant of the NonDomDC algorithm, called doubling ND, merges pairs of columns in a MergeSort like fashion until the final Pareto sum $C$ is obtained.

Pareto Trees, or ND trees \cite{jaszkiewicz2018nd} have been proven useful for the dynamic non-dominance problem. A Pareto tree maintains a Pareto set when new points are added to the tree. Non-dominated points are stored in the leaves of a Pareto tree where one leaf contains up to $p$ points. Non-leaf nodes have up to $c$ children. Each node $v$ in the Pareto tree represents a set $S_v$ of points. If $v$ is a leaf, $S_v$ is the set of points stored in $v$. For non-leaf nodes $v$, the set $S_v$ represents all points stored in the subtree rooted at $v$. Note that in the latter case, the set $S_v$ is not stored explicitly. We use $c=3$ and $p=20$ as recommended in \cite{jaszkiewicz2018nd} for the best practical performance of two-dimensional Pareto trees. Additionally, each node $v$ stores a lower and upper bound such that each point in $S_v$ is within the bounding rectangle defined by the bounds. 

The refinement into space-partitioned ND trees (SPND trees) \cite{lang2022space} ensures that the tree is balanced and the bounding rectangles of children do not overlap. The space is partitioned along the $x$- and $y$-axes such that points stored in a leaf node are geometrically close to each other and further apart from points stored in other leaves. Determining if a candidate point should be added to the Pareto set represented by the tree is done by calling $\mathrm{NonDomPrune}$. It returns True if the point is non-dominated in the Pareto tree and removes any points from the tree that are dominated by the candidate. Otherwise, it returns False, indicating that the candidate point is dominated by some point in the Pareto tree. The operation $\mathrm{Insert}$ then inserts the candidate point into the Pareto tree at the corresponding leaf node following the space-partitioning along the $x$- and $y$-axes.

Using the Pareto trees we derive a simple algorithm to successively compute the Pareto sum $C$ of Pareto sets $A$ and $B$. The first and the last entry in the Minkowski sum $M$ have the smallest $x$-value and $y$-value, respectively, among all points in $M$ and thus are part of the Pareto sum $C$. We build a Pareto tree with the initial set $\{M_{11}, M_{nn} \}$. Further, we call $\mathrm{NonDomPrune}$ for each entry $M_{ij}$ and add the point $M_{ij}$ to the Pareto tree if is is non-dominated. Finally, the Pareto sum is retrieved from the set of points stored in the leaves of the Pareto tree.

Another approach leverages the linear time preprocessing algorithm described in Section \ref{sec:convex}. We first derive an initial set $C' \subset C$ with which we build the Pareto tree. The algorithm continues by traversing $M$ and adding points to the tree when appropriate. Since the Pareto tree entails more information apriori, using the initial set $C'$ rejects dominated points earlier and is therefore more efficient.

As for the two variants of NonDomDC, we use Pareto trees to represent intermediate Pareto sums $P_i$. When merging a column $i+1$ with $P_i$ we simply call $\mathrm{NonDomPrune}$ for each point in column $i+1$ and insert it into the Pareto tree if necessary. The resulting set $P_{i+1}$ is contained in the Pareto tree after the merging is completed. Hence, we only need to retrieve the Pareto set once from the Pareto tree after all columns are merged. Using the Pareto trees further improves the efficiency of the sequential and doubling NonDomDC algorithms. 

\subsection{Successive Algorithms  for Pareto Sum Computation}
If the Pareto sum $C$ contains (almost) all elements of the  Minkowski sum $M$,  a quadratic running time is needed already to report $C$. 
And as proven above in our fine-grained complexity analysis, even for linear output sizes and algorithm with a running time of $\mathcal{O}(n^{2-\delta})$ for $\delta > 0$ is unlikely to exist.
In these cases, the running time of  SC  is close-to-optimal. However, in case $C$ is small, subquadratic running times might be possible. We will present two output-sensitive  algorithms in this section that have a running time asymptotically faster than SC for $k \in o(n)$ or $k \in o(n \log n)$, respectively. Both algorithms detect the elements in $C$ successively. This is a well-established paradigm for output-sensitive skyline computation, see e.g. \cite{liu2014faster,yu2011improved}. However, known algorithms rely on the explicit availability of the point set  to construct an efficient search data structure. Based on the following lemma, we will design successive algorithms that do not need access to $M$ as a whole.
\begin{lemma}\label{lem:successive}
Let $A, B$ be two Pareto sets and $c,c' \in C$ two elements of their Pareto sum with $c.x < c'.x$. Then the lexicographically smallest element $m$ in $M$ that dominates  $(c'.x-\varepsilon,c.y-\varepsilon)$ for $\varepsilon > 0$ is also part of $C$ (if such an element exists). 
\end{lemma}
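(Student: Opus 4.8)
The plan is to prove a slightly more general geometric fact from which the lemma follows at once: for \emph{any} point $q\in\mathbb{R}^2$, if $M$ contains at least one point dominating $q$, then the lexicographically smallest such point is non-dominated in $M$, hence lies in $C$. Applying this with $q:=(c'.x-\varepsilon,\,c.y-\varepsilon)$ yields the statement. Notably, the hypotheses $c,c'\in C$ and $c.x<c'.x$ play no role in the core argument; they serve only to make $q$ meaningful in the intended algorithmic use, namely a corner point lying in the ``gap'' below the staircase between $c$ and $c'$. (As a side remark one can also observe $m\neq c$ and $m\neq c'$, since $m.x\le q.x<c'.x$ and $m.y\le q.y<c.y$, so the lemma genuinely produces a fresh Pareto-sum element, which is what makes it usable to drive a successive enumeration of $C$.)

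First I would set $q:=(c'.x-\varepsilon,\,c.y-\varepsilon)$ and let $D:=\{\,m\in M : m\prec q\,\}$ be the set of Minkowski-sum points dominating $q$. If $D=\emptyset$ there is nothing to prove, so assume $D\neq\emptyset$; since $M$ is finite and lexicographic order is total on a set of distinct points, $D$ has a unique lexicographically smallest element $m$. The key step is then a short argument by contradiction: suppose some $m''\in M$ dominates $m$, so $m''\neq m$, $m''.x\le m.x$, and $m''.y\le m.y$. Because $m\in D$ we have $m.x\le q.x$ and $m.y\le q.y$, hence also $m''.x\le q.x$ and $m''.y\le q.y$. Moreover $m''\prec m$ forces $m''$ to be lexicographically strictly smaller than $m$ (either $m''.x<m.x$, or $m''.x=m.x$ and $m''.y<m.y$). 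So once we also know $m''\in D$, the minimality of $m$ in $D$ is contradicted, and we conclude that no element of $M$ dominates $m$, i.e.\ $m\in C$.

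The one point needing a little care --- and the only thing I would call an obstacle, though a mild one --- is confirming $m''\in D$, i.e.\ that $m''$ actually \emph{dominates} $q$ rather than merely satisfying the two coordinate inequalities; this requires $m''\neq q$. I would dispatch it with the equality argument: if $m''=q$, then $m''.x=q.x$ together with $m''.x\le m.x\le q.x$ forces $m.x=q.x$, and symmetrically $m.y=q.y$, whence $m=q$, contradicting $m\in D$ since domination is defined to require distinctness. Thus $m''\prec q$, i.e.\ $m''\in D$, which closes the contradiction above. The whole proof therefore reduces to the principle ``the lexicographically minimal dominator of a point is Pareto-optimal,'' plus this routine check of the degenerate case where a candidate coincides with $q$.
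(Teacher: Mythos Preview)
Your proof is correct and follows essentially the same route as the paper: both establish the general principle that the lexicographically smallest element of $M$ dominating a given query point must be Pareto-optimal, and then apply it to $q=(c'.x-\varepsilon,\,c.y-\varepsilon)$. Your treatment is in fact slightly more careful, since you explicitly handle the degenerate case $m''=q$, whereas the paper's argument leaves this implicit.
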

\begin{proof}
We first argue that for any $m \in M$, the  smallest point $p \in M$  dominating $m$ (or being equal to $m$) is  part of the Pareto sum $C$.  Assume otherwise for contradiction. Then there is a point $p' \in C$ that dominates $p$ and thus also $m$. But in this case $p'$ is smaller than $p$ which contradicts the choice of $p$ as smallest element to dominate $m$.

Now, if there is any point  $m \in M$ that dominates the dummy point $(c'.x-\varepsilon,c.y-\varepsilon)$ the above argumentation applies. 
\end{proof}
Clearly, $M_{11}$ and $M_{nn}$ are always part of the Pareto sum, as those are the points with smallest global $x$-value and $y$-value, respectively. All other elements in $C$ must have an $x$-value in the open interval $(M_{11}.x, M_{nn}.x)$. Thus, if we have an oracle  that returns the smallest point $m \in M$ (with respect to lexicographic ordering) in a given range $[x_{\min}, x_{\max}) \times [y_{\min}, y_{\max})$, we can compute $C$ based on Lemma \ref{lem:successive} as follows. We initialize $C = M_{11}, M_{nn}$ and $x_{\min} = M_{11}.x, x_{\max} = M_{nn}.x, y_{\min} = M_{nn}.y, y_{\max}= M_{11}.y$. Then we query the oracle to get the smallest point $m$  in the respective range. If such a point does not exist, we abort. Otherwise  we add the point $m$ to $C$ and set $x_{\min} = m.x, y_{\max} = m.y$ before repeating the process. Figure \ref{fig:succ} illustrates the core concept.
\begin{figure}
\centering
     \includegraphics[width=0.55\textwidth]{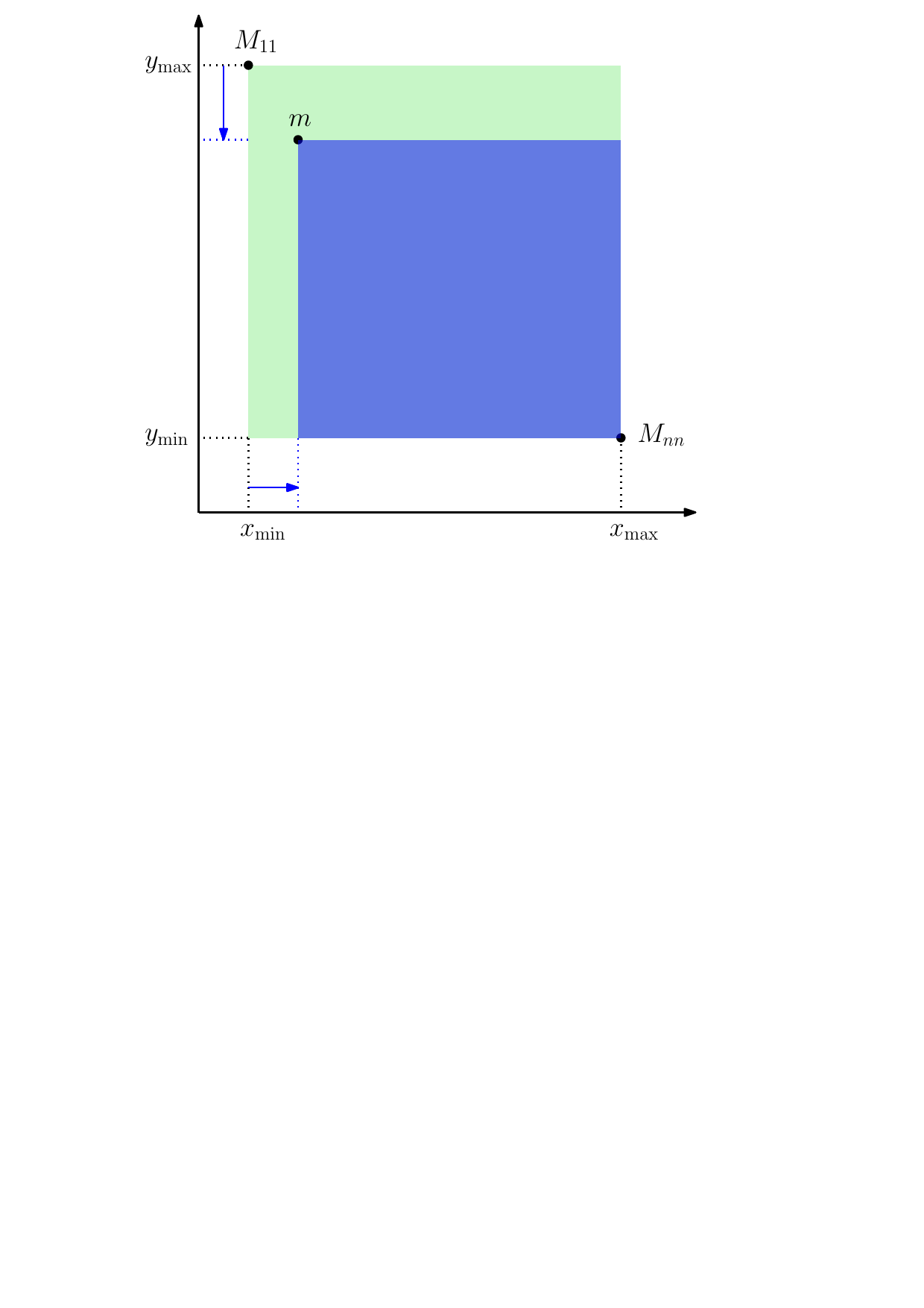}
      \caption{Initial search range (green rectangle) spanned by $M_{11}$ and $M_{nn}$. The range-minimum element $m$ then leads to a reduction of $y_{\max}$ and an increase of $x_{\min}$ (blue arrows) which tightens the search range for the next element of $C$.}
    \label{fig:succ}
\end{figure}

Thus, the algorithm discovers the points in $C$ one by one in increasing order of their $x$-values (except for $M_{nn}$ which is known from the start) using $k$ calls to the range-minimum oracle.  A naive oracle implementation would be to check all points in $M$ for containment in the range  and to keep track of the minimum  among them.  Then each call to the oracle costs $\mathcal{O}(n^2)$ and the overall running time of the successive algorithm would be $\mathcal{O}(n^2  k)$. Next, we describe how to implement the oracle more efficiently.

\subsubsection{Successive   Binary Search (SBS)} \label{sec:CBS}
In the BS approach described in Section \ref{sec:BS}, we use   two binary searches per column of $M$ to check for a query point $m \in M$ whether an element dominating $m$ exists in that column. We can use the same concept to implement a range-minimum oracle: For each column, we identify via binary searches the first position $f_x$ with an $x$-coordinate larger or equal to $x_{\min}$, the last position $l_x$ with an $x$-coordinate smaller than $x_{\max}$, the first position $f_y$ with a $y$-coordinate smaller than $y_{\max}$, and the last position $l_y$ with a $y$-coordinate larger or equal to $y_{\min}$. If $[f_x,l_x] \cap [f_y,l_y] \neq \emptyset$, we return $\max(f_x,f_y)$. The entry at that position is  the smallest point dominating $m$ in the column. Keeping track of the smallest returned point over all columns provides the desired result in $\mathcal{O}(n \log n)$ per oracle call.  

\begin{corollary}
Successive BS runs in $\mathcal{O}(nk\log n)$ time using $\mathcal{O}(n+k)$ space.
\end{corollary}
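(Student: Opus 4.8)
The statement to prove is the corollary that Successive Binary Search (SBS) runs in $\mathcal{O}(nk \log n)$ time using $\mathcal{O}(n+k)$ space. The plan is to combine the correctness of the successive framework (Lemma~\ref{lem:successive}) with the cost analysis of the range-minimum oracle just described. First I would recall that the successive algorithm performs exactly $k$ rounds: it starts with $C = \{M_{11}, M_{nn}\}$ — both provably in $C$ since they minimize the global $x$- and $y$-coordinates — and in each subsequent round it discovers one new element of $C$ via a single oracle call, by Lemma~\ref{lem:successive}. Since $|C| = k$ and two elements are known from the start, at most $k-1$ oracle calls suffice (and the final call returns "no element exists", terminating the loop), so the number of oracle calls is $\mathcal{O}(k)$.

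Next I would bound the cost of a single oracle call. For each of the $n$ columns of $M$, the oracle performs four binary searches — for $f_x, l_x, f_y, l_y$ — over the $n$ entries of that column. Crucially, no column entry needs to be materialized in advance: the entry $M_{ij} = A_i + B_j$ can be computed in $\mathcal{O}(1)$ from the sorted input arrays $A$ and $B$, so each binary search costs $\mathcal{O}(\log n)$ time and $\mathcal{O}(1)$ extra space. Computing $[f_x,l_x] \cap [f_y,l_y]$, testing emptiness, and (if nonempty) extracting the candidate $M_{\max(f_x,f_y),\,j}$ is $\mathcal{O}(1)$ per column. Maintaining the lexicographically smallest candidate seen across all columns adds only a constant-time comparison per column. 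Hence one oracle call costs $\mathcal{O}(n \log n)$ time and $\mathcal{O}(1)$ working space beyond the storage of the input and of $C$.

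Multiplying the per-call cost by the number of calls gives a total running time of $\mathcal{O}(k \cdot n \log n) = \mathcal{O}(nk\log n)$; an additive $\mathcal{O}(n\log n)$ for presorting $A$ and $B$ (if needed) is absorbed since $k \geq 1$. For space, we store the input sets ($\mathcal{O}(n)$), the output set $C$ ($\mathcal{O}(k)$), and $\mathcal{O}(1)$ auxiliary variables per oracle call, which are reused across calls — no column or row of $M$ is ever stored in full. This yields $\mathcal{O}(n+k)$ space, and if the elements of $C$ are streamed rather than retained, $\mathcal{O}(n)$. I do not expect a genuine obstacle here: the only point requiring a little care is confirming that the four one-dimensional searches correctly identify, within a column, the lexicographically smallest entry lying in the query rectangle $[x_{\min},x_{\max}) \times [y_{\min},y_{\max})$ — namely that such an entry exists iff $[f_x,l_x]\cap[f_y,l_y]\neq\emptyset$ and, when it does, it sits at row index $\max(f_x,f_y)$ — which follows from each column being a sorted Pareto set, so that $x$-coordinates increase and $y$-coordinates decrease monotonically along the column.
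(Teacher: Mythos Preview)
Your proposal is correct and mirrors the paper's own reasoning: the paper does not give a separate proof of this corollary but derives it directly from the preceding description, noting that each of the $k$ oracle calls costs $\mathcal{O}(n\log n)$ via four binary searches per column with entries computed on demand. Your write-up simply spells out the same argument in more detail, including the space accounting and the justification that $\max(f_x,f_y)$ is the correct candidate; nothing substantive differs.
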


\subsubsection{Successive Sweep Search (SSS)} \label{sec:CSS}
To improve the oracle time of SBS, we observe that the binary searches in the columns are somewhat redundant. If $M$ was fully available, we could apply fractional cascading \cite{chazelle2005fractional} to the column vectors. This would reduce the running time to compute the positions $f_x, l_x, f_y, l_y$ in all columns from $\mathcal{O}(n \log n)$ to $\mathcal{O}(n)$. Thus, the $k$  oracle  calls   cost $\mathcal{O}(nk)$. Unfortunately, computing $M$ and the data structure for fractional cascading requires space and time in $\Theta(n^2)$. 
Fortunately, we can also achieve linear oracle time without the need to access $M$ as a whole. Based on the structure of $M$, we know that for an entry $M_{ij}$ all elements $M_{st}$ with $s \geq i$ and $t \geq j$ have a larger $x$-coordinate than $M_{ij}$ but a smaller $y$-coordinate. Vice versa, all elements in $M_{st}$ with $s \leq  i$ and $t \leq j$ have a smaller $x$-coordinate than $M_{ij}$ but a larger $y$-coordinate. This implies, for example, that the position of $f_x$ in some column cannot be larger than the position of $f_x$ in the neighboring column to its left. Similar relationships hold for the positions of $l_x, f_y$ and $l_y$ in neighboring columns. Accordingly, we can find the respective column values by a single left-to-right sweep, where the search path forms a monotone staircase structure and is thus bounded in length by $2n$. 

Even better, we can have a single unified sweep to find the range-minimum $m$ in linear time: We start at $M_{n1}$, that is, the last entry of the first column. Whenever we enter a new column $j$, we apply  upwards linear search in that column until we reach an entry $M_{ij}$ where either $M_{ij}.x > x_{\min} $ and $M_{i-1j}.x \leq x_{\min}$ or   where $M_{ij}.y < y_{\max} $ and $M_{i-1j}.y \geq y_{\max}$. Thus, we get $i = \max(f_x, f_y)$; except if the entry we start from already has a too small $x$-value or a too large $y$-value or both which means that the column contains no point in the query range. In the former case, we check whether $M_{ij}$ is contained in the range.  If the check is passed, $M_{ij}$ is a valid candidate for the range-minimum $m$. We keep track of the smallest viable candidate over the course of the algorithm. We then go from element $M_{ij}$ to its right neighbor $M_{ij+1}$ and  proceed with the new column as described above. After processing the last column, we return the current $m$ as the range-minimum element. An example for the procedure of the range-minimum oracle using the sweep search is given in Figure \ref{fig:oracle}. 

\begin{algorithm}
\DontPrintSemicolon
\SetKwFunction{SSS}{SuccessiveSweepSearch}
\SetKwFunction{sweep}{Sweep}
\caption{Successive Sweep Search}
\label{algo:sss}

\SetKwProg{Fn}{Function}{}{}
\SetKwFunction{Sweep}{Sweep}
\SetKwFor{While}{while}{do}{end while}
\SetKwFor{If}{if}{then}{end if}

define $M[r, c]$ as $A[r] + B[c]$\;
\BlankLine
\Fn{\SSS{$A, B$}}{
    $C \leftarrow \emptyset$\;
    $p \leftarrow M[1,1]$\;
    \While{$p \neq \textnormal{null}$}{
        $C \leftarrow C \cup \{p\}$\;
        $R \leftarrow [p.x, \infty) \times [-\infty, p.y)$\;
        $p \leftarrow$ \sweep{$A, B, R$}\;
    }
    \Return{$C$}
}
\BlankLine
\Fn{\sweep{$A, B, R$}}{
    $p \leftarrow \textnormal{null}$\;
    $i \leftarrow n$\;
    $j \leftarrow 1$\;
    \While{$j<n$}{
     \While{$M[i, j] \in R$ \textbf{and} $i \geq 2$}{
        $i--$\;
    }
    $i++$\;
    \If{$i \leq n$ \textbf{and} $M[i, j] \in R$ \textbf{and} $(p == \textnormal{null}$ \textbf{or} $M[i, j] < p)$}{
        $p \leftarrow M[i, j]$\;
    }
    $j++$\;
    }
    \Return{$p$}\;
}
\end{algorithm}

\begin{figure}
    \centering
    \includegraphics[width=0.7\textwidth]{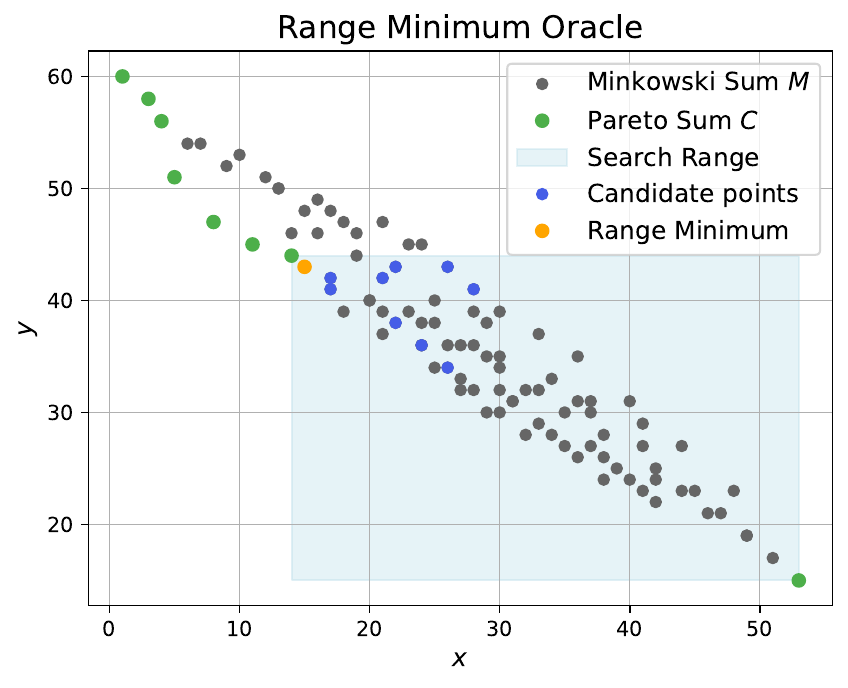} \\
    \includegraphics[width=1\textwidth]{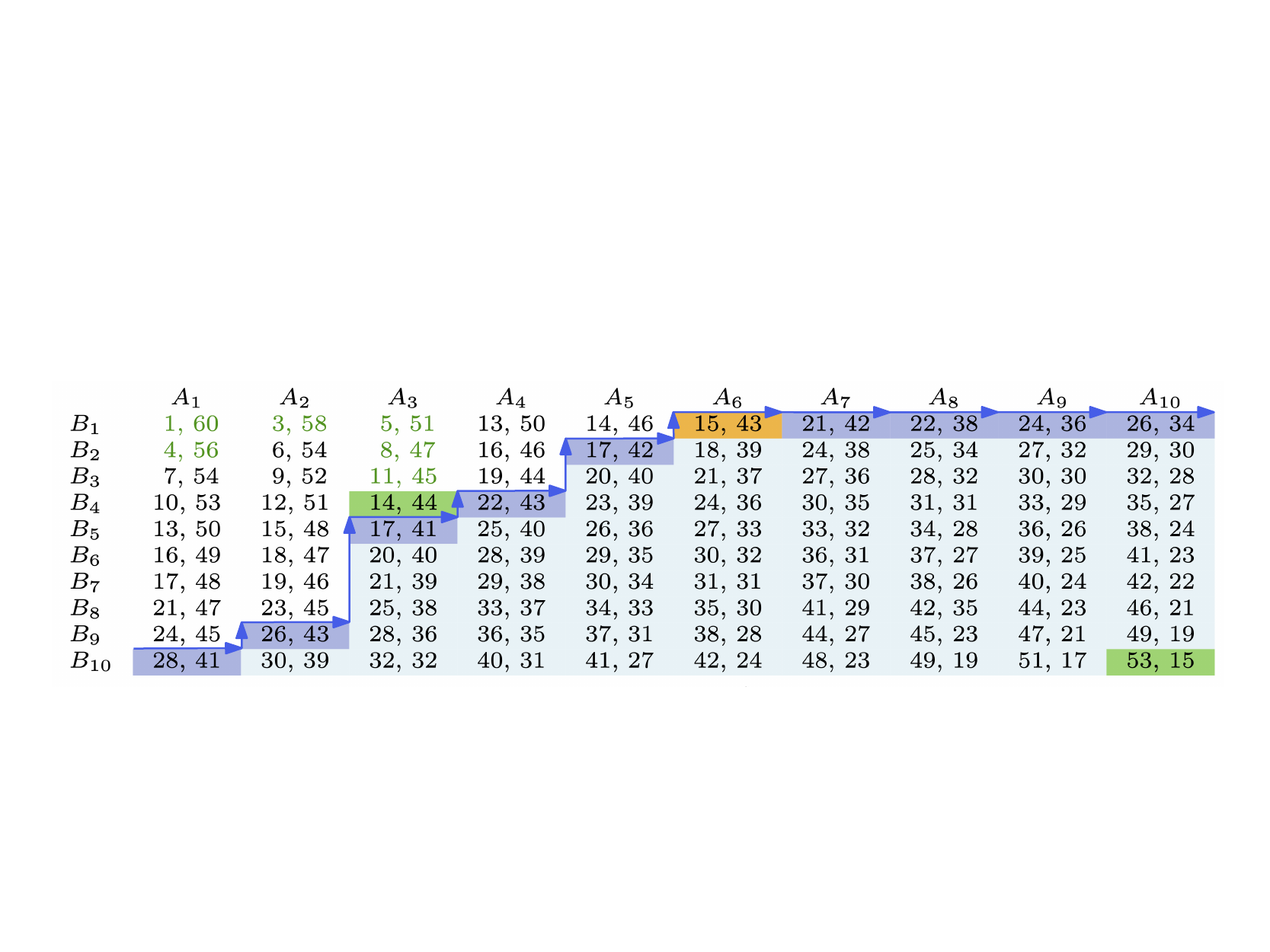}
    \caption{Exemplary traversal of the matrix during one call to the range minimum oracle for the Successive Sweep Search algorithm. The search range (blue shaded area) is defined by the range minimum $m'=(14,44)$ returned by the previous oracle call and the last point $M_{nn}=(53,15)$ in the Minkowski sum $M$. Candidates for the range minimum are located along a monotone staircase path in the matrix (blue path). One single left-to-right sweep suffices to determine the range minimum (orange) $m=(15,43)$ for the given range.}
    \label{fig:oracle}
\end{figure}

\begin{lemma}\label{lem:ss}
The sweep algorithm computes the  range-minimum in $\mathcal{O}(n)$.
\end{lemma}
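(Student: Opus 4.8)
The plan is to show that the \texttt{Sweep} procedure in Algorithm~\ref{algo:sss} spends only $\mathcal{O}(n)$ work in total by a potential/amortization argument on the pair $(i,j)$ that tracks the current matrix cell. First I would fix the query range $R = [x_{\min},\infty) \times [-\infty, y_{\max})$ and recall the structural monotonicity of $M$ established just above the statement: along any column, $x$ increases and $y$ decreases as the row index grows; along any row, $x$ increases and $y$ decreases as the column index grows. From this I would derive the key invariant that drives the sweep: when we finish processing column $j$ at row $i_j$ (the row $\max(f_x,f_y)$ for that column, or the top/bottom sentinel in the degenerate cases), the corresponding row index $i_{j+1}$ for column $j+1$ satisfies $i_{j+1} \le i_j$. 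Concretely, $f_x$ cannot increase from one column to the next because $M_{i,j+1}.x \ge M_{i,j}.x$ for every $i$, so the first row with $x$-value $\ge x_{\min}$ can only move up (or stay); symmetrically $f_y$ cannot increase because $M_{i,j+1}.y \le M_{i,j}.y$ for every $i$, so the first row with $y$-value $< y_{\max}$ can only move up (or stay). Hence the maximum of the two also only moves up.

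Next I would set up the amortized count. The inner \texttt{while} loop decrements $i$; after it exits we do one $i{+}{+}$, then $j{+}{+}$. Define the potential $\Phi = i$ (the current row index, between $1$ and $n$). When we enter column $j$ the row index is $i_{j-1}+1 \le n$ by the invariant above (using $i_0 := n$ as the starting value, since the sweep initializes $i \leftarrow n$, $j \leftarrow 1$). The inner loop performs $d_j := i_{\text{enter}} - i_j$ decrements, where $i_{\text{enter}} = i_{j-1}+1$ and $i_j$ is the row where the loop stops, so it does $d_j + O(1)$ work in column $j$ and lowers $\Phi$ by $d_j$; the subsequent $i{+}{+}$ raises $\Phi$ by $1$. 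Summing over the at most $n$ columns, the total number of inner-loop iterations is $\sum_j d_j = \sum_j (i_{j-1} + 1 - i_j) = i_0 - i_{n} + (\text{number of columns}) \le n + n = 2n$, because the sum telescopes once we use $i_{j-1} \ge i_j - 1$, i.e. $i_{\text{enter}} = i_{j-1}+1 \ge i_j$, which is exactly the monotone-staircase property. Adding the $O(1)$ bookkeeping per column (the range test, the candidate comparison and update, the $i{+}{+}$ and $j{+}{+}$) gives $\mathcal{O}(n)$ total, and each candidate comparison is a constant-time lexicographic comparison of two points. Each accessed entry $M[i,j] = A[i]+B[j]$ is computed on demand in $O(1)$, so no $\Theta(n^2)$ materialization is incurred.

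I would then briefly handle the boundary/degenerate cases so the telescoping is airtight: if on entering a column the starting cell already has too-small $x$ or too-large $y$, the inner loop does zero decrements and we still do only $O(1)$ work; if the inner loop runs all the way to $i = 2$ and would continue, the guard $i \ge 2$ stops it, so $i_j \ge 1$ always and $\Phi$ stays in range. I would also note that the returned $p$ is indeed the lexicographic minimum over all in-range entries: by the monotonicity, within each column the first in-range row (if any) is the lexicographically smallest in-range entry of that column, since moving down increases $x$, and the sweep inspects exactly that row per column and keeps the running minimum. Combined with Lemma~\ref{lem:successive}, this justifies that the oracle is correct, though the lemma statement here only claims the $\mathcal{O}(n)$ time bound.

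The main obstacle I expect is making the monotonicity of the stopping row $i_j$ fully precise across \emph{all} three ways the inner loop can terminate ($x$-threshold crossed, $y$-threshold crossed, or sentinel $i = 2$ reached) and across the case where a column contains no in-range entry at all — one has to check that in every combination $i_{j-1}+1 \ge i_j$ still holds so the telescoping sum is valid. This is a finite case analysis using the row/column monotonicity of $M$, not a deep argument, but it is the step where an off-by-one or a missed degenerate configuration could break the $2n$ bound, so I would write it out carefully.
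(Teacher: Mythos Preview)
Your approach is correct and essentially the same as the paper's: both rest on the monotone-staircase structure of $M$ to bound the number of accessed cells by $2n$, and your telescoping/potential argument is simply a more explicit rendering of the paper's one-sentence justification (``the interval of elements checked for each column only overlaps with the intervals of all columns to its left in a single row index''). The one difference is emphasis: the paper spends most of its proof on \emph{correctness}---a case analysis showing that for a column $j$ entered at row $i$, no entry with row index $>i$ can be the range minimum, splitting on whether $M_{ij}$ lies in the query range and, if not, which bound ($x_{\max}$, $x_{\min}/y_{\max}$, or $y_{\min}$) is violated---whereas you treat correctness briefly and defer the degenerate cases to your final paragraph; since the lemma asserts that the sweep \emph{computes} the range minimum (not merely that it runs in $\mathcal{O}(n)$), that case analysis is not optional and is exactly what you would need to fill in.
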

\begin{proof}
To prove correctness, we need to argue that for a column $j$ entered at row $i$ and exited at row $i' \leq i$, the range-minimum $m$  can not be an entry $M_{i^*j}$ with $i^* < i'$ or $i^*>i$. Clearly, checking elements in column $j$ with a row index smaller than $i'$ cannot give us any viable candidates, as either their respective $x$-value is too small or their $y$-value is too large by definition.  Hence we only need to consider $i^* > i$.
If $M_{ij}$ is in the query range, then the entries in column $j$ with $i^*>i$ cannot constitute the range-minimum as they all have an $x$-value larger than that of $M_{ij}$. If $M_{ij}$ is not in the query range, we have the  following cases:
\begin{itemize}
    \item $M_{ij}.x > x_{\max}$. But then $M_{i^*j}.x > x_{\max}$ holds as well.
    \item $M_{ij}.x < x_{\min}$ or $M_{ij}.y > y_{\max}$. This case only occurs if $i=n$. Thus there is no $i^* > i$.
    \item $M_{ij}.y < y_{\min}$. But then $M_{i^*j}.y < y_{\min}$ holds as well.
\end{itemize}
Accordingly, if column $j$ contains the range minimum it needs to be an element with a row index in $[i',i]$. If $M_{i'j}$ is in the range, it is clearly the best candidate in column $j$ for the range-minimum $m$. If $M_{i'j}$ is not in the query range, then the same applies to all entries in the same column with larger row index as argued above. Thus, it is sufficient to check $M_{i'j}$ for each column $j$.
The running time is determined by the number of elements in $M$ that are accessed. As the interval of elements checked for each column only overlaps with the intervals of all columns to its left in a single row index, at most $2n$ elements in $M$ are considered in total.
\end{proof}
Based on this sweep search (SS) oracle, we now get a successive algorithm with better running time than SBS. The procedure for SSS is described in Algorithm \ref{algo:sss}.
\begin{corollary}
Successive SS runs in $\mathcal{O}(n\log n + nk)$ using $\mathcal{O}(n+k)$ space.
\end{corollary}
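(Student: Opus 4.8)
The plan is to obtain the corollary by combining three ingredients: the correctness of the successive scheme underlying Lemma~\ref{lem:successive}, which guarantees that Algorithm~\ref{algo:sss} enumerates exactly the $k$ elements of $C$ using one range-minimum query per reported point; the bound of Lemma~\ref{lem:ss}, which certifies that each such query is answered in $\mathcal{O}(n)$ time by the sweep; and the fact that every accessed entry $M[i,j]=A_i+B_j$ can be produced on demand, so that $M$ is never materialized. First I would establish correctness, then bound the running time, and finally the space.

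For correctness I would first record that $M_{11}=A_1+B_1$ is the \emph{unique} point of $M$ of minimum $x$-coordinate, because $A_1$ and $B_1$ are the unique lexicographic minima of the Pareto sets $A$ and $B$, and is therefore non-dominated; symmetrically $M_{nn}=A_n+B_n$ is the unique point of minimum $y$-coordinate and hence also lies in $C$. Writing $c^{(1)},\dots,c^{(k)}$ for the elements of $C$ in lexicographic order, which for a Pareto set is the same as increasing order of $x$-coordinate, I would prove by induction on $t$ that after $c^{(t)}$ has been appended, the search range $R=[c^{(t)}.x,\infty)\times(-\infty,\,c^{(t)}.y)$ used in the loop of Algorithm~\ref{algo:sss} satisfies two properties: (a) if $R$ contains a point of $M$, then the lexicographic minimum of $M$ inside $R$ is exactly $c^{(t+1)}$; and (b) $R\cap M=\emptyset$ precisely when $c^{(t)}=M_{nn}$. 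For (a), let $m^\ast$ be that lexicographic minimum and argue, as in the proof of Lemma~\ref{lem:successive}, that $m^\ast\in C$: if some $p'\in C$ dominated $m^\ast$, then $p'.y\le m^\ast.y<c^{(t)}.y$ would force $p'.x>c^{(t)}.x$ by the Pareto property of $C$, hence $p'\in R$ and $p'$ lexicographically smaller than $m^\ast$, contradicting minimality. Since $m^\ast.y<c^{(t)}.y$, it follows that $m^\ast=c^{(r)}$ with $r>t$; and if $r>t+1$ then $c^{(t+1)}$ would satisfy $c^{(t)}.x<c^{(t+1)}.x<m^\ast.x$ and $c^{(t+1)}.y<c^{(t)}.y$, so $c^{(t+1)}\in R$ and is lexicographically smaller than $m^\ast$, again a contradiction; therefore $m^\ast=c^{(t+1)}$. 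For (b), $M_{nn}$ has the strictly smallest $y$-coordinate in $M$, so no point of $M$ lies in $R$ once $c^{(t)}=M_{nn}$, whereas if $c^{(t)}=c^{(s)}$ with $s<k$ then $c^{(s+1)}\in M$ witnesses $R\cap M\neq\emptyset$. Combined with the fact that \texttt{Sweep} actually returns the lexicographic minimum of $M$ inside the current range (Lemma~\ref{lem:ss}; the degenerate bounds $x_{\max}=\infty$ and $y_{\min}=-\infty$ merely trivialize some of its cases), this shows that Algorithm~\ref{algo:sss} performs exactly $k$ iterations and reports $C$ in full. Verifying this chain of implications is the main obstacle; the complexity bounds are then essentially bookkeeping.

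For the running time, after an initial $\mathcal{O}(n\log n)$ step that brings $A$ and $B$ into lexicographic order (this also subsumes the optional linear-time convex-hull preprocessing of Section~\ref{sec:convex}, and it is the term that dominates when $k$ is a small constant), each of the $k$ iterations performs one call to \texttt{Sweep}, costing $\mathcal{O}(n)$ by Lemma~\ref{lem:ss}, plus $\mathcal{O}(1)$ work to update the range and append the new point to $C$; summing over the iterations gives $\mathcal{O}(nk)$, for a total of $\mathcal{O}(n\log n+nk)$. For the space bound, the only persistent data are the input sets $A$ and $B$, using $\mathcal{O}(n)$, and the output $C$, using $\mathcal{O}(k)$; \texttt{Sweep} keeps only its two indices and the current best candidate, i.e.\ $\mathcal{O}(1)$ auxiliary state, and each entry $M[i,j]=A_i+B_j$ is recomputed when needed and immediately discarded, so the matrix $M$ contributes nothing. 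Hence the space consumption is $\mathcal{O}(n+k)$, and it reduces to $\mathcal{O}(n)$ when the elements of $C$ are streamed out instead of being stored.
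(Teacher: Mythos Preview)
Your proof is correct and follows the same approach as the paper, which treats the corollary as an immediate consequence of the successive framework (Lemma~\ref{lem:successive} and the discussion preceding Algorithm~\ref{algo:sss}) together with the $\mathcal{O}(n)$ oracle bound of Lemma~\ref{lem:ss}; the paper does not spell out a separate proof. Your argument is in fact more detailed than the paper's treatment, in particular the careful induction showing that the lexicographic minimum in the current range is exactly the next Pareto sum element and that termination occurs precisely at $M_{nn}$.
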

In fact, if $k \in o(\log n)$, the running time is dominated by the initial sorting step of the elements in $A, B$. For $k \in o(n)$, we achieve a subquadratic running time. Note that we might modify the algorithm to directly stream single output points instead of storing and returning the Pareto sum. To this end, line 6 where the current range-minimum $p$ is added to the Pareto sum $C$ is altered such that the algorithm outputs $p$ at this point. With this modification we obtain a linear space consumption for the SSS algorithm.

For  acceleration of  sweep search in practice, we observe that if we enter a column at row $i$ and confirm for some value $i'<i$ that $M_{i'j}$ is still feasible with respect to $x_{\min}$ and $y_{\max}$, we do not have to check intermediate rows to get the correct range-minimum by virtue of Lemma \ref{lem:ss}. Similarly, if we have not found any $M_{ij}$ in column $j$ with $M_{ij}.x \geq x_{\min}$ and $M_{ij}.y < y_{\max}$ and the same inequalities apply to some $M_{ij'}$ with $j'>j$, we do not need to check intermediate columns for range-minimum candidates. Thus, in both cases we can introduce a skip threshold $\Delta > 1$ and check for $i'=i-\Delta$ or $j'=j+ \Delta$, respectively, whether the necessary conditions apply. If that is the case we skip intermediate rows or columns and then try to skip ahead again. If skipping is no longer possible, we simply fall back to linear search. Accordingly, in the worst case, we check at most one superfluous element for each row and column. This does not increase the asymptotic running time of the oracle but might reduce its running time in practice if skipping is successful.  

\subsection{Remarks}
For all new algorithms presented in this section (BF, BS, SC, SBS, SSS), we showed that their space consumption is in $\mathcal{O}(n+k)$. The term $k$, of course, is due to having to store the output set $C$. However, if we consider a streaming output model, the space consumption is in $\mathcal{O}(n)$ for all algorithms. BF and BS do not need access to the temporary set $C$ at all, SC only requires access to the currently  last element of $C$, and the successive algorithms again do not consider $C$ but only require the knowledge about the current search range that is induced by $C$. All additional steps and data structures used in these algorithm can be implemented in space linear in $n$. Note that for NonDomDC, the streaming model does not impact its space consumption, as the algorithm needs to combine and filter partial solutions.

Furthermore, we remark that one can easily combine SSS and SC to get a running time of $\mathcal{O}(\min\{nk + n\log n, n^2 \log n\})$ and thus get the best of both, namely a subquadratic running time for sublinear output sizes and a close-to-quadratic running time for larger outputs. Here, we first  conduct up to $n$ range searches in the SSS algorithm. If the full Pareto sum is found in this step, we are done. Otherwise, we have established $k \in \Omega(n)$ in time $\mathcal{O}(n^2)$. We can then simply run SC to get a total running time of $\mathcal{O}(n^2 \log n)$.

\section{Experimental Evaluation}
We  implemented the seven algorithms for Pareto sum computation listed in Table \ref{tab:overview} in C++: The two existing approaches,  namely the Kirkpatrick-Seidel algorithm (KS) \cite{kirkpatrick1985output} and  NonDomDC (ND) \cite{klamroth2022efficient}, the three base algorithms (BF, BS, SC), and the two successive algorithms (SBS, SSS). Furthermore, we  implemented the algorithm based on Pareto trees from Section \ref{sec:ND_ParetoTrees} and engineered versions of the algorithms above (including PBS and SSS with $\Delta$-skipping). For KS, we implemented the simpler (and thus faster) variant, where the output  size $k$ (used for partitioning) is given as an input. We simply compute $k$ with one of our other algorithms and then feed the result into  KS.  For ND, we implemented the sequential and doubling variants described in Section \ref{sec:ND_ParetoTrees} which use the Pareto trees to maintain the intermediate Pareto sums. As benchmark data  we use randomly generated inputs as well as real inputs. Both types of data sets are described in more detail below.  
The experiments were conducted on a single core of a 4.5 GHz AMD Ryzen 9 7950X 16-Core Processor with 188 GB of RAM.

\subsection{Random Input Generators}
We first describe four different ways to obtain randomly generated Pareto sets. The type of generator used and the resulting distribution of the points heavily impacts the Pareto sum size and the running time of the algorithms, as will be shown below.
\subsubsection{Naive Generator}
For the naive Pareto set generator we firstly sample random points with $x$- and $y$-coordinates in the range $[0,n]$. Then we apply a Pareto-filtering based on the Sort \& Compare algorithm by sorting the entire random set according to the $x$-values. We process the points in increasing order of their $x$-values and compare the current point $p$ to the last one added to the Pareto set $A$. The point $p$ is added to the Pareto set if it is not dominated by the last one in $A$.

\begin{figure}
\centering
    \includegraphics[width=0.6\textwidth]{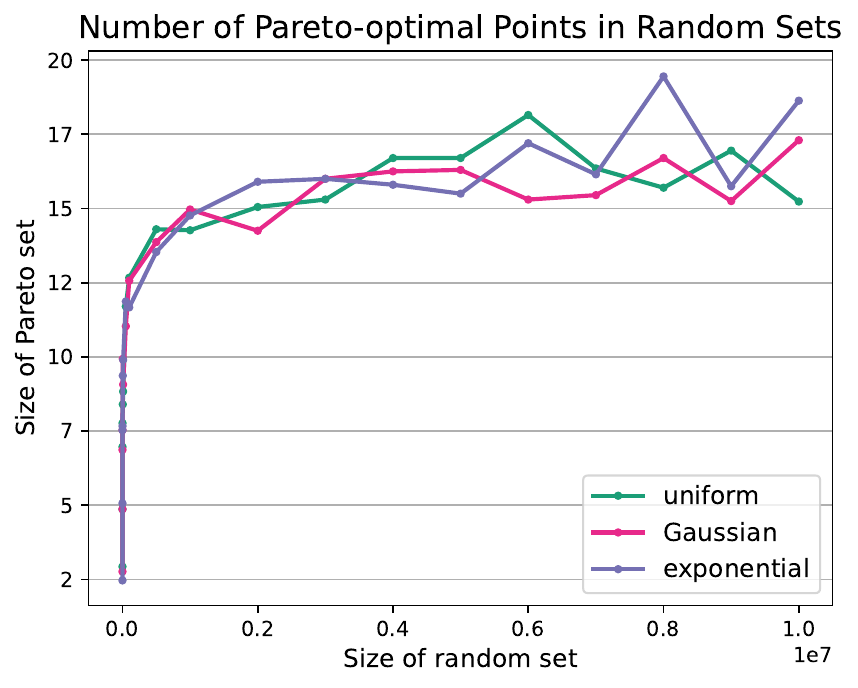}
    \caption{The number of Pareto-optimal points in randomly generated two-dimensional sets of up to 10 million points following the uniform, Gaussian, and exponential distribution. }
    \label{fig:paretoInRandom}
\end{figure}

In Figure \ref{fig:paretoInRandom} we sample random sets of up to 10 million points. The sampling follows the uniform, Gaussian or exponential distribution. After filtering, only a very small subset of Pareto-optimal points remains. The resulting Pareto sets contain less than 20 points for all three distributions. However, for the evaluation of the computation of Pareto sums these small input sets do not suffice. Thus, we propose further generators that yield larger, more practical input Pareto sets. 

\subsubsection{Incremental Generator}
The idea behind this generator is to build the Pareto set from the empty set by adding points that do not violate the Pareto-optimality of the current set. We start by sampling an $x$- and $y$-value in the given range and adding the first point $(x,y)$ to the Pareto set $A$. Until reaching the required size, we incrementally add points $p$ to the Pareto set $A$ for which the set $A \cup \{p\}$ is also a Pareto set. 

\begin{figure}
    \includegraphics[width=0.33\textwidth]{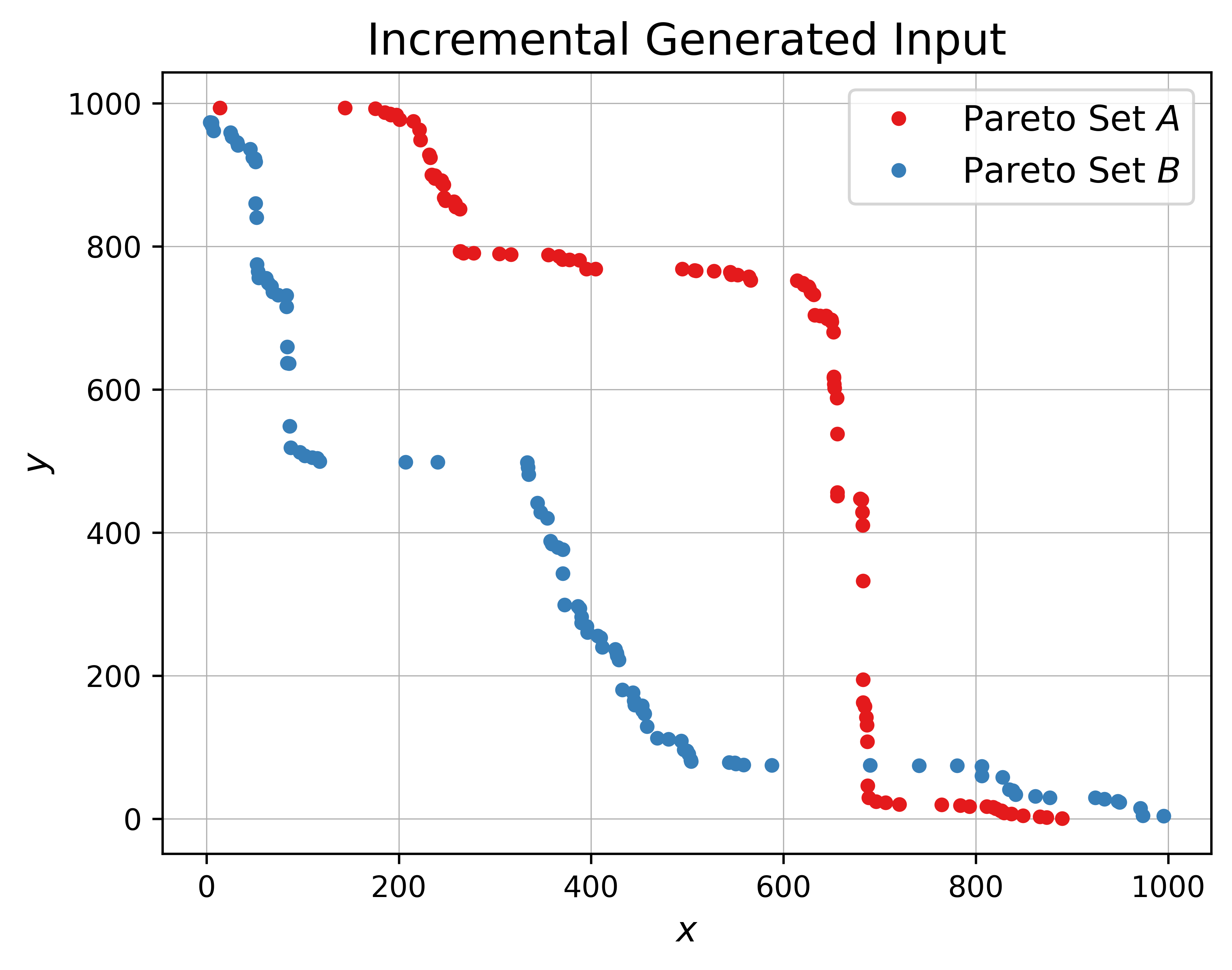}\hfill
    \includegraphics[width=0.33\textwidth]{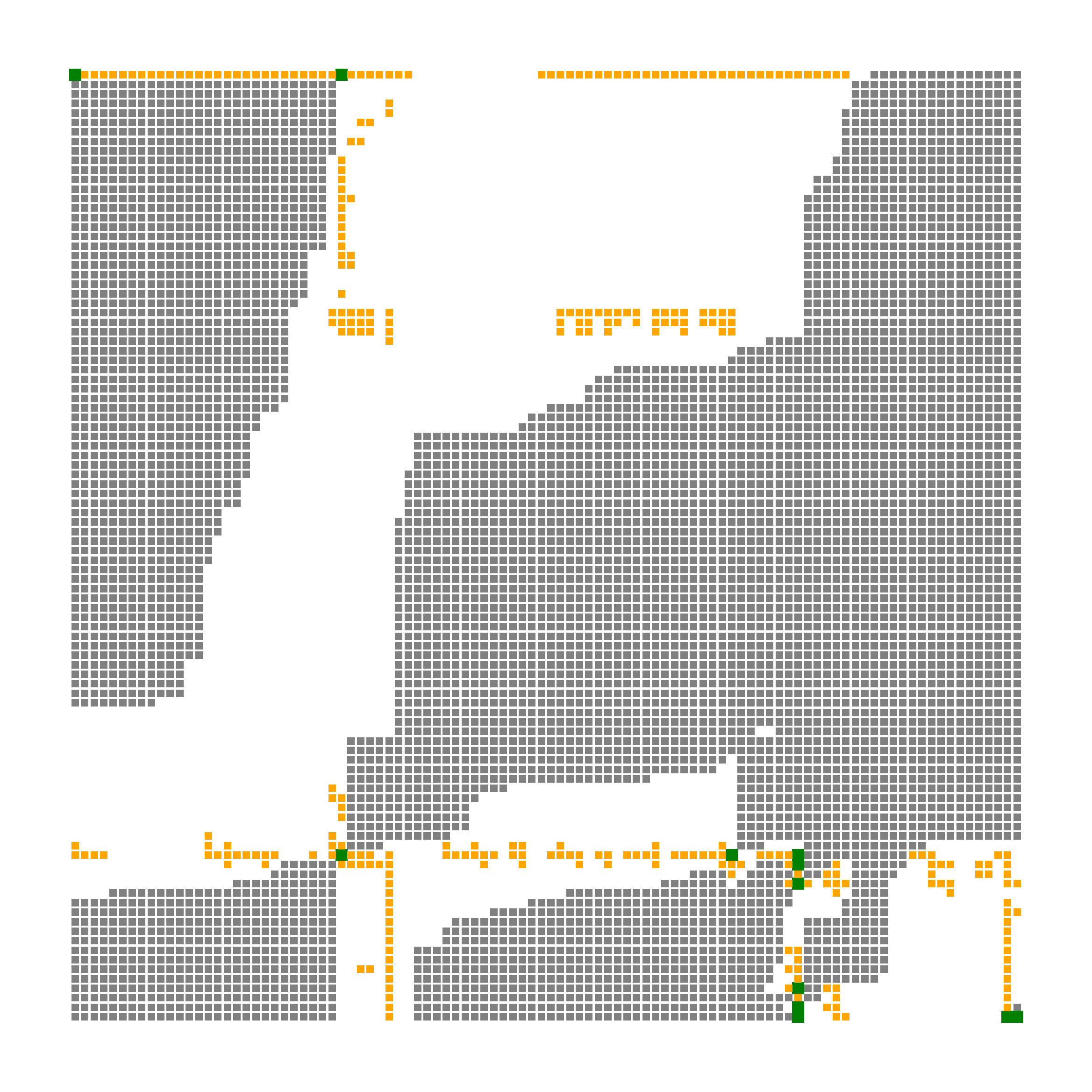}\hfill
    \includegraphics[width=0.33\textwidth]{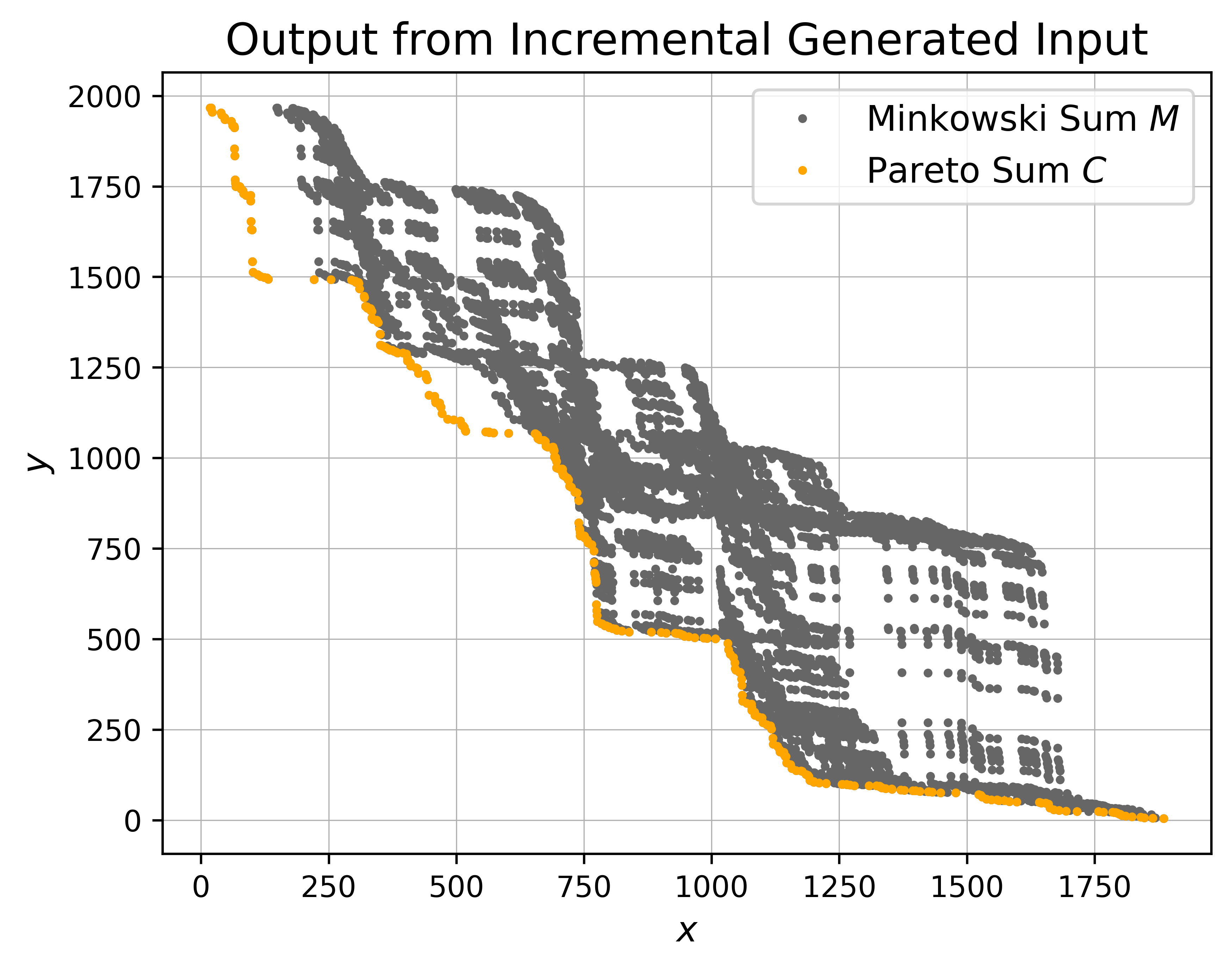}\\ 
    \caption{Example input Pareto sets $A$ and $B$ of size 100 generated by the incremental generator (left) and the schematic Minkowski matrix $M$ (middle) as described in Figure \ref{fig:priority}. The right image shows the Minkowski sum and Pareto sum using the $x$- and $y$-coordinates.}
    \label{fig:incrementalGen}
\end{figure}

The two input Pareto sets in Figure \ref{fig:incrementalGen} contain 100 points and are incrementally generated using the uniform distribution. This input set generation is time-consuming since the existing Pareto set already dominates a vast portion of the area in the range. Hence, it requires quite some time to find the next non-dominated point to add to the Pareto set. Obtaining large Pareto sets with this generator is impractical. 

\subsubsection{Sorted Generator}
The sorted generator leverages the fact that when sorting the points in a Pareto set by their $x$-values in increasing order, the $y$-values are already sorted decreasingly. We generate input Pareto sets by sampling two sequences of unique values in a given range. The first sequence is sorted increasingly and the second decreasingly. The values are assigned to the points such that the first point has the smallest $x$- and the largest $y$-value and the last point has the largest $x$-value and the smallest $y$-value. Since we do not sample any dominated points, the sorted Pareto set generator is more efficient than the previous generators. Hence, we can easily generate large Pareto sets for the evaluation of the proposed algorithms.

We consider uniform, Gaussian and exponentially distributed samples for the $x$- and $y$-values. In addition, we provide a variant of Pareto sets $A$ and $B$ where the respective $y$-coordinates are drawn from vastly different intervals, namely with upper bounds $\sqrt{n}$ and $n^2$ with $n$ being the number of points to be sampled. We call this a shifted distribution. Figure \ref{fig:sortedGen} shows example instances for each distribution.

\begin{figure}
    \includegraphics[width=0.33\textwidth]{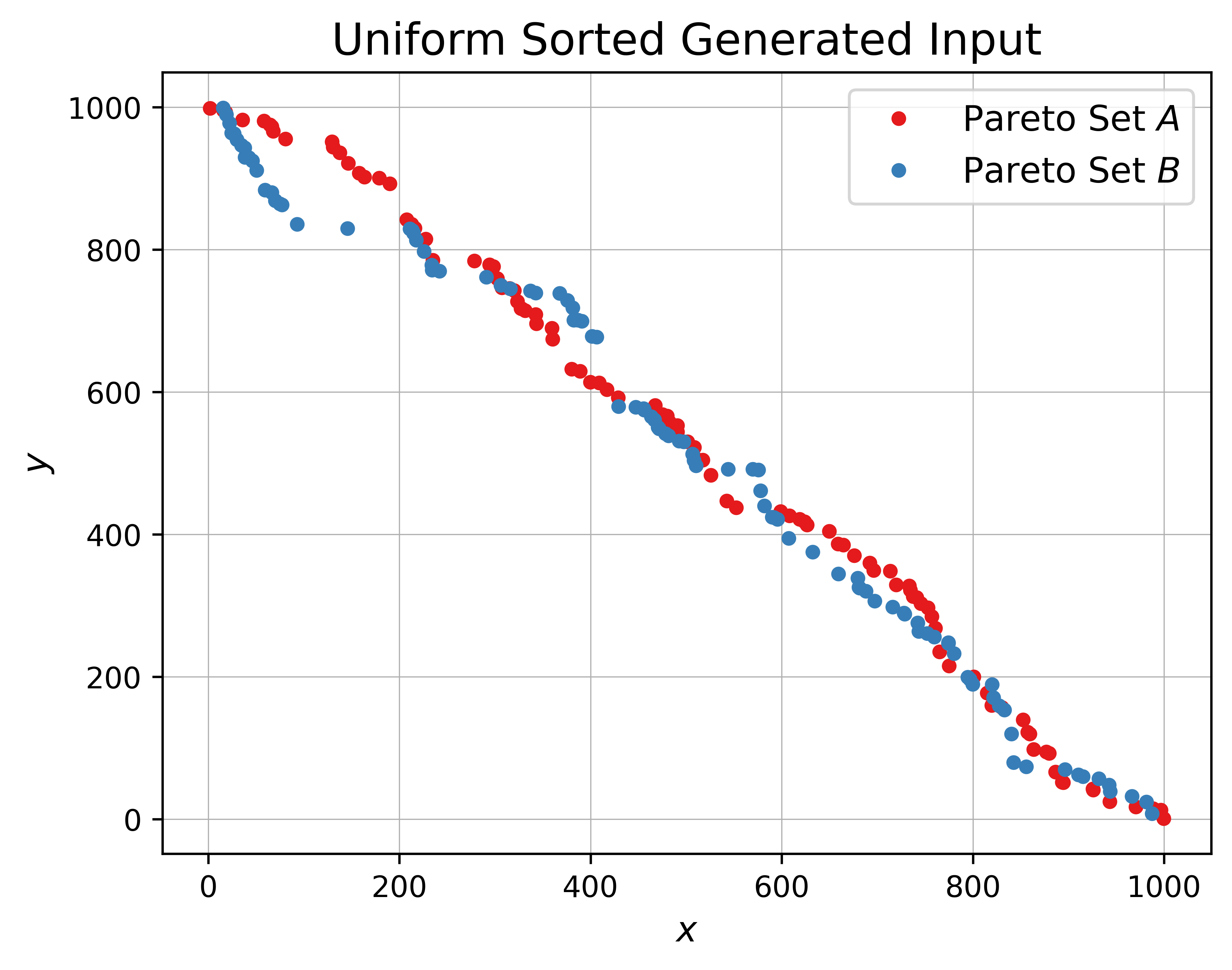}\hfill
    \includegraphics[width=0.33\textwidth]{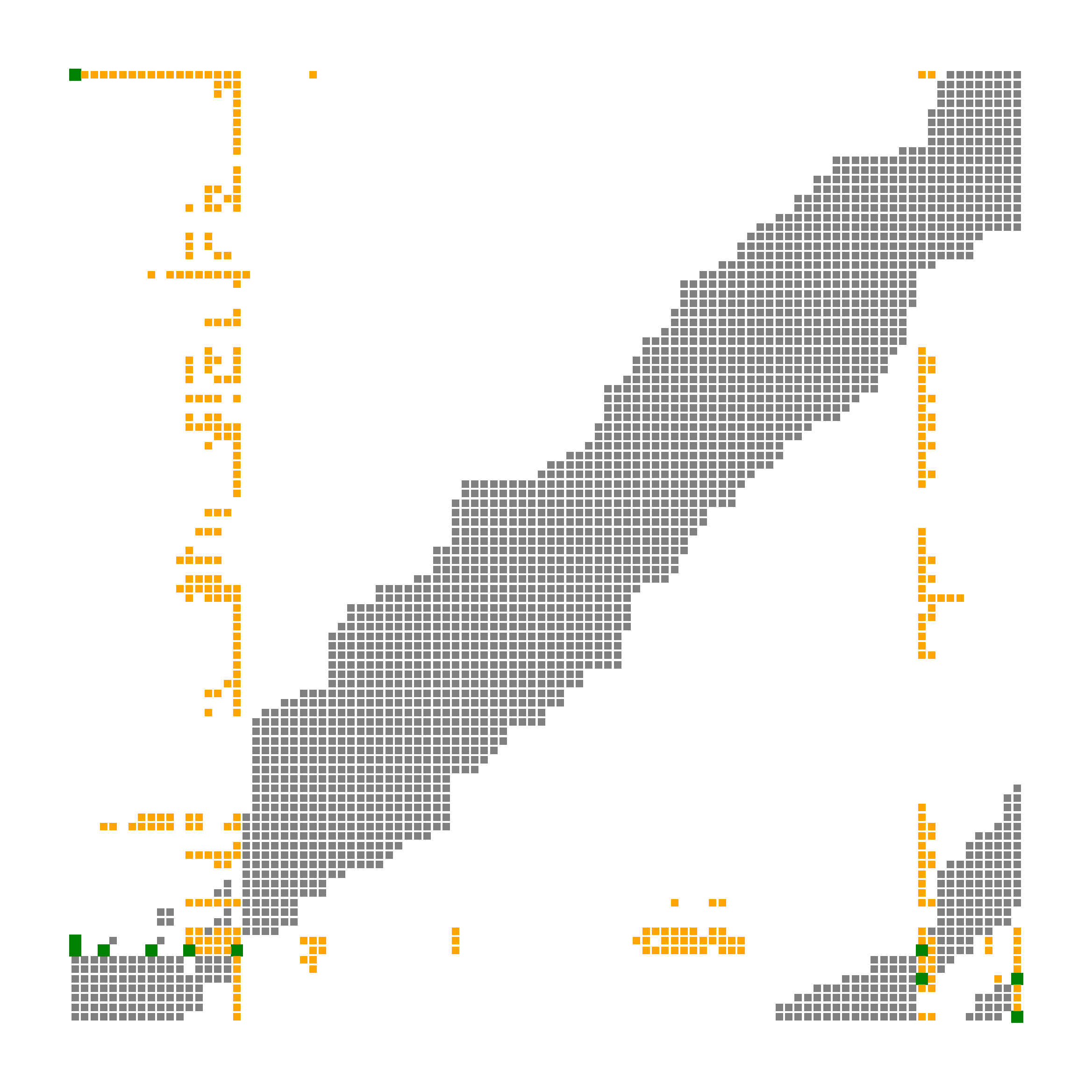}\hfill
    \includegraphics[width=0.33\textwidth]{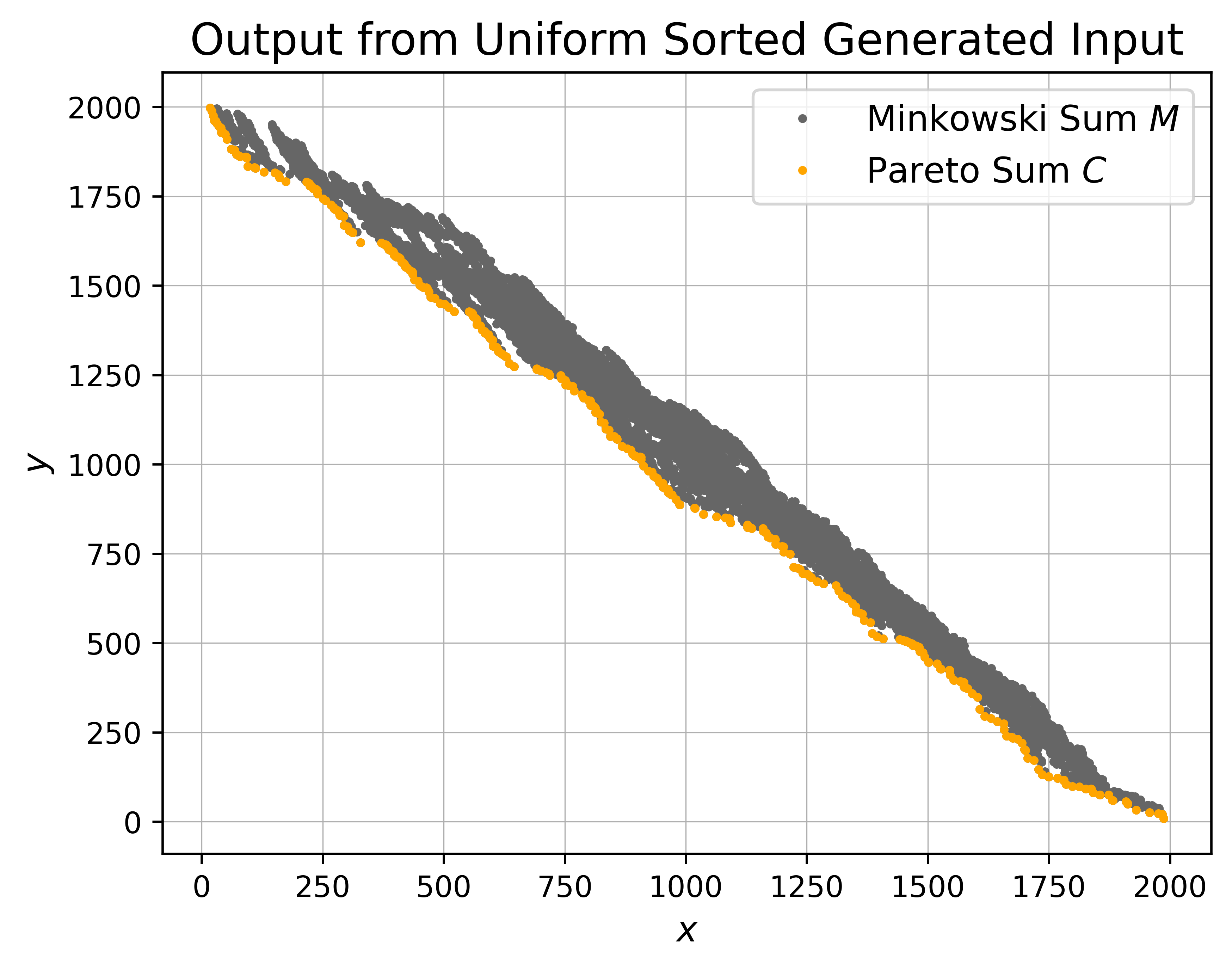}\\ 
    \includegraphics[width=0.33\textwidth]{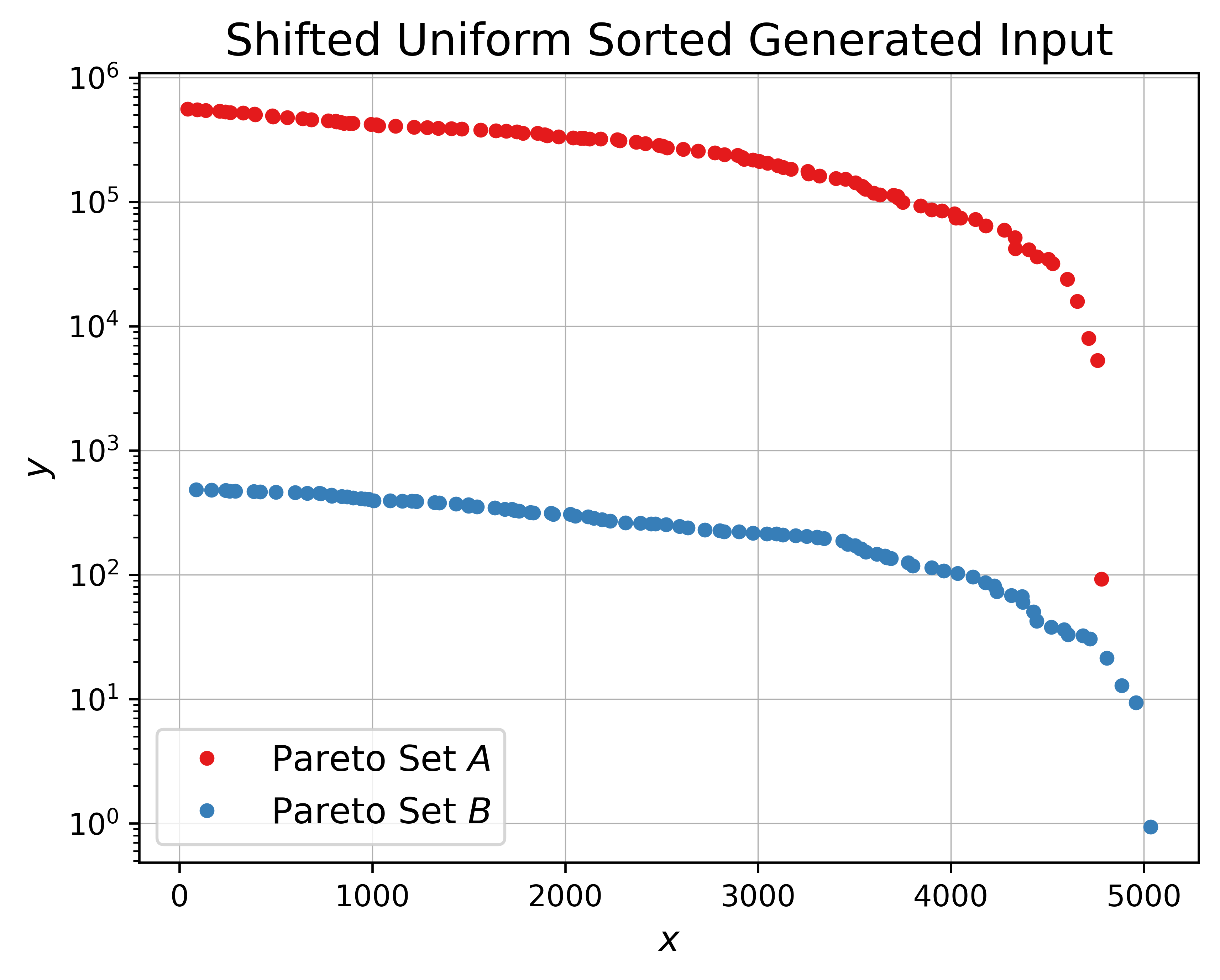}\hfill
    \includegraphics[width=0.33\textwidth]{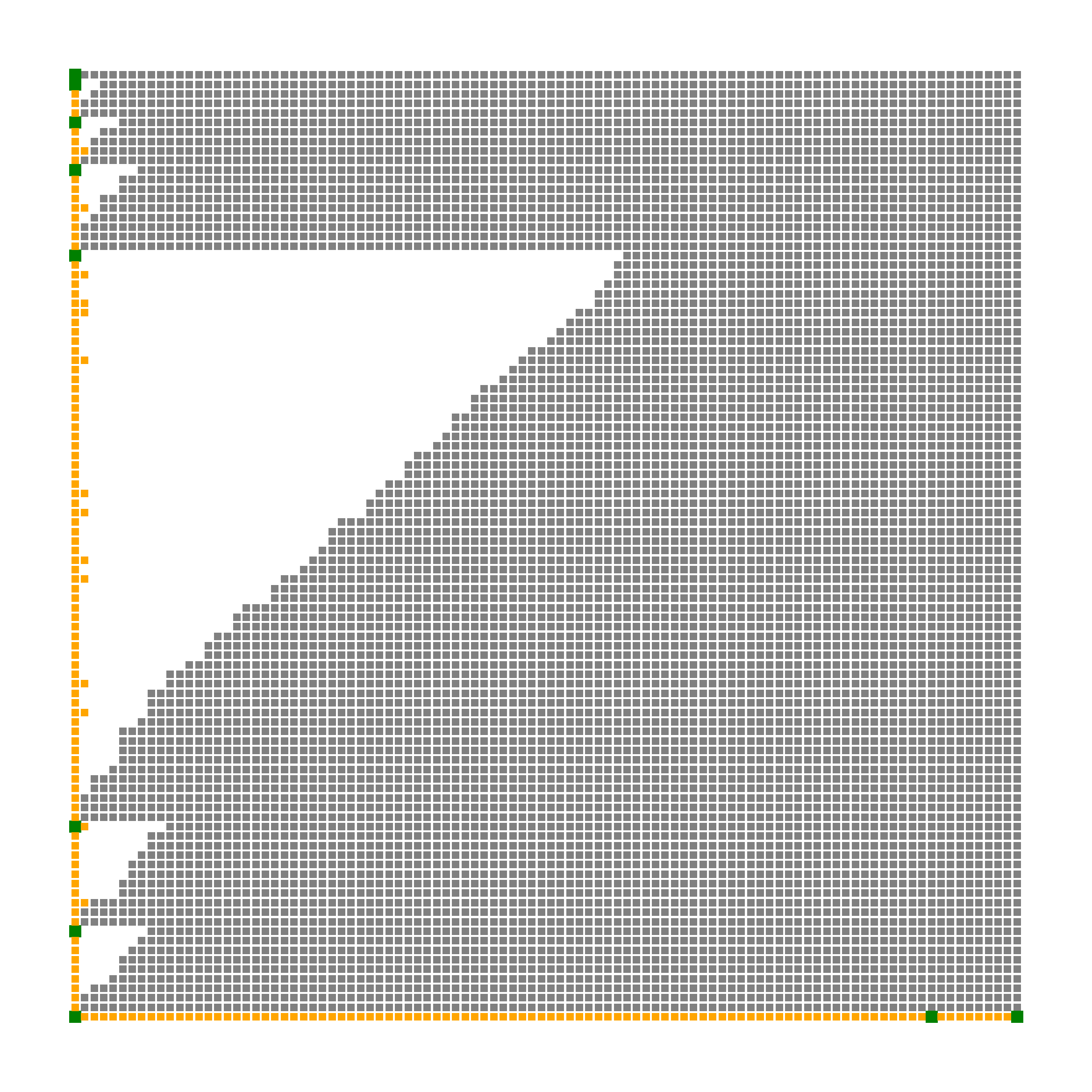}\hfill
    \includegraphics[width=0.33\textwidth]{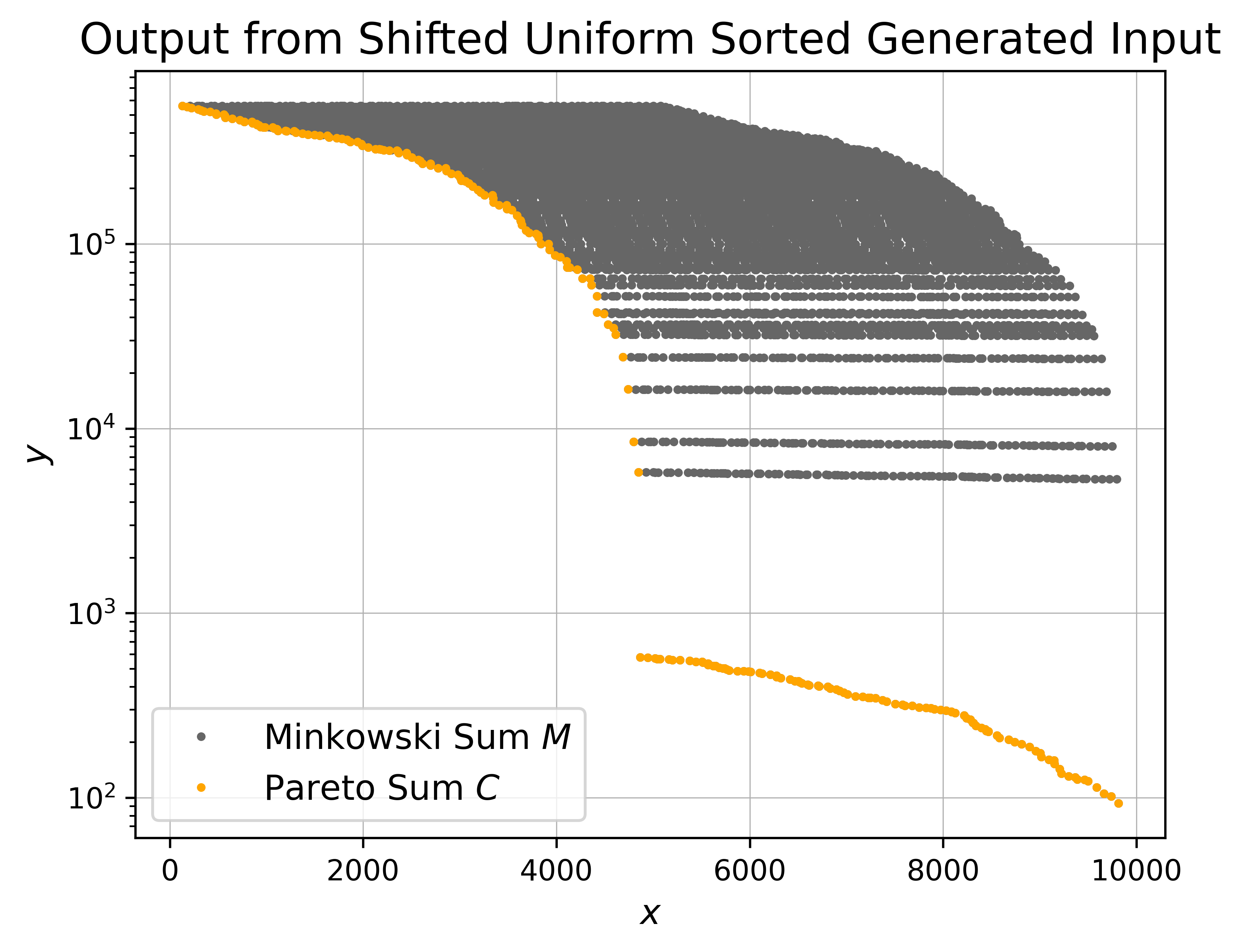}\\
    \includegraphics[width=0.33\textwidth]{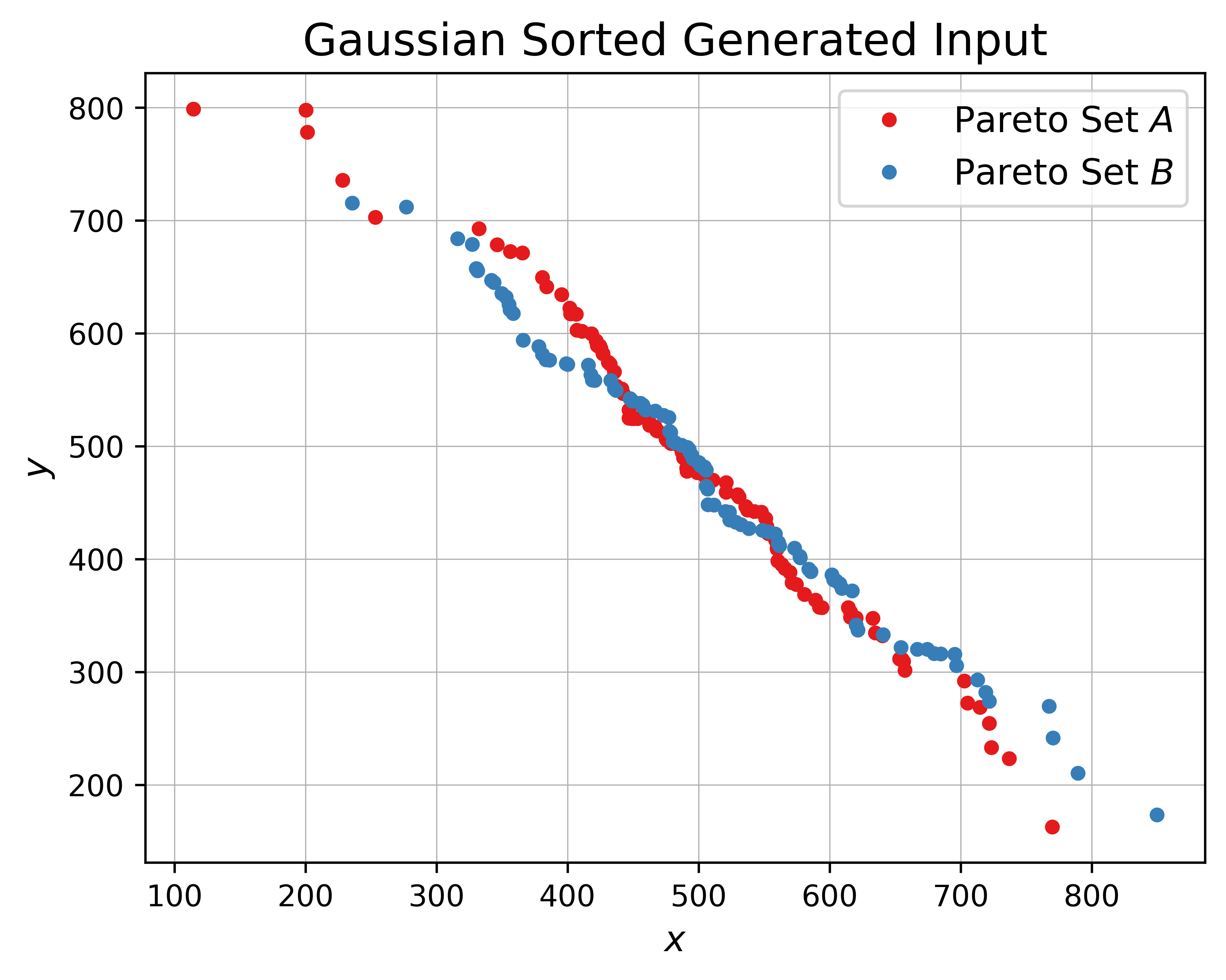}\hfill
    \includegraphics[width=0.33\textwidth]{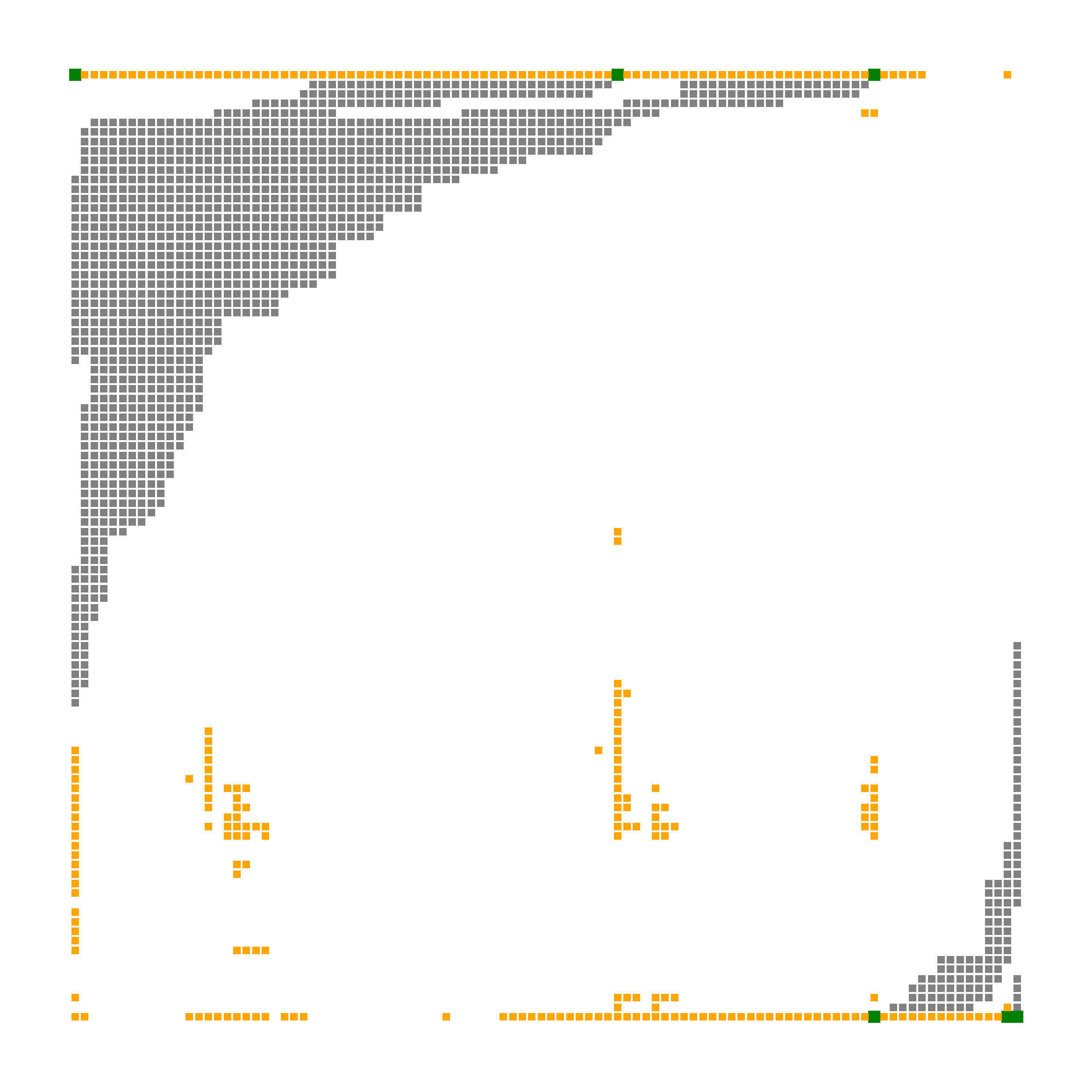}\hfill
    \includegraphics[width=0.33\textwidth]{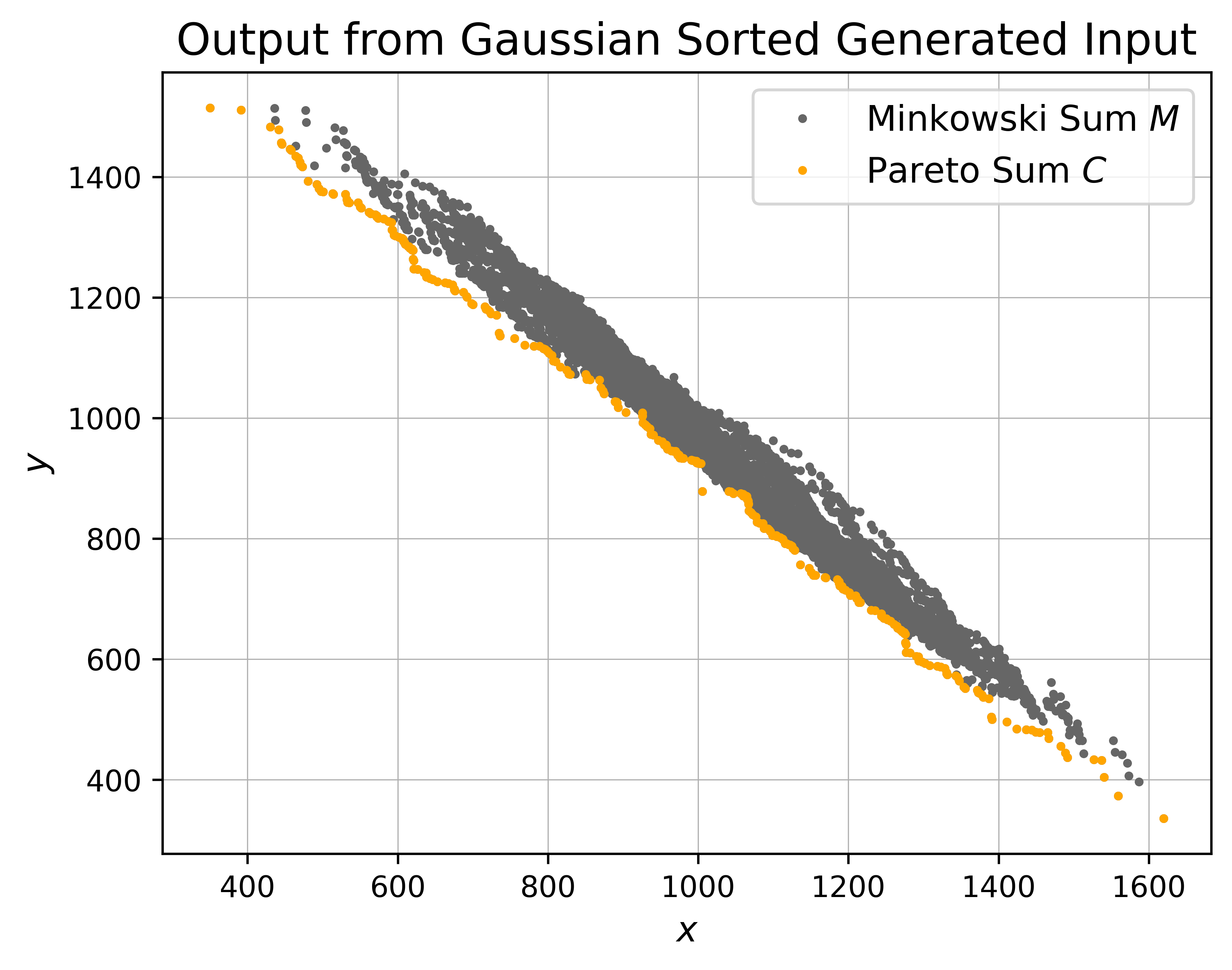}\\ 
    \includegraphics[width=0.33\textwidth]{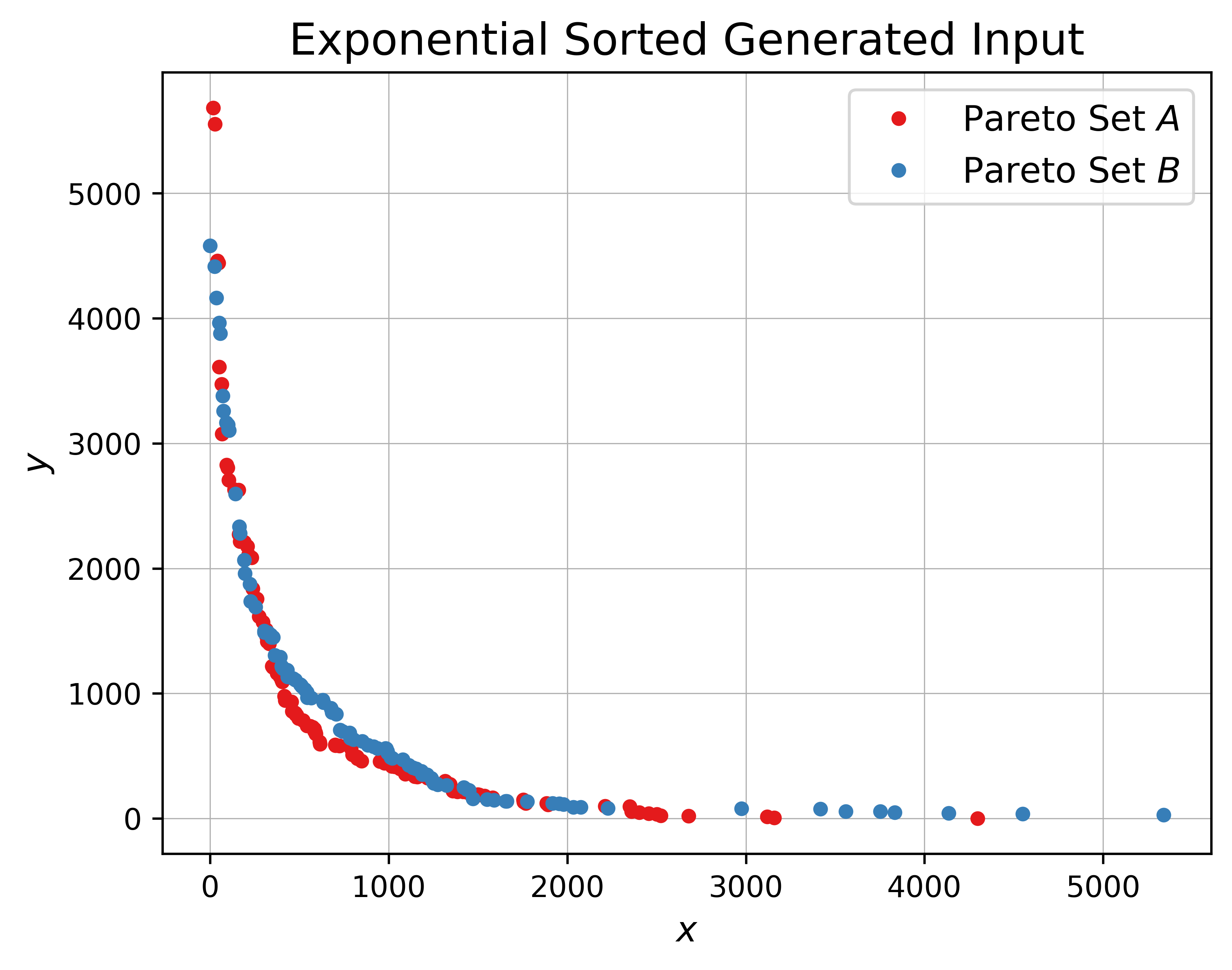}\hfill
    \includegraphics[width=0.33\textwidth]{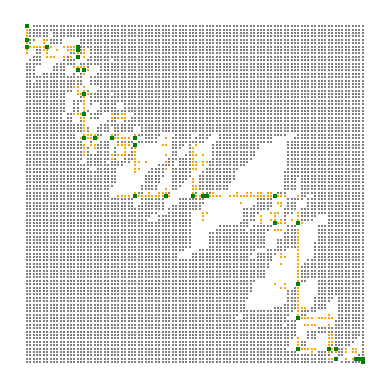}\hfill
    \includegraphics[width=0.33\textwidth]{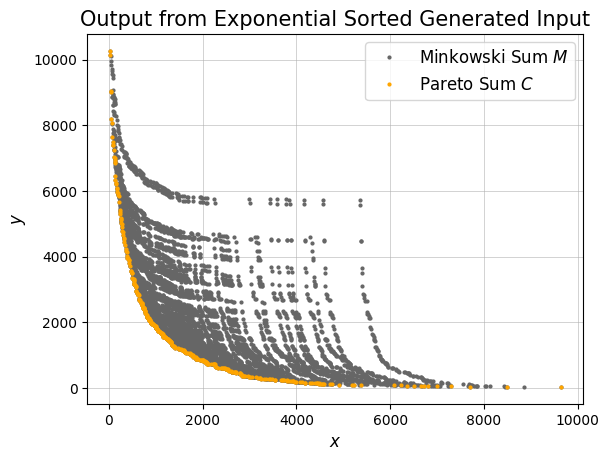}\\ 
    \caption{Input Pareto sets generated by the sorted generator using the uniform, shifted uniform, Gaussian, and exponential distribution for the sequences of $x$- and $y$-coordinates.}
    \label{fig:sortedGen}
\end{figure}

For the proposed output-sensitive algorithms we want to investigate the resulting size of the Pareto sum for different input Pareto sets. The left image in Figure \ref{fig:outputSizeGenerators} depicts the output sizes for the naive, incremental and sorted generators. The output sizes $k=|C|$ are relatively small with $k$ being roughly less than $4n$. For the three generators the resulting output sizes are fairly similar. As discussed earlier, the sorted generator is most efficient and thus will be used for the experiments.

\begin{figure}
    \includegraphics[width=0.495\textwidth]{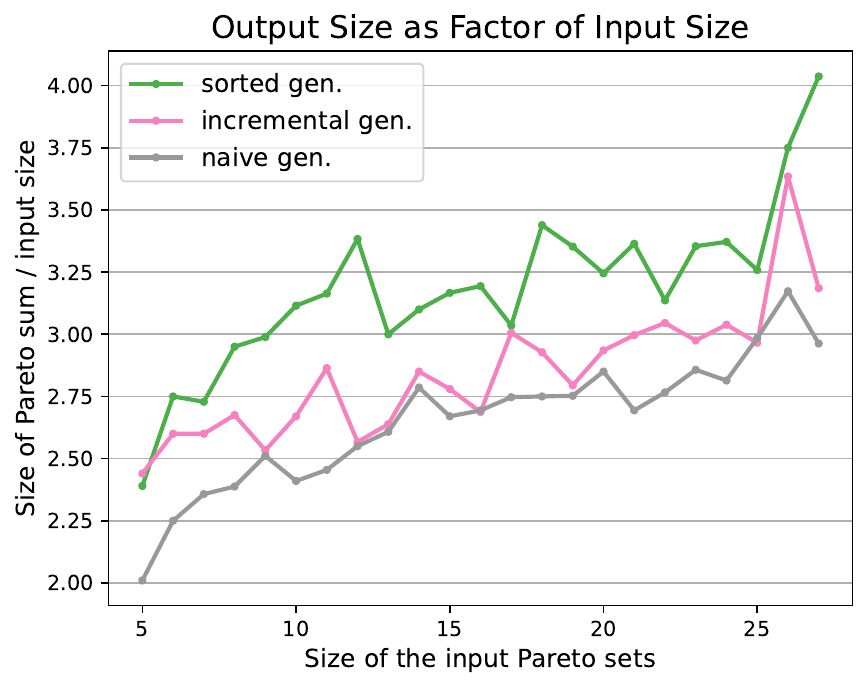}\hfill
    \includegraphics[width=0.495\textwidth]{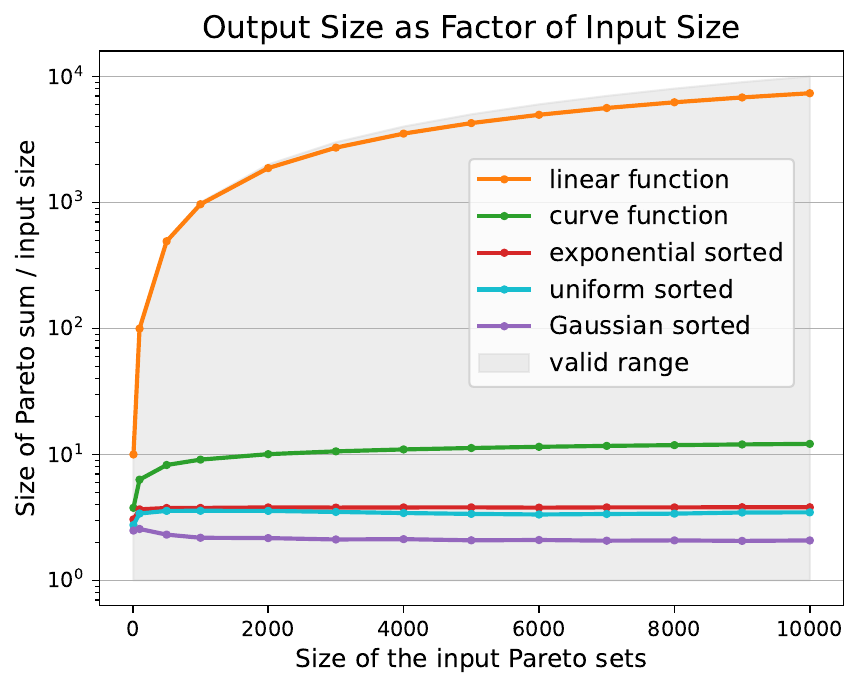}
    \caption{The output sizes $k=|C|$ of the Pareto sum are investigated for the different input generators. On the left, the output sizes as a factor of $n$ are given for the naive, incremental and sorted generators. The right image shows the output sizes as a factor of $n$ for the sorted generators following the uniform, Gaussian and exponential distribution and the two function generators. The shaded area marks the valid range between $n$ and $n^2$ for the output size.}
    \label{fig:outputSizeGenerators}
\end{figure}

\subsubsection{Function Generator}
Since we want to evaluate the influence of the output size on the performance of the algorithms, we propose the function generator which produces larger output sizes than the aforementioned generators. The idea is to sample a random $x$-value in the range $(0,n]$ and the corresponding $y$-value is determined by a function. 
We provide two variants, the first one uses a curve-shaped function where $y=0.3/x$ and the second follows a linear function, in this case $y=-x+n$. Example Pareto sets stemming from the curve and linear function generator are depicted in Figure \ref{fig:functionGen}.

Similarly to the previous generators, the right image in Figure \ref{fig:outputSizeGenerators} shows the resulting output size of Pareto sums when using the function generators compared to the sorted generator following the uniform, Gaussian and exponential distributions. The gray shaded area encloses the valid range $n \leq |C| \leq n^2$ for the output size. As shown in Figure \ref{fig:functionGen} the input sets produced by the linear function generator result in every point in the Minkowski sum $M$ being Pareto-optimal, hence $k=|C|= n^2$. The curve function provides output sizes between the small ones from the sorted generator and the large ones from the linear function generator which is advantageous for a thorough analysis of the proposed algorithms.

\begin{figure}
    \includegraphics[width=0.33\textwidth]{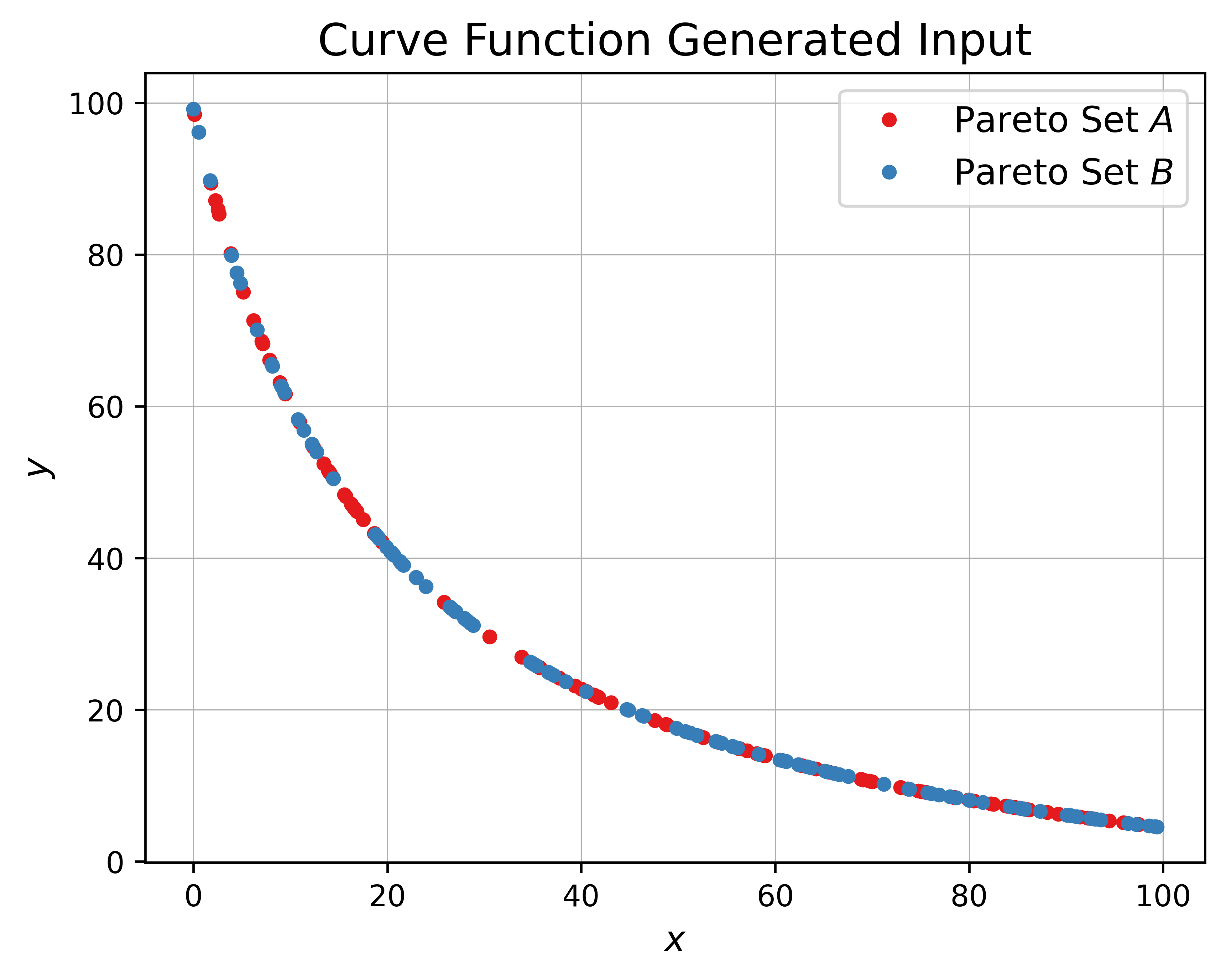}\hfill
    \includegraphics[width=0.33\textwidth]{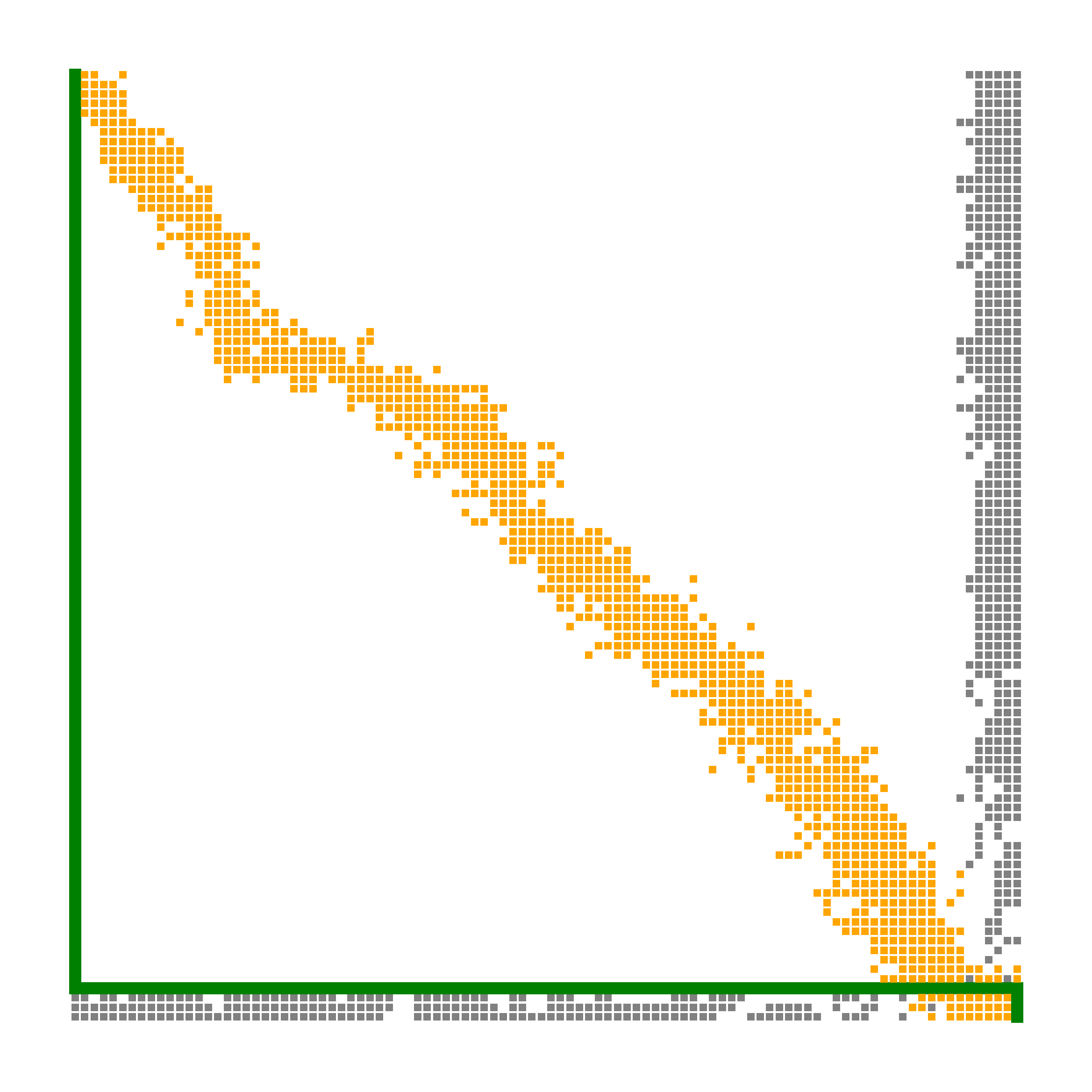}\hfill
    \includegraphics[width=0.33\textwidth]{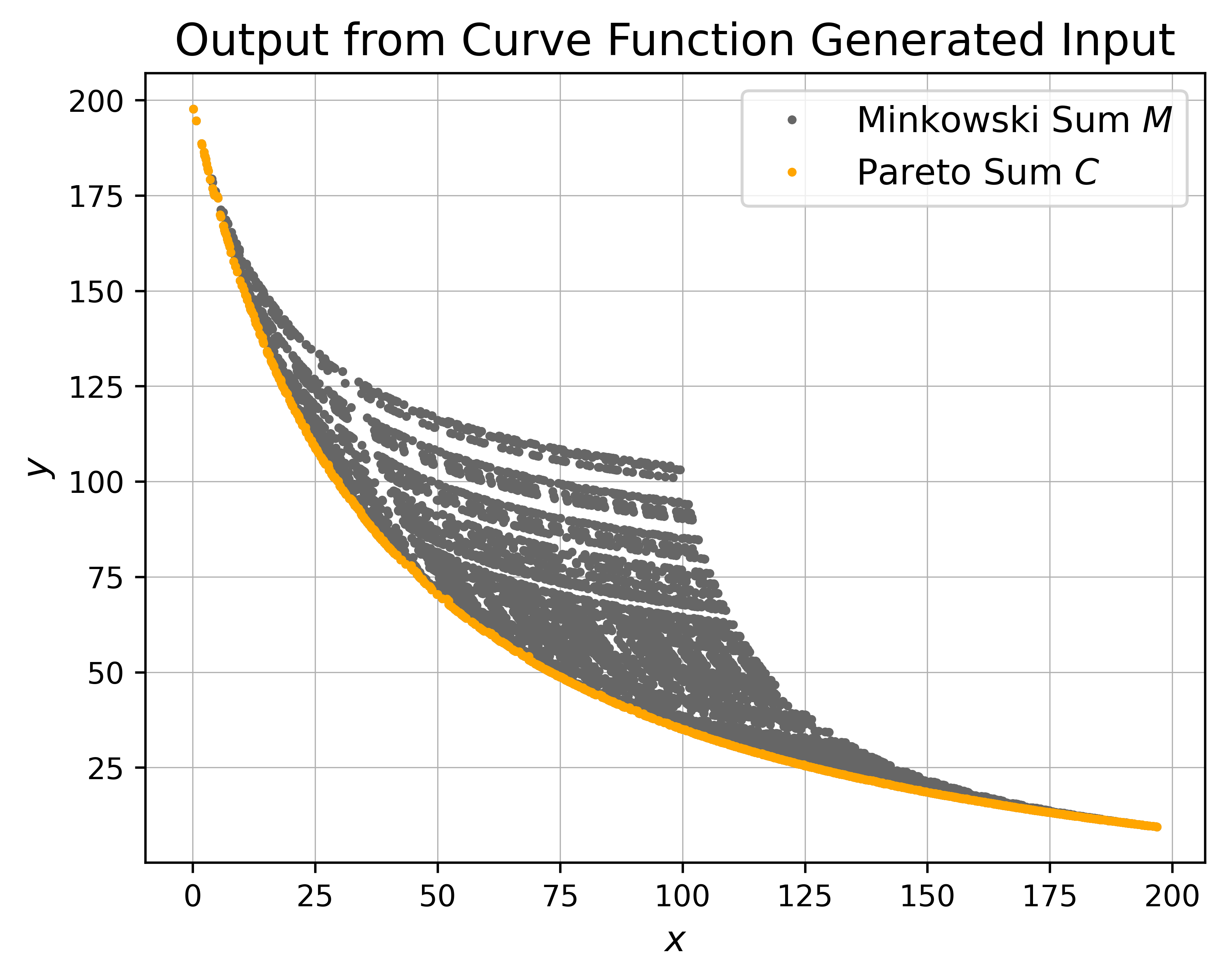}\\
    \includegraphics[width=0.33\textwidth]{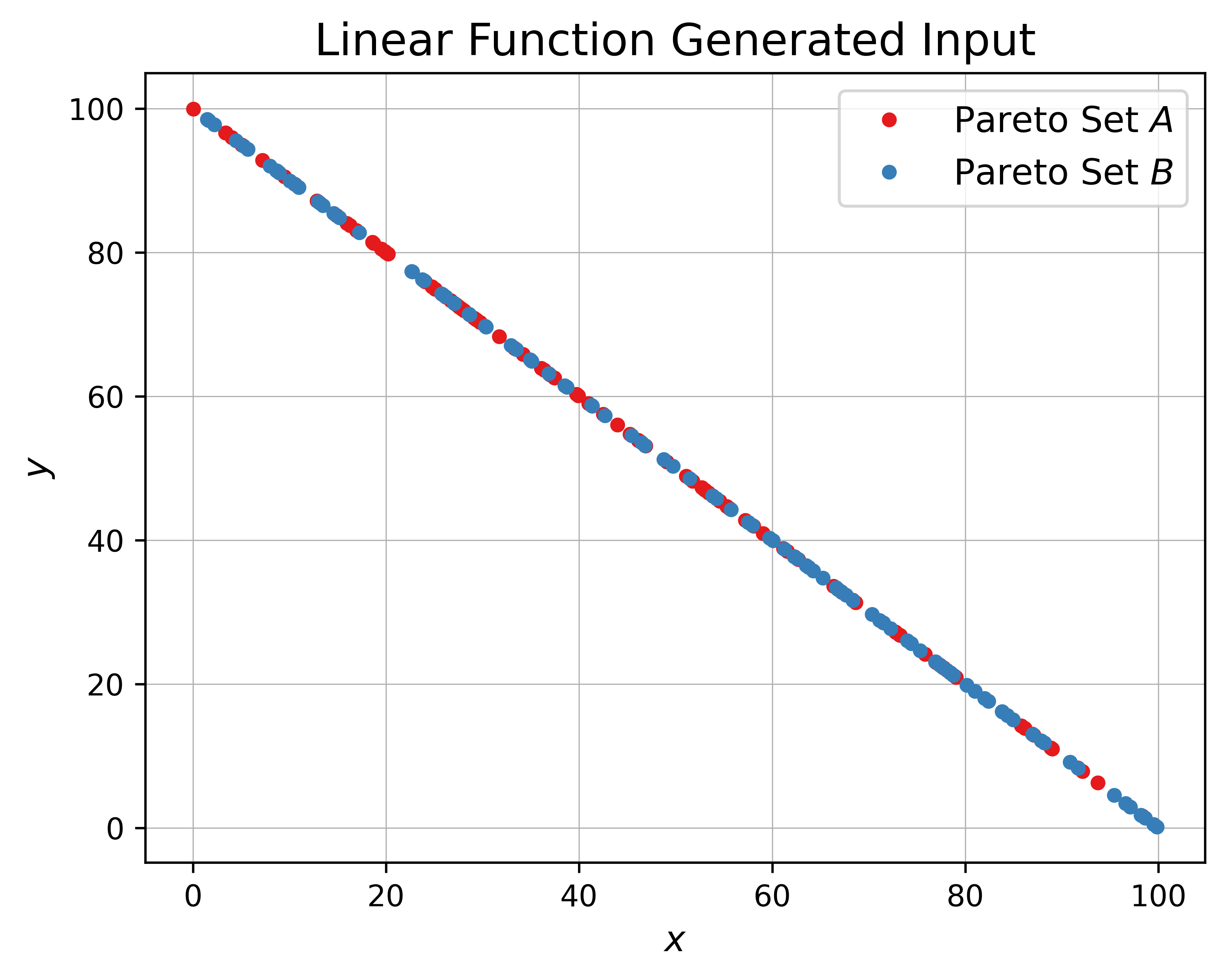}\hfill
    \includegraphics[width=0.33\textwidth]{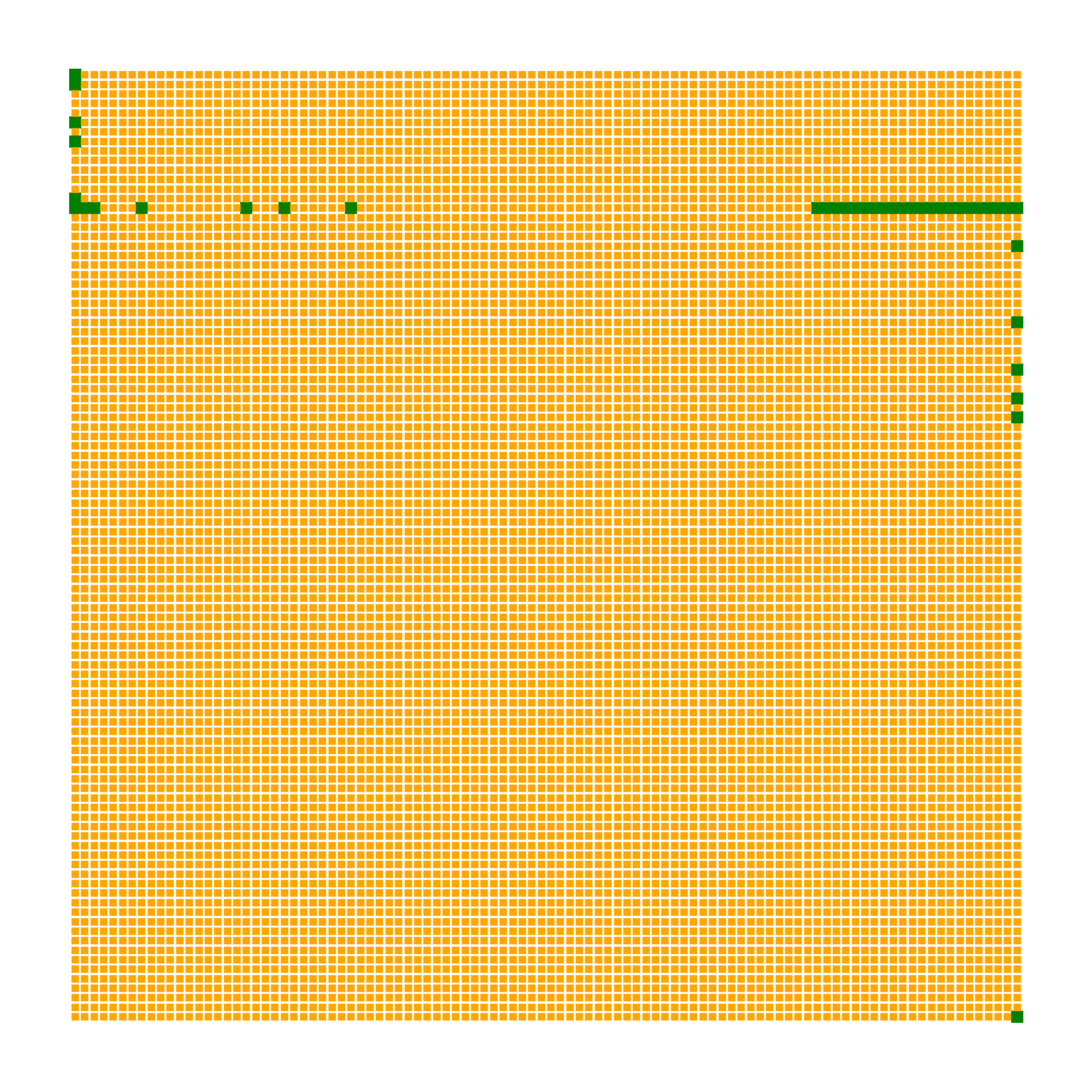}\hfill
    \includegraphics[width=0.33\textwidth]{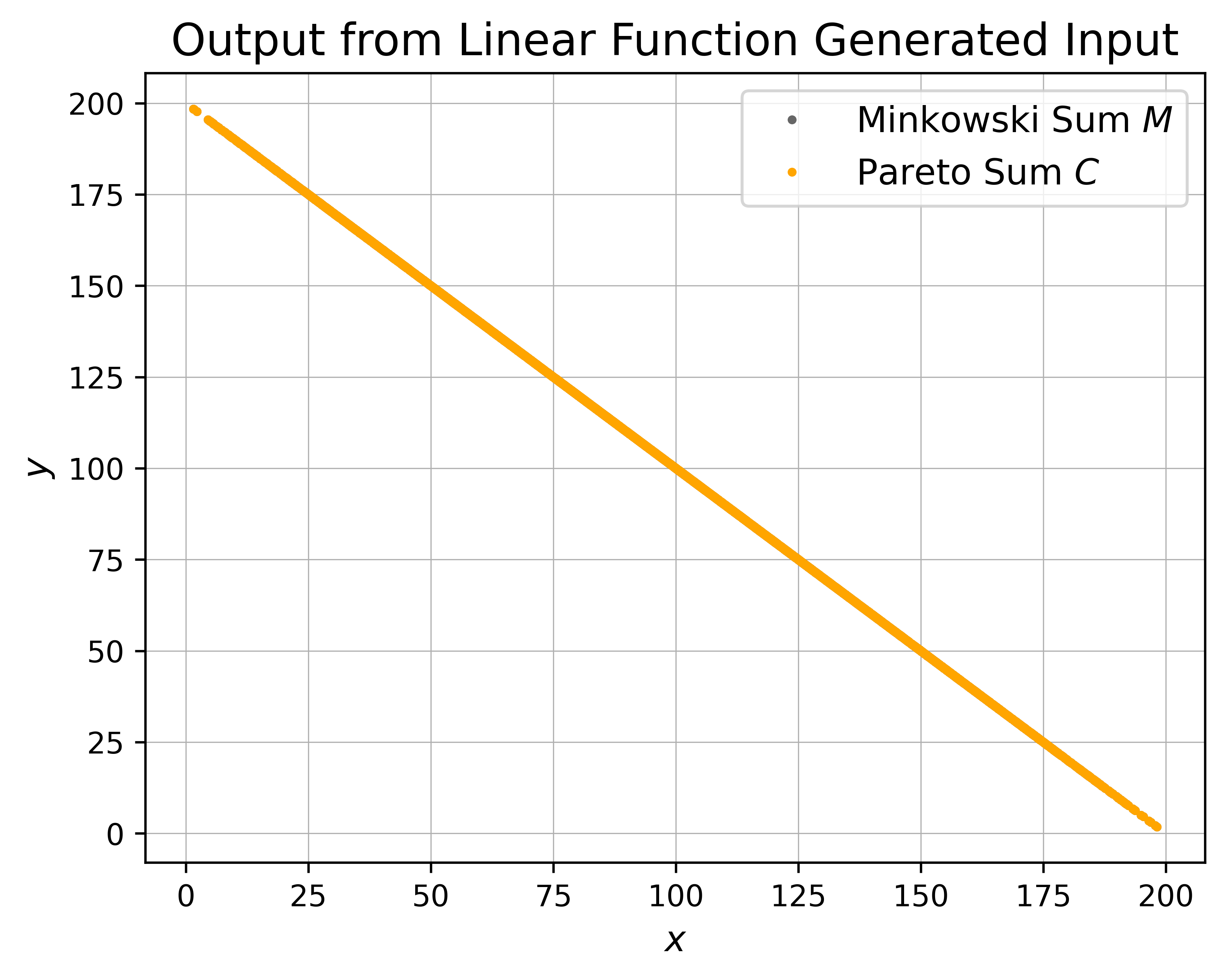}\\
    \caption{Example input Pareto sets based on the function generator. The images on top show the input sets, Minkowski matrix and Pareto sum for the curve function generator and the bottom images for the linear function generator.}
    \label{fig:functionGen}
\end{figure}

\subsection{Running Time Assessment on Generated Data}
Figure \ref{fig:RuntimeSmall} depicts the average running times on the five input variants stemming from the above generators, namely the sorted generator using the uniform, Gaussian and exponential distribution and the function generator with the curve and linear functions. As derived in our theoretical analysis, the Brute Force algorithm is by far the slowest on all five distributions such that the execution already exceeded the time limit for input sizes above $n=3,000$. The engineered Binary Search approach prioritizing columns where Minkowski points are located has an improved running time which is up to one order of magnitude better than for the BS algorithm. 

On uniform and Gaussian distributed input sets the NonDomDC algorithm \cite{klamroth2022efficient} exhibits slower running times for the doubling variant than the sequential version (SND). As for the other input generators, the doubling approach (DND) was faster. For both algorithms, the intermediate output size influences the performances of the algorithms. During the merging of intermediate Pareto sets, either in a MergeSort-like fashion (DND) or sequentially processing the columns one by one (SND), we observe a blow-up with the structure of these algorithms. Figure \ref{fig:intermediateND} examines said intermediate output size with respect to the actual output size $k=|C|$ for Gaussian distributed input sets. The DND, approach requires $65\%$  more space than the actual size $k$ whilst considering a significant amount of points within the matrix which are not relevant for the final Pareto sum. For SND, the effect is smaller but still worth mentioning since it maintains intermediate Pareto sets which exceed the output size $k$ by up to $23\%$. Due to that observation, the SND algorithm is one order of magnitude faster than the DND approach for the Gaussian distributed input sets. As for the linear function generated inputs, the DND algorithm is significantly faster than the sequential variant by two orders of magnitude. This might be due to rebuilding the Pareto tree after $520$ insertions for the SND approach, as recommended by the authors in \cite{lang2022space}. For linear function generated inputs sets, virtually all points in the Minkowski matrix are in the final Pareto sum as depicted in Figure \ref{fig:outputSizeGenerators}. Together with maintaining only one Pareto tree during the merging of columns in the SND algorithm, the rebuilding of the entire tree is performed more often than for input generators which result in smaller values of $k$. For DND the trees representing the intermediate Pareto sets are stored for each level in the MergeSort-like procedure. Hence, the rebuilding of the trees does not take place and does not influence the running time of the algorithm. 

\begin{figure}[!ht]
    \includegraphics[width=0.495\textwidth]{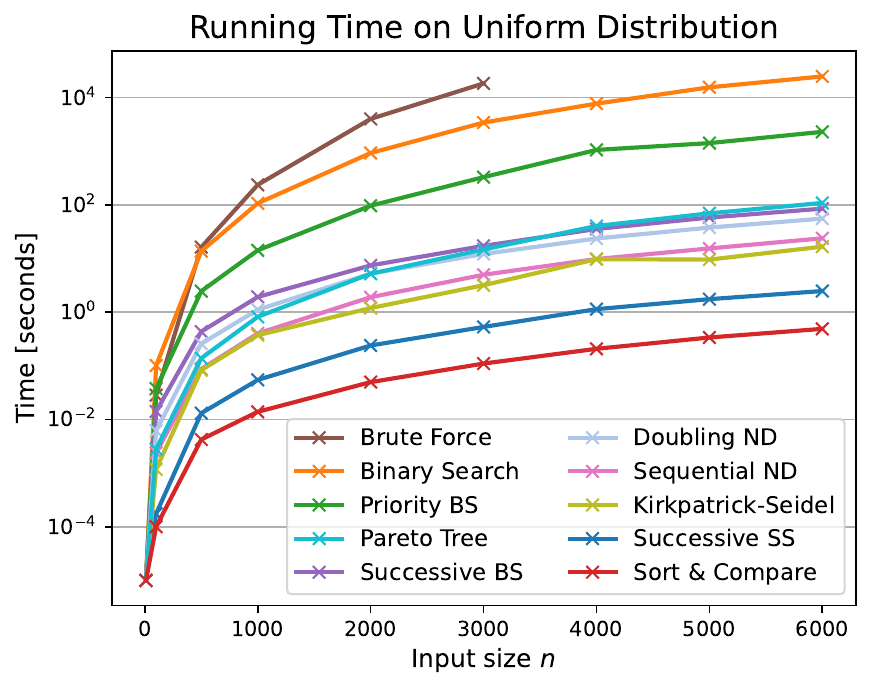}\hfill
    \includegraphics[width=0.495\textwidth]{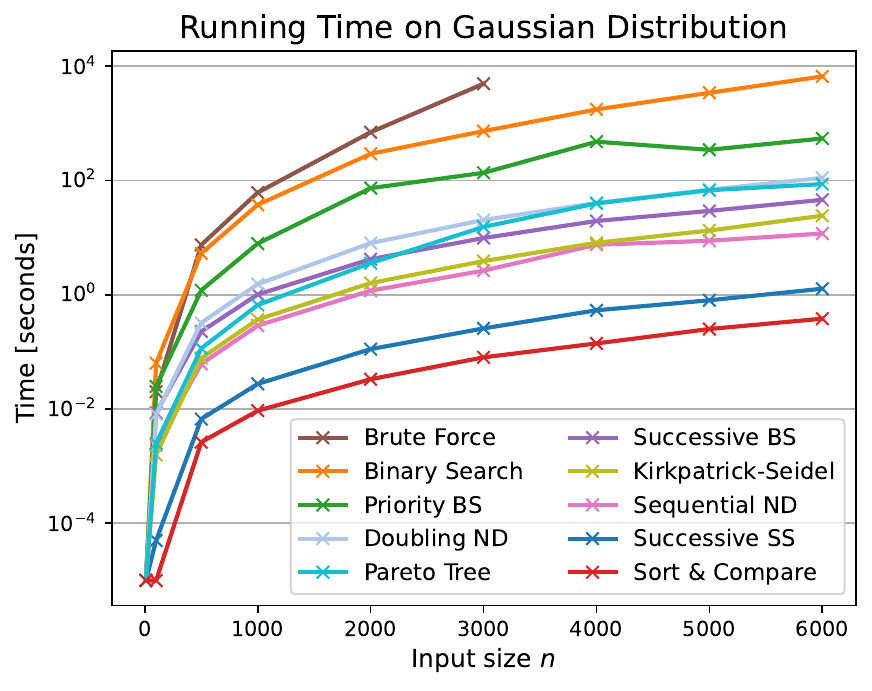}\\
    \includegraphics[width=0.495\textwidth]{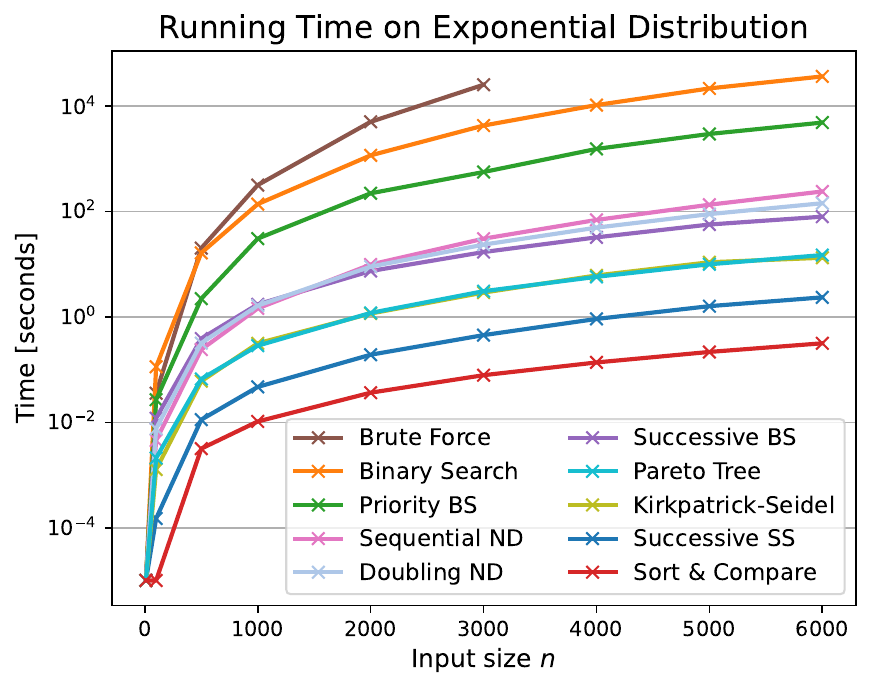}\hfill 
    \includegraphics[width=0.495\textwidth]{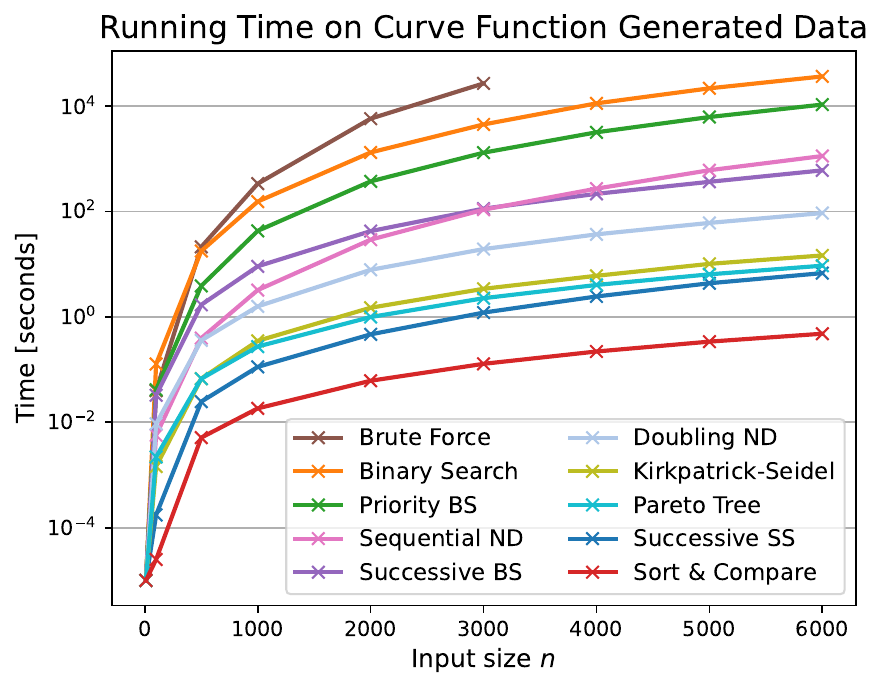} \\
    \includegraphics[width=0.495\textwidth]{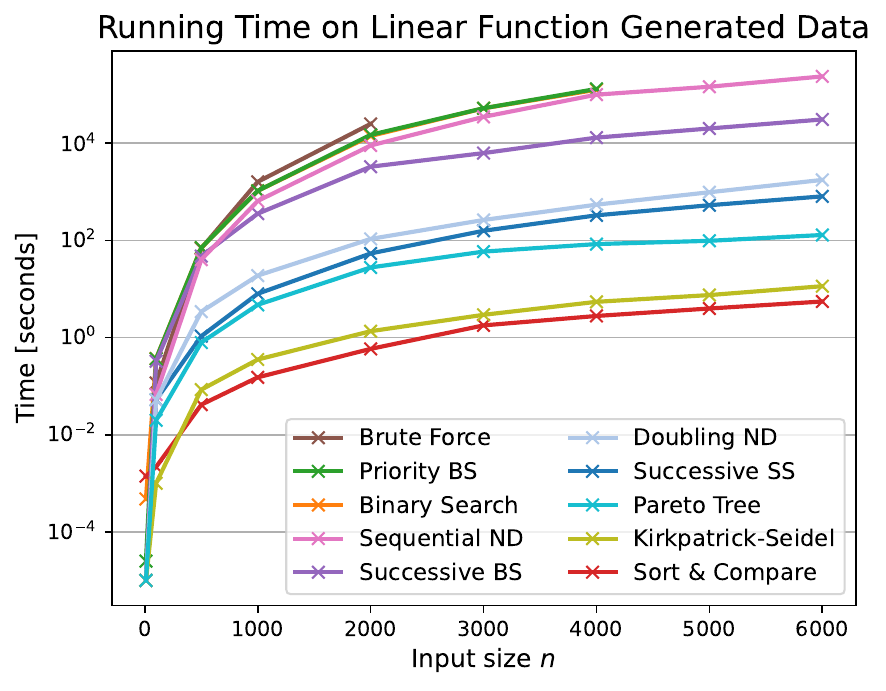} \hfill
  \caption{Average running times of the baseline algorithms BF, BS, PBS, SC, the known approaches NonDomDC and Kirkpatrick-Seidel, the successive algorithms SBS, SSS and the algorithm based on Pareto trees on five input generators.}
        \label{fig:RuntimeSmall}
\end{figure}

\begin{figure}[!ht]
    \centering
    \includegraphics[width=0.6\textwidth]{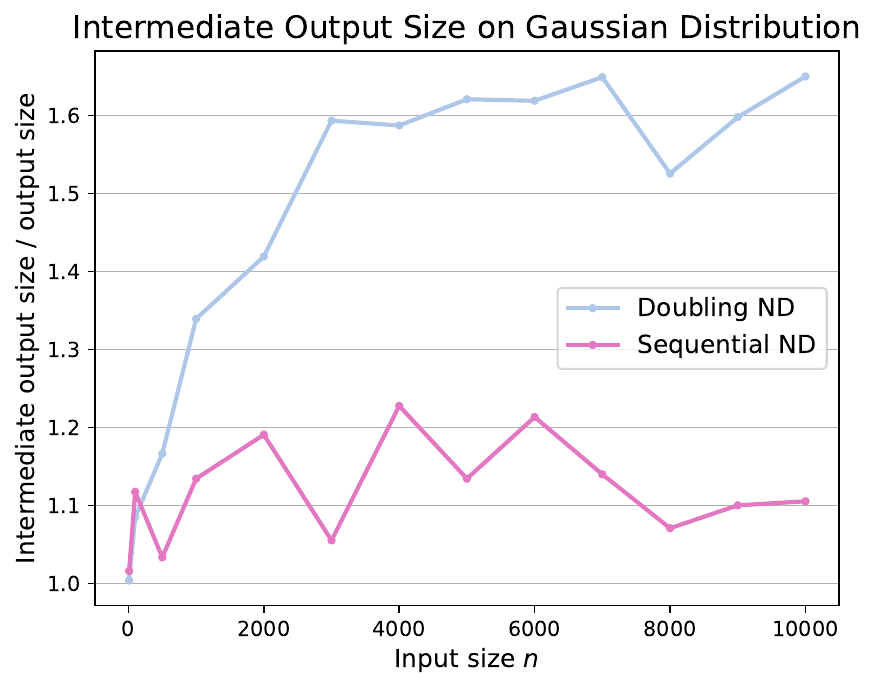}
    \caption{Intermediate output sizes of the sequential and doubling NonDomDC algorithm as a factor of the actual output size $k=|C|$ for inputs generated using the sorted Gaussian generator.}
    \label{fig:intermediateND}
\end{figure}

The running times of the simple Pareto tree algorithm perform comparably to the SND and DND algorithm, which also incorporate the Pareto trees, on uniform and Gaussian distributed input sets. However, it is remarkable that for the sorted exponential generator as well as the curve and linear function generators, the Pareto tree algorithm significantly outperforms the more elaborate SND and DND algorithms. Moreover, the Pareto tree algorithm requires less space than the ND algorithms since it only stores the final Pareto sum in the tree with non-Pareto points never being added to the tree as is done for both the doubling and sequential variant of ND. For these distributions, there is no advantage in using the NonDomDC algorithm opposed to directly leveraging only the data structure, the Pareto trees. For the curve input sets, the Pareto tree algorithm is the fourth best algorithm. As for the linear function generator, it ranks third outperforming SSS and only KS and SC having faster running times. 

The Kirkpatrick-Seidel algorithm \cite{kirkpatrick1985output} has a similar running time as the ND algorithms on uniform and Gaussian distributed input Pareto sets. However, the KS algorithm requires significantly more space than the output-sensitive algorithms, especially considering that the output sizes on these distributions are less than $4n$ whereas the KS algorithm stores the entire matrix of size $n^2$. Hence, storing all points of the Minkowski matrix is especially wasteful in these scenarios. For input sets generated by the exponential and the curve function generator, the KS algorithm performs better than ND and is similar to the Pareto tree algorithm, whilst the Pareto tree algorithm, again, consumes less space. Regarding linear function generated inputs, the quadratic space consumption of the KS algorithm and the output-sensitive algorithms are essentially the same. On those inputs, the KS approach ranks second in terms of running time only being outperformed by the Sort \& Compare algorithm.

Considering the running times of the successive algorithms, the Successive Binary Search algorithm (SBS) is faster than the SND approach on all generators except for the uniform and Gaussian. However, as for all our proposed algorithms, the SBS algorithm uses less space than the ND algorithms. Yet the SBS algorithm outperforms its baseline counterparts BS and PBS by at least one order of magnitude and up to two orders of magnitude for the uniform, Gaussian and exponential distributed inputs. In general, the running time of the successive algorithms heavily depends on how often we call the range minimum oracle in order to determine the next Pareto point. To this end, the oracle is called for each point in the Pareto sum, thus $k$ times. Consequently, the running time of these approaches is better on input Pareto sets which result in a smaller Pareto sum, hence in a smaller value of $k$ and less calls to the range minimum oracle. This is the case for the uniform, Gaussian and exponential distributions as shown in Figure \ref{fig:outputSizeGenerators}. The Successive Sweep Search algorithm (SSS) ranks second for these three distributions for exactly the same reason. For the curve function generator, this still holds true since the output size $k$ is still significantly smaller than $n^2$. Only for the linear function generator we observe that the increased number of calls to the range minimum oracle leads to higher running times of the SSS algorithm. However, the SSS approach is still faster than both ND algorithms. 

\begin{figure}[!ht]
    \includegraphics[width=0.495\textwidth]{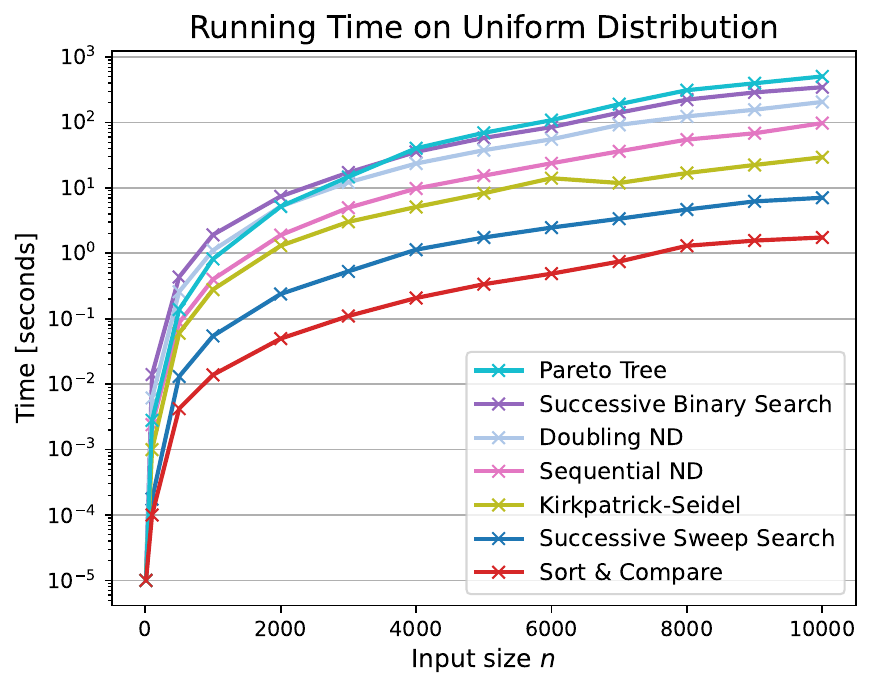}\hfill
    \includegraphics[width=0.495\textwidth]{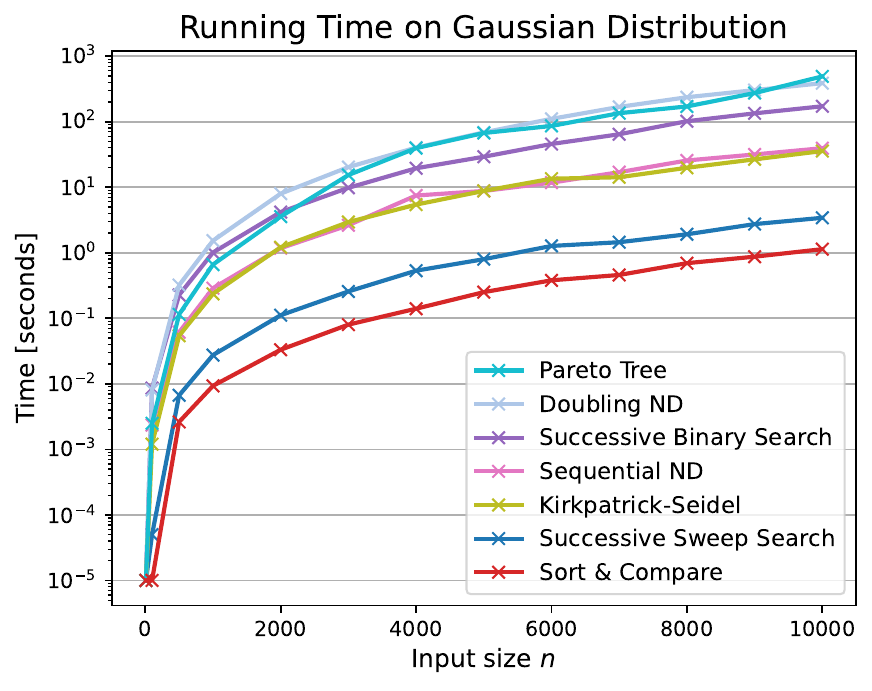}\\
    \includegraphics[width=0.495\textwidth]{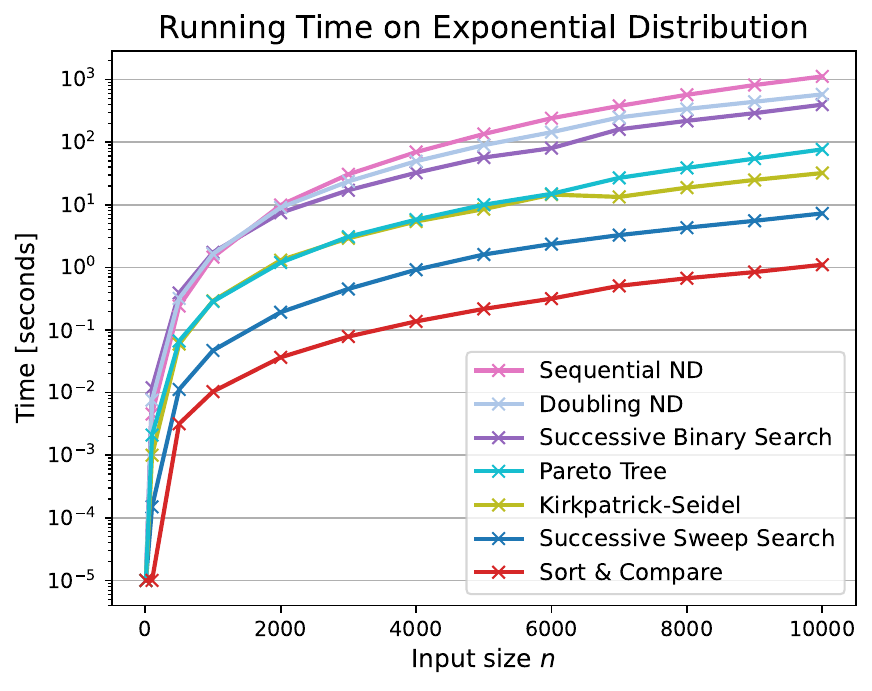}\hfill
    \includegraphics[width=0.495\textwidth]{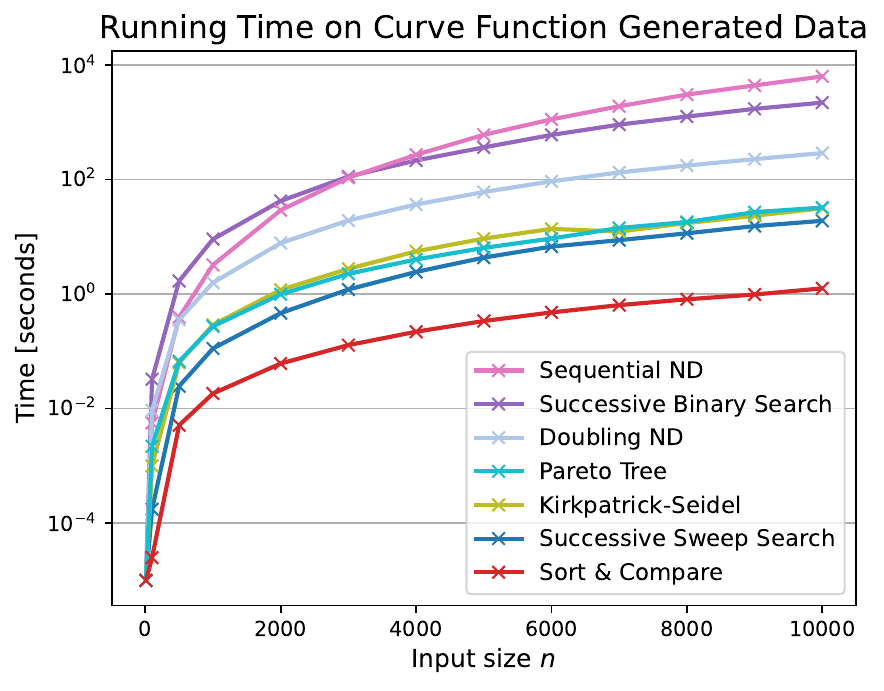}\\
    \includegraphics[width=0.495\textwidth]{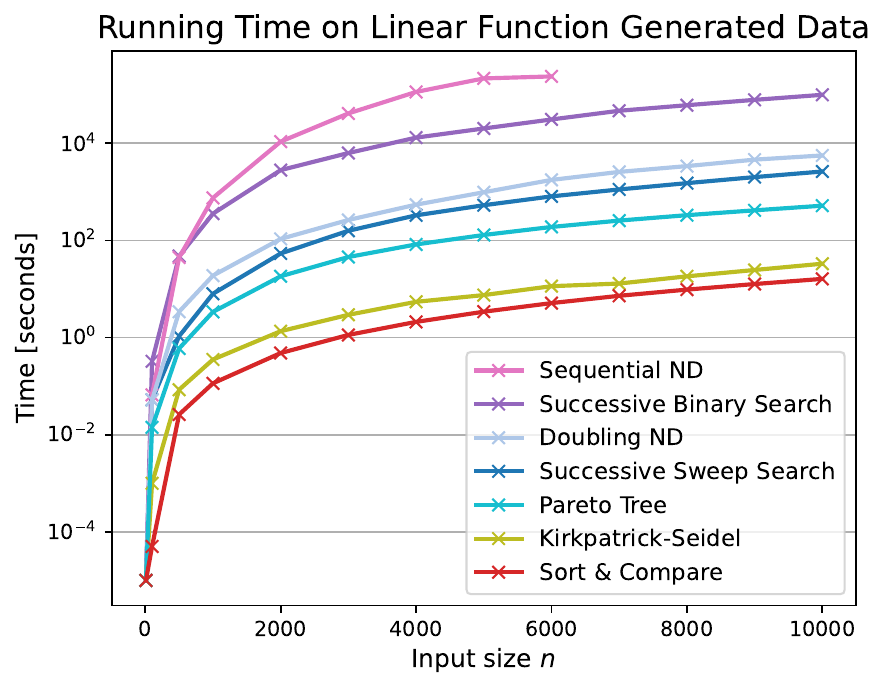}    
    \caption{Average running times of a selection of algorithms for a wider range of input sizes up to $n=10,000$ using the five input generators. }
    \label{fig:RuntimeMiddle}
\end{figure}

The Sort \& Compare algorithm consistently yields the fastest running times among all five input variations. It outperforms the best previously known algorithms, namely ND and KS, by up to four orders of magnitude. Since the SC algorithm always considers all points in the matrix regardless of their occurrence in the Pareto sum $C$, the running time of the algorithm is not influenced by the actual output size. Therefore, we conclude that the SC algorithm provides the most efficient running time of all proposed algorithms whilst ensuring an output-sensitive space consumption.

For larger input sizes depicted in Figure \ref{fig:RuntimeMiddle}, we observe that the Pareto tree algorithm performs significantly better than both variants of ND on exponential, curve and linear input generators. On input sizes above $n=6,000$ of the linear function generator, the SND algorithm exceeded the time limit and thus is not considered for higher values of $n$. 

\begin{figure}[!ht]
    \includegraphics[width=0.495\textwidth]{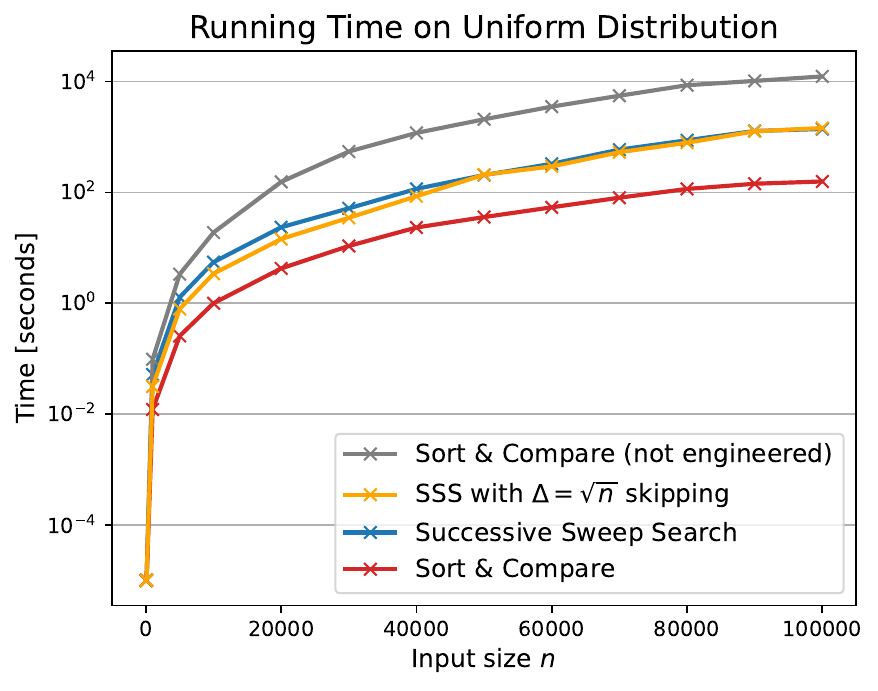}\hfill
    \includegraphics[width=0.495\textwidth]{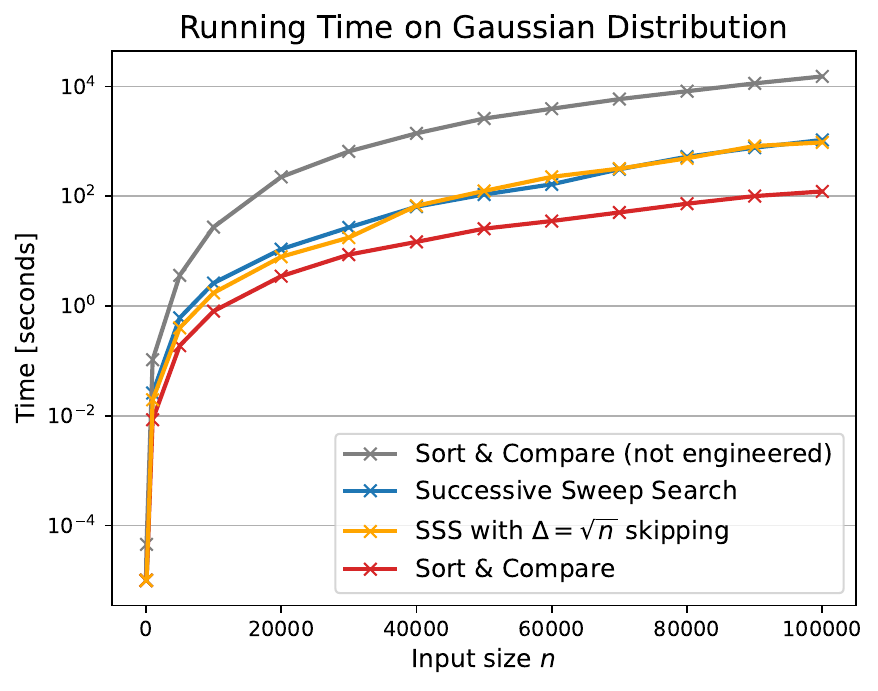}\\
    \includegraphics[width=0.495\textwidth]{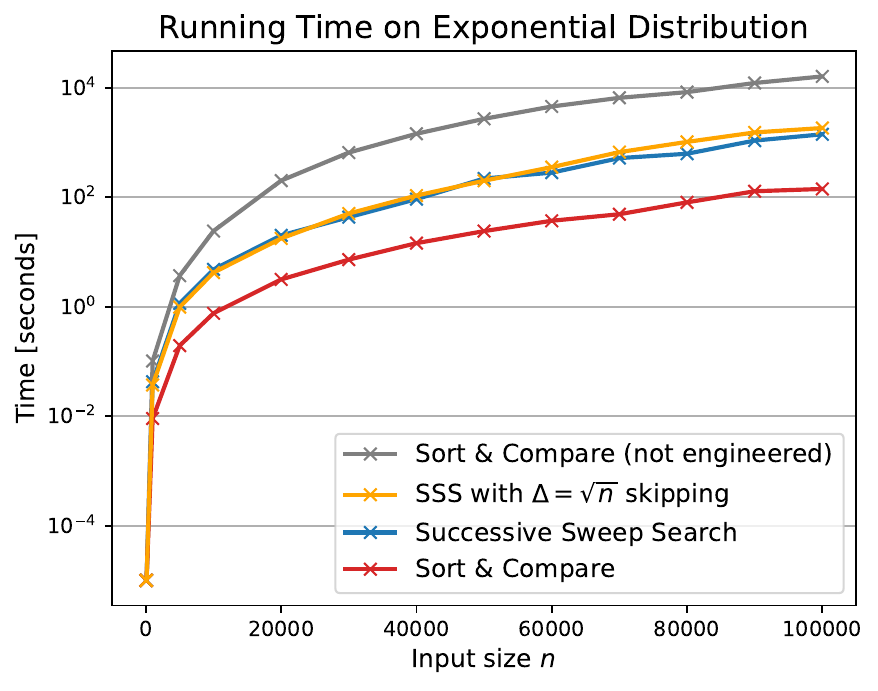}\hfill 
    \includegraphics[width=0.495\textwidth]{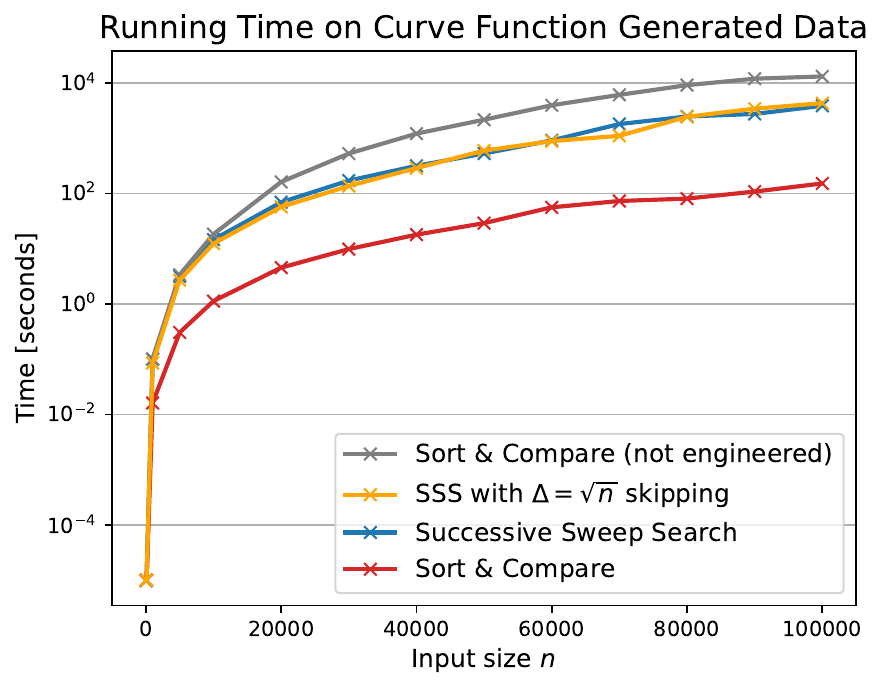}\\
    \includegraphics[width=0.495\textwidth]{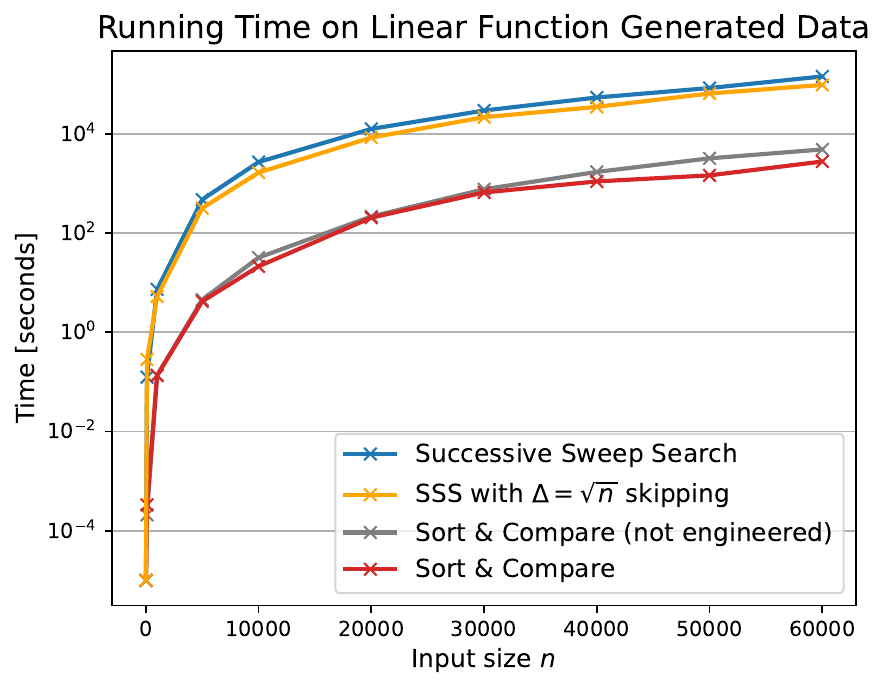} \hfill
    \includegraphics[width=0.495\textwidth]{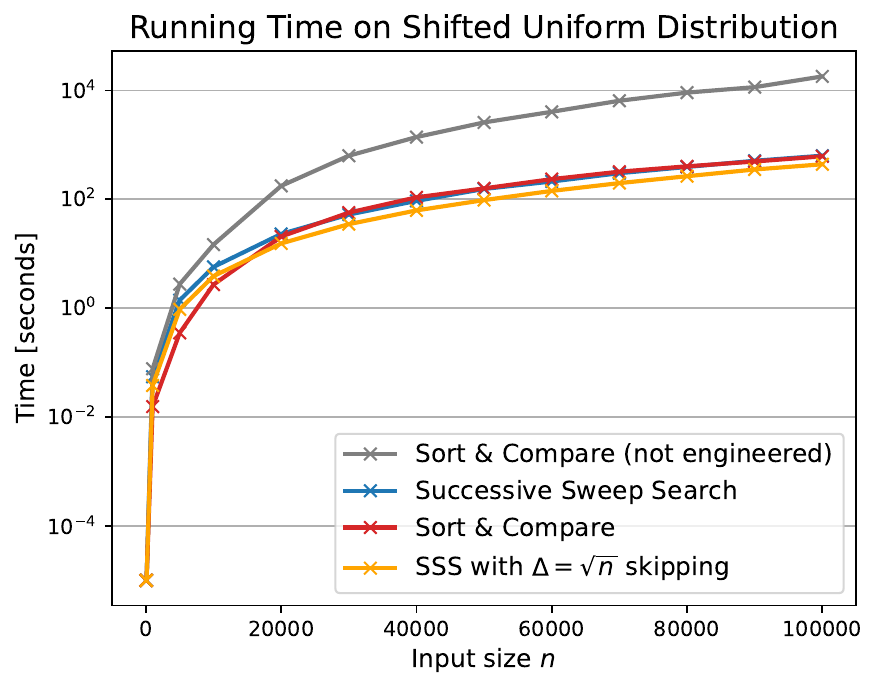}
    \caption{Average running times for the best algorithms Successive Sweep Search and Sort \& Compare for input sizes up to $n=100,000$ on six variations of generated input sets.}
    \label{fig:RuntimeBest}
\end{figure}

The running time of the best two algorithms SSS and SC are evaluated on instances of up to $n=100,000$ in Figure \ref{fig:RuntimeBest}. The Sort \& Compare algorithm exhibits the fastest running times on all input distributions. On the Gaussian distribution, SC yields its overall lowest running times. On average it requires $122$ seconds to compute the Pareto sum of two input Pareto sets each of size $n=100,000$. The SSS algorithm also shows its fastest running time on the Gaussian distribution. Computing the Pareto sum takes on average $1,065$ seconds for the SSS algorithm. On average, the SC algorithm is better than the SSS algorithm by one order of magnitude for all five generators. Both algorithms, the SC and SSS, have their slowest running times on the linear function generated inputs. Solving the largest instance reported for said generator and $n=60,000$ takes $2,500$ seconds for the SC and $142,000$ seconds for the SSS algorithm. In contrast, solving instances of the same input size on the Gaussian distributed input sets results in $35$ seconds for the SC and $165$ seconds for the SSS algorithm. The increase in running time is by a factor of $71$ for the SC and $860$ for the SSS algorithm. This observation is due to the effect described above that the running time of the Successive Sweep Search algorithm heavily depends on the number of Pareto points in the matrix and thus on the size $k$ of the final Pareto sum.

Further, we observe a significant speed-up by the engineering of the Sort \& Compare algorithm. We added the constraint, that only points that are not already dominated by the last point added to the Pareto sum are inserted into the min-heap for any given column. With this modification we obtain a speed-up in average running time of two orders of magnitude for the uniform, Gaussian, exponential and curve function generated input sets. For the linear function generators the engineering yields only minor speed-ups since virtually all points in the Minkowski sum $M$ are part of the Pareto sum $C$ and hence need to be added to the min-heap. Lastly, for the shifted uniform distribution, the engineering yields speed-ups up to a factor of 30 for the Sort \& Compare algorithm. Consequently, the engineered SC is always faster than the previous version of SC on all six input distributions Comparing the SSS to the SC algorithm, the previous variant of SC consistently requires more time than the SSS algorithm. 

The effects of the engineered version of SSS using the $\Delta$-skipping depend on the input distribution. Here, we modified the range-minimum oracle such that it checks if it might skip over $\Delta=\sqrt{n}$ points in one column and still remain within the range instead of performing upward linear search. The same is done for skipping over $\Delta$ indices in one row of the Minkowski sum. For the uniform, Gaussian, exponential and curve function generator, the running times for the original SSS and the SSS with the $\Delta$-skipping are fairly similar. As depicted in Figure \ref{fig:sortedGen}, the points in the matrix do not follow a specific pattern for these input distributions. Hence skipping over larger areas in the matrix is not feasible and the algorithm requires upward linear search in order to find the candidates for the range-minimum as for the original SSS version. From Figure \ref{fig:functionGen} we recall that the linear function generator yields the largest output sizes with Pareto sums containing effectively all points in the Minkowski sum. Consequently, the output size $k=n^2$ for these input sets. In general, the running time of the SSS algorithm depends on the number of calls to the range-minimum oracle which is exactly $k$. However, during one call to the oracle, we can efficiently improve the running time by skipping over $\Delta$ points in rows or columns during the left-to-right sweep of the matrix. Therefore, the engineered SSS algorithm provides a speed-up of up to 1.6 in comparison to the regular SSS. Lastly, we consider the Minkowski sum for inputs following the shifted uniform distribution from Figure \ref{fig:functionGen}. For these input sets, we can skip over large index ranges in the rows and columns which do not need to be considered for the range-minimum during one oracle call. The $\Delta$-skipping yields speed-ups of up to 1.6 compared to the regular SSS version. Further, on the shifted uniform distribution, the $\Delta$-skipping SSS is even faster than the Sort \& Compare algorithm for input sizes above $n=20,000$. 

\subsection{Real Data Sets}
As a real-world application of Pareto sum computation, we consider bi-criteria route planning in road networks. Here, given an input graph $G(V,E)$ and costs $c_1,c_2:E \rightarrow \mathbb{R}^+$, the goal is to either compute all Pareto-optimal paths with respect to $c_1,c_2$ between two nodes $s,t \in V$, or the path optimal with respect to one cost while not exceeding a budget on the other (also known as the constrained shortest path problem). To accelerate query answering,  a bi-criteria contraction hierarchy (BCH) data structure can be used. In the preprocessing phase of a BCH, the input graph is augmented with additional edges, also called shortcuts. The shortcut insertion is guided by a node permutation $\pi :V \rightarrow \{1,\dots,n\}$. For nodes $u,w \in V$, a shortcut $\{u,w\}$ is inserted if and only if there exists a simple path from $u$ to $w$ on which no node has a higher $\pi$ value than $\max(\pi(u),\pi(w))$. The shortcut  represents all simple paths $p$ between $u$ and $w$ with that property. For each Pareto-optimal $p$, the respective cost tuple $(c_1(p),c_2(p))$ should be assigned to the shortcut. To compute these Pareto sets for all shortcuts in an efficient manner, a bottom-up approach is used.  Let $u$ be the inner node on a path $p$ from $u$ to $w$ with maximum $\pi$-value. If the Pareto sets $A$ and $B$ of the shortcuts $\{u,v\}$ and $\{v,w\}$ are known, respectively, the Pareto set of $\{u,w\}$ is the Pareto sum $C$ of $A$ and $B$. If there are multiple paths $p$, the final Pareto set of $\{u,w\}$ is formed by the non-dominated elements of the union of all these Pareto sums. The non-dominated union of two Pareto sets  can be computed in linear time by merging the presorted sets to obtain the sorted union  and then applying the simple non-dominance check as described in the SC approach. In the final BCH, queries can be answered with a bi-directional Pareto-Dijkstra run that relaxes shortcut edges instead of many original edges whenever possible. This significantly reduces the search space and allows to answer queries orders of magnitude faster \cite{storandt2012route,funke2015personalized}.

In our experiments, we use test graphs extracted from OpenStreetMap with Euclidean distance and positive height difference as edge costs (in compliance with \cite{storandt2012route}). We comparatively evaluate four algorithms (KS, SC, DND and SSS) for Pareto sum computation on these data sets. For SSS, we used $\Delta$-skipping with $\Delta=\sqrt |A|$. Note that Pareto sets $A$ and $B$ do not necessarily have the same size here, but all proposed Pareto sum computation algorithms can be easily adapted. 
\begin{table}
   \caption{Experimental results for BCH computation on three road networks of different size. The table shows the input graph sizes, the number of edges in the augmented graph (original + shortcuts), the number of non-trivial Pareto sum computations, as well as the running times for conducting these computations with four different algorithms. 
    The final row shows the preprocessing time spent on operations other than  Pareto sum computation. }
    \label{tab:road}
\setlength\tabcolsep{2.2em}
    \begin{tabular}{|l|rrr|}
    \hline
                & ROAD1 & ROAD2 & ROAD3\\
    \hline
    \#Nodes     & 349479 & 1246440 & 3835238\\
    \#Edges      & 720363 & 2612260 & 8037228\\
    \hline
    \#BCH-edges & 1325259 & 4981957 & 15703653\\
 \#PS computations & 899390 & 4424857 & 48844050\\
 \hline 
    Kirkpatrick-Seidel  & 32.09 s& 1234.15  s& $>$24 h\\
    Sort \& Compare & 13.45 s & 587.77  s& 47374.67 s \\
    Doubling ND & 19.03 s &  454.42   s & 37447.61 s\\
    Successive Sweep Search & 4.83 s & 129.62  s & 9668.33  s\\
    \hline
    Additional preprocessing & 7.23  s& 54.66  s& 1237.58 s\\
    \hline
    \end{tabular}  

\end{table}
\begin{figure}
    \centering
    \includegraphics[width=0.49\textwidth]{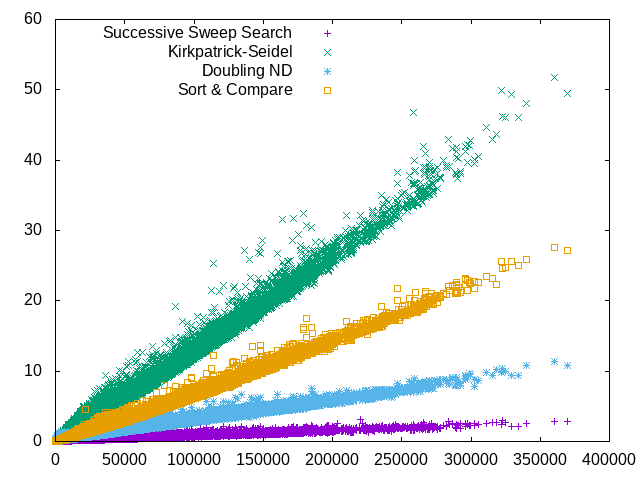}\hfill
    \includegraphics[width=0.49\textwidth]{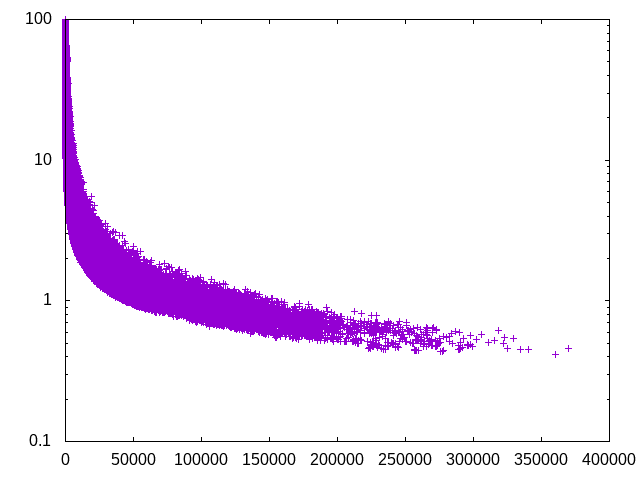}
    \caption{Detailed results for the ROAD2 instance. Left: Running times in seconds per Pareto sum computation in dependency of the size of the Minkowski sum. Right: Pareto sum size as percentage (logscale) of the size of the Minkowski sum.}
    \label{fig:road2}
\end{figure}
Table \ref{tab:road} shows the characteristics of the three road network instances we considered in our experiments and the outcomes. The number of Pareto sum computations  in the preprocessing phase of the BCH reported in the table excludes trivial inputs where either $A$ or $B$ has size 1. We observe that the time spent on non-trivial Pareto sum computations dominates the overall preprocessing time, especially on larger networks.
There are significant differences in running time between the algorithms we tested, though. On all instances, SSS is the fastest approach. It is roughly an order of magnitude faster than the KS algorithm which fully computes and stores $M$. With KS, we could not compute a BCH data structure within a day on our largest instance with about  4 million nodes. Interestingly, in contrast to the experiments on generated data, DND and SSS outperform SC.
Figure \ref{fig:road2} shows  the running times for all individual Pareto sum computations  as well as the size of the respective results for the ROAD2 instance. We observe that the larger the Minkowski sum $M$, the smaller the relative output size. This explains why SSS consistently outperforms the other approaches, especially on larger inputs. Figure \ref{fig:road3} shows results for the two best algorithms, DND and SSS, on  ROAD3. Here,   $|M|$ was up to $3 \cdot 10^6$ and the percentage of elements in the Pareto sum $C$ even approached 0.1.
 This is very beneficial for the SSS algorithm as the smaller the output size the fewer range minimum oracle calls are needed.

\begin{figure}
    \centering
    \includegraphics[width=0.49\textwidth]{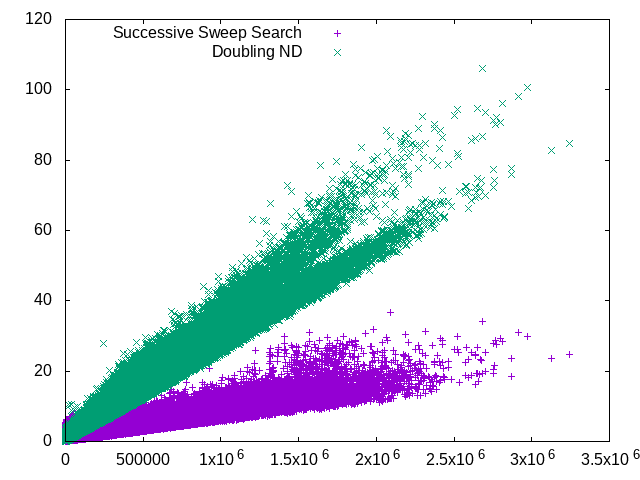}\hfill
    \includegraphics[width=0.49\textwidth]{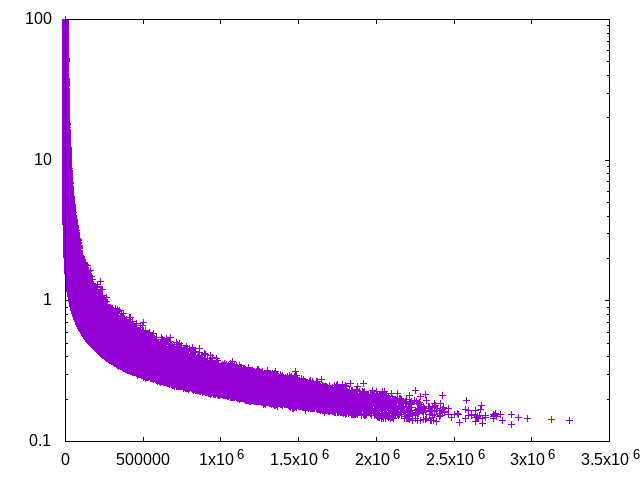}
    \caption{Detailed results for the ROAD3 instance. Left: Running times in seconds per Pareto sum computation in dependency of the size of the Minkowski sum. Right: Pareto sum size as percentage (logscale) of the size of the Minkowski sum.}
    \label{fig:road3}
\end{figure}

Furthermore, we used DND and SSS in query answering to combine   Pareto sets in the bi-directional Pareto-Dijkstra  run whenever the forward and the backward search meet. On the ROAD3 instance, a  speed-up of up to 5  over DND was achieved when using SSS.

\section{Conclusions and Future Work}
We introduced scalable algorithms for Pareto sum computation which avoid the computation of the whole Minkowski sum. Our engineered Sort \& Compare algorithm and our Successive Sweep Search algorithm were shown to perform best  while guaranteeing an output-sensitive space consumption.  Which algorithm is superior depends on the input point distribution as well as the output size. 
Sort \& Compare as well as Successive Sweep may be amenable to parallelization using a divide-and-conquer approach, leading to efficient algorithms with span $\mathcal{O}(n\log n)$. However, one has to be careful to preserve the output sensitivity of Successive Sweep.

On the theoretical side, it would be interesting to obtain (conditional) lower bounds for instances with sublinear output size. For instances with linear output size, we have shown that a subquadratic time algorithm is unlikely to exist. The Sort \& Compare algorithm almost matches the running time lower bound. It might be possible, though, to shave off the log factor.  
Perhaps more interesting is the question whether subquadratic algorithms exist for average-case inputs. In particular, smoothed analysis \cite{SpiTen01} might be interesting where we look at slightly perturbed worst-case inputs.

Another direction for future work is the consideration of higher-dimensional input points. While some algorithms are easily generalizable, novel range minimum oracles need to be designed for the successive algorithms to work. Furthermore, generalizing the lower bound to higher dimensions would be interesting as well.

\backmatter

\bmhead{Acknowledgments}
The conference version of this article appeared in ESA 2023 Track B: \url{https://drops.dagstuhl.de/entities/document/10.4230/LIPIcs.ESA.2023.60}

\bibliography{references}

\end{document}